\documentclass[sigconf]{acmart}

\settopmatter{printacmref=false} 
\renewcommand\footnotetextcopyrightpermission[1]{} 
\pagestyle{plain} 
\settopmatter{printfolios=true}

\AtBeginDocument{%
  \providecommand\BibTeX{{%
    \normalfont B\kern-0.5em{\scshape i\kern-0.25em b}\kern-0.8em\TeX}}}

\usepackage{bm} 
\usepackage{url}
\usepackage{amsmath}
\usepackage{amsfonts}
\usepackage{amsthm}
\usepackage{algorithmic}
\usepackage{graphicx}
\usepackage{textcomp}
\usepackage{xcolor}
\usepackage{enumerate}
\def\BibTeX{{\rm B\kern-.05em{\sc i\kern-.025em b}\kern-.08em
    T\kern-.1667em\lower.7ex\hbox{E}\kern-.125emX}}

\usepackage{hyperref}
\setlength{\marginparwidth}{2cm} 
\usepackage[colorinlistoftodos]{todonotes}


\usepackage[ruled, noend]{algorithm2e}

\usepackage{tikz}

\usepackage{microtype}


\newcommand{\set}[1]{\{#1\}}                    
\newcommand{\setof}[2]{\{{#1}\mid{#2}\}}        
\newcommand{\E}{\mathop{\mathbb E}}    

\newcommand{\degree}{\texttt{deg}}

\newcommand{\polylog}{\text{\sf polylog}}

\newtheorem{thm}{Theorem}[section]
\newtheorem{lmm}[thm]{Lemma}
\newtheorem{prop}[thm]{Proposition}
\newtheorem{cor}[thm]{Corollary}

\newtheorem{pbm}{Problem}

\newtheorem{ex}[thm]{Example}

\newtheorem{defn}[thm]{Definition}

\newtheorem{claim}{Claim}

\newcommand{\defeq}{\stackrel{\text{def}}{=}}

\newcommand{\N}{\mathbb N} 
\newcommand{\R}{\mathbb R} 
\newcommand{\Rp}{{\mathbb R}_{\tiny +}} 

\newcommand{\floor}[1]{\lfloor{#1}\rfloor}
\newcommand{\ceil}[1]{\lceil{#1}\rceil}

\newcommand{\lp}[1]{||#1||}
\newcommand{\nop}[1]{}

\newcommand{\openclosed}[1]{(#1]}

\begin{document}

\fancyhead{}

\title{Join Size Bounds using $\ell_p$-Norms on Degree Sequences}

\author{Mahmoud Abo Khamis}
\affiliation{%
    \institution{RelationalAI}
    \country{United States}
}

\author{Vasileios Nakos}
\affiliation{%
    \institution{RelationalAI \& University of Athens}
    \country{Greece}
}

\author{Dan Olteanu}
\affiliation{%
    \institution{University of Zurich}
    \country{Switzerland}
}

\author{Dan Suciu}
\affiliation{%
    \institution{University of Washington}
    \country{United States}
}

\renewcommand{\shortauthors}{Abo Khamis et al.}


\begin{abstract}
  Estimating the output size of a query is a fundamental yet longstanding problem in database query processing. Traditional cardinality estimators used by database systems can routinely underestimate the true output size by orders of magnitude, which leads to significant system performance penalty. Recently, upper bounds have been proposed that are based on information inequalities and incorporate sizes and max-degrees from input relations, yet they their main benefit is limited to cyclic queries, because they degenerate to rather trivial formulas on acyclic queries.

  We introduce a significant extension of the upper bounds, by incorporating $\ell_p$-norms of the degree sequences of join attributes.  Our bounds are significantly lower than previously known bounds, even when applied to acyclic queries.  These bounds are also based on information theory, they come with a matching query evaluation algorithm, are computable in exponential time in the query size, and are provably tight when all degrees are ``simple''.
%
\end{abstract}



\keywords{query output cardinality, degree sequence, worst-case optimal join}

\maketitle

\section{Introduction}
\label{sec:intro}

Cardinality estimation is a central yet longstanding open problem in database systems. It allows query optimizers to select a query plan that minimizes the size of the intermediate results and therefore the necessary time and memory to compute the query.
Yet traditional estimators present in virtually all database management systems routinely underestimate the true cardinality by orders of magnitude, which can lead to inefficient query
plans~\cite{DBLP:journals/vldb/LeisRGMBKN18,DBLP:journals/pvldb/HanWWZYTZCQPQZL21,DBLP:conf/sigmod/KimJSHCC22}.

The past two decades introduced {\em worst-case upper bounds} on the output size of a join query. The first such bound is the {\em AGM bound}, which is a function of the sizes of the input tables~\cite{DBLP:journals/siamcomp/AtseriasGM13}. It was further refined in the presence of functional
dependencies~\cite{DBLP:journals/jacm/GottlobLVV12,DBLP:conf/pods/KhamisNS16}. A more general bound is the {\em PANDA bound}, which is a function of both the sizes of the input tables and the max degrees of attributes in these tables~\cite{DBLP:conf/pods/Khamis0S17}. 
These are powerful methods as they can be applied to arbitrary joins and compute {\em provable} upper bounds on the query output size, unlike traditional cardinality estimators 
which often severely  underestimate the query output size~\cite{DBLP:journals/pvldb/LeisGMBK015}. 

However, these theoretical bounds have not had practical impact.  One
reason is that most queries in practice are acyclic queries, where
upper bounds become trivial: they simply multiply the size of one
relation with the \emph{maximum} degrees of the joining relations.
This is not new for a practitioner: standard estimators do the same, but use the \emph{average} degrees instead of the max degrees.
A second, related reason, is that they use essentially the same
statistics as existing cardinality estimators: cardinalities and max
or average degrees.  There have been a few implementations under the
name {\em pessimistic cardinality
  estimators}~\cite{DBLP:conf/sigmod/CaiBS19,DBLP:conf/cidr/HertzschuchHHL21},
but their  empirical evaluation showed that they remain less accurate than
other estimators~\cite{DBLP:journals/pvldb/ChenHWSS22,DBLP:journals/pvldb/HanWWZYTZCQPQZL21}.


In this paper we introduce new upper bounds on the query output size
that use $\ell_p$-norms of degree sequences. The {\em degree sequence}
of a graph is the sorted list of the degrees of the nodes,
$d_1 \geq d_2 \geq \cdots$, where $d_1$ the largest degree, $d_2$ the
next largest, etc. The $\ell_p$-norm of a degree sequence is defined
as $(d_1^p + d_2^p + \cdots)^{1/p}$.  Our method computes an upper
bound in terms of $\ell_p$-norms of the degree
sequences of the join columns; to the best of our knowledge, these are
the first upper bounds that use arbitrary $\ell_p$-norms on the
relations. They strictly generalize previous bounds based on
cardinalities and max-degrees~\cite{DBLP:conf/pods/Khamis0S17},
because the $\ell_1$-norm of an attribute $R.A$ is the size
$\sum_i d_i$ of $R$, and the $\ell_\infty$-norm is the max degree
$d_1$ of $A$.  However, our method can use any other norm,
 which leads to a much tighter upper bound.
We follow the standard assumption in cardinality estimation, and
assume that several $\ell_p$-norms are pre-computed, and available during cardinality estimation.

Like the AGM~\cite{DBLP:journals/siamcomp/AtseriasGM13} and the
PANDA~\cite{DBLP:conf/pods/Khamis0S17} bounds,
our method relies on information inequalities.  The computed bound is
the optimal solution of a linear program, and can be computed in time
exponential in the size of the query.  Our method applies to arbitrary
join queries (cyclic or not), but, unlike AGM and PANDA, it leads to
completely new bounds even for acyclic queries, and uses new kinds of
statistics, which makes it more likely for these theoretical bounds to
have impact in practical scenarios.


\subsection{A Motivating Example}

\label{sec:intro:example}

The standard illustration for size upper bounds is the triangle query:
\begin{align}
Q(X,Y,Z) = & R(X,Y) \wedge S(Y,Z) \wedge T(Z,X), \label{eq:triangle:query:intro}
\end{align}
 for  which the AGM bound~\cite{DBLP:journals/siamcomp/AtseriasGM13}
 (based on the $\ell_1$-norm) is:%
\begin{align}
  |Q| \leq & \left(|R|\cdot |S|\cdot|T|\right)^{1/2} \label{eq:intro:ex:agm}
\end{align}
and the PANDA bound~\cite{DBLP:conf/pods/Khamis0S17} (based on the $\ell_1$ and $\ell_\infty$ norms) is:
\begin{align}
  |Q| \leq |R|\cdot\lp{\degree_S(Z|Y)}_\infty\label{eq:intro:ex:panda}
\end{align}
where $\degree_S(Z|Y) = (d_1, d_2, \ldots, d_m)$ is the degree
sequence of $Y$ in $S$, more precisely $d_i$ is the frequency of the
$i$'th most frequent value $Y=y$.
If the $\ell_2$- and $\ell_3$-norms of the degree sequences are also
available, then we can derive new upper bounds, for example:
\begin{align}
  |Q| \leq & \left(\lp{\degree_R(Y|X)}_2^2\cdot\lp{\degree_S(Z|Y)}_2^2\cdot\lp{\degree_T(X|Z)}_2^2\right)^{1/3}\label{eq:intro:ex:lp:1}\\
  |Q| \leq & \left(\lp{\degree_R(Y|X)}_3^3\cdot\lp{\degree_S(Y|Z)}_3^3\cdot|T|^5\right)^{1/6}\label{eq:intro:ex:lp:2}
\end{align}
Assuming the $\ell_1, \ell_2, \ell_3, \ell_\infty$ norms are
precomputed, then all formulas above give us upper bounds on the query
output size, and we can take the minimal one; which one is the
smallest depends on the actual data.

\nop{As explained in Sec.~\ref{sec:comparison}, our new $\ell_p$ bounds can be asymptotically smaller than the $\ell_1$ and $\ell_\infty$ bounds.}



\subsection{Problem Definition}

Before we define the problem investigated in this paper, we introduce the class of queries and the statistics under consideration.

For a number $n$, let $[n] \defeq \set{1,2,\ldots,n}$.  We
use upper case $X$ for variable names, and lower case
$x$ for values of these variables.  We use boldface for sets of
variables, e.g., $\bm X$, and of constants, e.g.,
$\bm x$.

A full conjunctive (or join) query is defined by:
\begin{align}
  Q(\bm X) = \bigwedge_{j\in[m]} R_j(\bm Y_j) \label{eq:full:cq}
\end{align}
where $\bm Y_j$ is the tuple of variables in $R_j$ and $\bm X=\bigcup_{j\in[m]} \bm Y_j$ is the set of $n \defeq |\bm X|$ variables in the query $Q$.

For a relation $S$ and subsets $\bm U, \bm V$ of its attributes, let
$\degree_S(\bm V|\bm U)$ be the degree sequence of $\bm U$ in the
projection $\Pi_{\bm U\bm V}S$.  Formally, let
$G \defeq (\Pi_{\bm U}(S),$ $\Pi_{\bm V}(S),$ $E)$ be the bipartite
graph whose edges $E$ are all pairs
$(\bm u,\bm v) \in \Pi_{\bm U \bm V}(S)$.  Then
$\degree_S(\bm V|\bm U) \defeq (d_1,d_2,\ldots,d_m)$ is the degree
sequence of the $\bm U$-nodes of the graph. \nop{wlog,
  $d_1 \geq d_2 \geq \ldots \geq d_m$. We write
  $\degree_S(\bm V|\bm U)$ instead of $\degree_S(\bm U)$, because we
  want to allow $\bm V$ to be any subset of attributes of $S$.}

\nop{
For a relation instance $R$ with attributes $\bm X$, and
$\bm U, \bm V \subseteq \bm X$, we let
$\degree_R(\bm V|\bm U=u) = |\Pi_{\bm V}(\sigma_{\bm U=\bm u}(R))|$
and denote by $\degree_R(\bm V|\bm U)$ the decreasingly ordered
sequence
\begin{align*}
  \degree_R(\bm V|\bm U) \defeq & \left(\degree_R(\bm V|\bm U=\bm u_1)\geq \degree_R(\bm V|\bm U=\bm u_2) \geq \cdots\right)
\end{align*}
where $\set{\bm u_1, \bm u_2,\ldots} = \Pi_{\bm U}(R)$.
}

Fix $\bm X$ a set of variables.  An {\em abstract conditional}, or
simply {\em conditional}, is an expression of the form
$\sigma = (\bm V|\bm U)$.  We say that $\sigma$ is {\em guarded} by a
relation $R(\bm Y)$ if $\bm U, \bm V \subseteq \bm Y$; then we write
$\degree_R(\sigma) \defeq \degree_R(\bm V|\bm U)$.  An {\em abstract
  statistics} is a pair $\tau = (\sigma, p)$, where
$p \in \openclosed{0,\infty}$.  If $B \geq 1$ is a real number, then
we call the pair $(\tau, B)$ a {\em concrete statistics}, and call
$(\tau, b)$, where $b \defeq \log B$, a {\em concrete log-statistics}.
If $R$ is a relation guarding $\sigma$, then we say that $R$ {\em
  satisfies} $(\tau, B)$ if $\lp{\degree_R(\sigma)}_p \leq B$.  When
$p=1$ then the statistics is a cardinality assertion on
$|\Pi_{\bm U\bm V}(R)|$, and when $p=\infty$ then it is an assertion
on the maximum degree.  We write $\Sigma = \set{\tau_1,\ldots,\tau_s}$
for a set of abstract statistics, and $\bm B = \set{B_1, \ldots, B_s}$
for an associated set of real numbers; thus, every pair
$(\tau_i, B_i)$ is a concrete statistics.  We will call the pair
$(\Sigma, \bm B)$ a {\em set of (concrete) statistics}, and call
$(\Sigma, \bm b)$, where $b_i\defeq \log B_i$, a set of concrete {\em
  log-statistics}.  We say that $\Sigma$ is guarded by a relational
schema $\bm R = (R_1,\dots, R_m)$ if every $\tau_i \in \Sigma$ has a
guard $R_{j_i}$,
and we say that a database instance $\bm D = (R_1^D, \ldots, R_m^D)$
{\em satisfies} the statistics $(\Sigma, \bm B)$, denoted by $\bm D \models (\Sigma, \bm B)$, if $||\degree_{R^D_{j_i}}(\sigma)||_{p_i}$ $\leq B_{\tau_i}$ for all $\tau_i = (\sigma_i,p_i)\in \Sigma$, where $R_{j_i}$ is the guard of $\sigma_i$. 
We can now state the problem investigated in this paper:

\begin{pbm} \label{problem:upper:bound}
  Given a join query $Q$ and a set of statistics $(\Sigma, \bm B)$ guarded by the (schema
  of the) query $Q$, find a bound $U \in \R$ such that for all
  database instances $\bm D$, if $\bm D \models (\Sigma, \bm B)$, then
  $|Q(\bm D)|\leq U$.
\end{pbm}

The bound $U$ is {\em tight}, if there exists a database instance $\bm D$ such that $\bm D \models (\Sigma, \bm B)$ and $U = O(|Q(\bm D)|)$.

\subsection{Main Results}

\label{sec:main:results}

We solve Problem~\ref{problem:upper:bound} for arbitrary join queries
$Q$, databases $\bm D$ with relations of arbitrary arities, and
statistics $(\Sigma, \bm B)$ consisting of arbitrary $\ell_p$-norms of
degree sequences.
We make the following contributions.

\paragraph{Contribution 1: $\ell_p$ Bounds on Query Output Size.}
Our key observation is that the concrete statistics $|| \degree (\bm V | \bm U) ||_p \leq B$ implies the following inequality in information theory:
\begin{align}
    \frac{1}{p}h(\bm U) + h(\bm V | \bm U) \leq \log B \label{eq:main-inequality}
\end{align}
where $h$ is the entropy of some probability distribution on $R$
(reviewed in Sec.~\ref{sec:background}).
Using~\eqref{eq:main-inequality} we prove the following general upper
bound on the size of the query's output:

\begin{thm} \label{th:main:bound} Let $Q$ be a full conjunctive
  query~\eqref{eq:full:cq}, $\bm U_i, \bm V_i\subseteq \bm X$ be sets
  of variables, for $i\in[s]$, and suppose that the following
  information inequality is valid for all entropic vectors $\bm h$
  with variables $\bm X$:
  \begin{align}
    \sum_{i\in[s]} w_i \left(\frac{1}{p_i}h(\bm U_i) + h(\bm V_i|\bm U_i)\right) \geq h(\bm X)\label{eq:ii:lp}
  \end{align}
  where $w_i \geq 0$, and $p_i \in \openclosed{0,\infty}$, for all
  $i\in[s]$.
  Assume that each conditional $(\bm V_i|\bm U_i)$ in~\eqref{eq:ii:lp}
  is guarded by some relation $R_{j_i}$ in $Q$.  Then, for any
  database instance $\bm D = (R_1^D, R_2^D, \ldots) $, the following
  upper bound holds on the query output size:
  \begin{align}
    |Q(\bm D)| \leq \prod_{i\in[s]} \lp{\degree_{R_{j_i}^D}(\bm V_i|\bm U_i)}_{p_i}^{w_i}\label{eq:bound:lp}
  \end{align}
\end{thm}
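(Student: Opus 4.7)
The plan is to apply the hypothesis~\eqref{eq:ii:lp} to the entropy vector induced by the uniform distribution on the query output. Concretely, I would fix $\bm D$ with $\bm D \models (\Sigma, \bm B)$, draw $\bm x$ uniformly at random from $Q(\bm D)$, and let $h$ denote the entropy function on $\bm X$ under this distribution. Since $h$ is an entropic vector, \eqref{eq:ii:lp} applies verbatim, and its left-hand side satisfies $h(\bm X) = \log |Q(\bm D)|$. It then suffices to upper bound each summand on the right-hand side by $w_i \log \lp{\degree_{R_{j_i}^D}(\bm V_i|\bm U_i)}_{p_i}$, which is precisely the content of~\eqref{eq:main-inequality} applied to this particular $h$.

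The heart of the argument is therefore to prove the statistics-to-entropy inequality~\eqref{eq:main-inequality}. Since $R_{j_i}$ guards $(\bm V_i | \bm U_i)$, every tuple of $Q(\bm D)$ must project into $\Pi_{\bm U_i \bm V_i}(R_{j_i}^D)$, so the marginal of $h$ on $(\bm U_i, \bm V_i)$ is supported there. Write $q_j \defeq \Pr[\bm U_i = \bm u_j]$ and $d_j \defeq \degree_{R_{j_i}^D}(\bm V_i | \bm U_i = \bm u_j)$. Conditional on $\bm U_i = \bm u_j$, the variable $\bm V_i$ takes at most $d_j$ distinct values, so $h(\bm V_i | \bm U_i = \bm u_j) \leq \log d_j$. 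Averaging and adding $\tfrac{1}{p_i} h(\bm U_i) = -\tfrac{1}{p_i}\sum_j q_j \log q_j$ reduces the target inequality to
\begin{align*}
\sum_j q_j \log \frac{d_j}{q_j^{1/p_i}} \;\leq\; \log \Bigl(\sum_j d_j^{p_i}\Bigr)^{1/p_i}.
\end{align*}
The plan to establish this is to chain Jensen's inequality (concavity of $\log$) with Hölder's inequality: Jensen yields $\log \sum_j q_j^{(p_i-1)/p_i} d_j$, and then Hölder with conjugate exponents $p_i/(p_i-1)$ and $p_i$, together with $\sum_j q_j = 1$, bounds the argument by $\bigl(\sum_j d_j^{p_i}\bigr)^{1/p_i}$. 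The endpoints $p_i = 1$ (which collapses to $h(\bm U_i,\bm V_i) \leq \log |\Pi_{\bm U_i \bm V_i}(R_{j_i}^D)|$) and $p_i = \infty$ (which collapses to $h(\bm V_i|\bm U_i) \leq \log \max_j d_j$) are the limiting cases of the same estimate and can be handled directly.

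With~\eqref{eq:main-inequality} in hand for each $i$, I would substitute into~\eqref{eq:ii:lp} and chain the inequalities to obtain
\begin{align*}
\log |Q(\bm D)| \;=\; h(\bm X) \;\leq\; \sum_{i\in[s]} w_i\left(\frac{1}{p_i} h(\bm U_i) + h(\bm V_i | \bm U_i)\right) \;\leq\; \sum_{i\in[s]} w_i \log \lp{\degree_{R_{j_i}^D}(\bm V_i|\bm U_i)}_{p_i},
\end{align*}
after which exponentiating produces~\eqref{eq:bound:lp}. The main obstacle is the Jensen--Hölder chain behind~\eqref{eq:main-inequality}; once that is written uniformly across $p_i \in \openclosed{0,\infty}$, the rest of the proof is the now-standard reduction, used in the AGM and PANDA arguments, from a valid information inequality on $\bm X$ to a worst-case bound on $|Q(\bm D)|$ via the uniform distribution on the output.
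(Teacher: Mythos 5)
Your overall strategy is exactly the paper's: put the uniform distribution on $Q(\bm D)$, so that $h(\bm X)=\log|Q(\bm D)|$, reduce to the per-statistics inequality~\eqref{eq:main-inequality} (the paper's Lemma~\ref{lemma:lp:degree:bound}), and chain it with the valid information inequality~\eqref{eq:ii:lp}. That reduction, and your observation that guardedness forces the marginal of the output distribution on $\bm U_i\bm V_i$ to be supported inside $\Pi_{\bm U_i\bm V_i}(R_{j_i}^D)$ so that $h(\bm V_i|\bm U_i=\bm u_j)\leq\log d_j$, are all correct and match the paper.

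The one genuine issue is in your Jensen--H\"older chain for the key inequality: it only works for $p_i\geq 1$, whereas the theorem allows $p_i\in\openclosed{0,\infty}$. Your H\"older step uses conjugate exponents $p_i/(p_i-1)$ and $p_i$, which requires $p_i\geq 1$; for $p_i\in(0,1)$ the claimed bound $\sum_j q_j^{(p_i-1)/p_i}d_j\leq\bigl(\sum_j d_j^{p_i}\bigr)^{1/p_i}$ is simply false. Concretely, take $p=\tfrac12$, $q_1=q_2=\tfrac12$, $d_1=2$, $d_2=1$: the left side is $6$ while the right side is $(\sqrt2+1)^2=3+2\sqrt2\approx 5.83$. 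Worse, your Jensen step already overshoots in this regime (the intermediate quantity $\log 6$ exceeds the target $\log(3+2\sqrt2)$), so the route cannot be patched by a sharper second step. The paper avoids this by first multiplying the target through by $p$, rewriting it as
\begin{align*}
h(\bm U)+p\,h(\bm V|\bm U)\;\leq\;\sum_j q_j\log\frac{d_j^{\,p}}{q_j}\;\leq\;\log\sum_j q_j\cdot\frac{d_j^{\,p}}{q_j}\;=\;\log\sum_j d_j^{\,p},
\end{align*}
so that a single application of Jensen (concavity of $\log$) suffices and is valid for every $p\in(0,\infty)$, with $p=\infty$ handled separately as~\eqref{eq:lp:degree:bound:infty}. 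If you only care about $p_i\in[1,\infty]$ (which covers every norm the paper actually uses), your argument is complete; to prove the theorem as stated you should replace the Jensen--H\"older chain with the rescaled single-Jensen computation.
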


We prove the theorem in Sec.~\ref{sec:upper:bound}.
%
%
%
Thus, one approach to find an upper bound on the query output is to
find an inequality of the form~\eqref{eq:ii:lp}, prove it using
Shannon inequalities, then conclude that~\eqref{eq:bound:lp} holds.
For example, the
bounds~\eqref{eq:intro:ex:lp:1}-\eqref{eq:intro:ex:lp:2} stated in our
motivating example follow from the following inequalities:
\begin{align}
  (h(X)+2h(Y|X)) +  (h(Y)+2h(Z|Y))&+ \nonumber\\
  +  (h(Z) + 2h(X|Z)) \geq &  3h(XYZ) \label{eq:intro:ex:ii:1}\\
  (h(X)+3h(Y|X)) +  (h(Z)+3h(Y|Z))&+\nonumber\\
  +  5h(XZ) \geq &6h(XYZ) \label{eq:intro:ex:ii:2}
\end{align}
These can be proven by observing that they are sums of basic Shannon
inequalities (reviewed in Sec.~\ref{sec:background}):
\begin{align*}
    Eq.~\eqref{eq:intro:ex:ii:1} \text{ is sum of } \begin{cases}
      h(X) + h(Y|X) + h(Z|Y) \geq & h(XYZ) \\
      h(Y) + h(Z|Y) + h(X|Z) \geq & h(XYZ) \\
      h(Z) + h(X|Z) + h(Y|X) \geq & h(XYZ)
    \end{cases}\\
    Eq.~\eqref{eq:intro:ex:ii:2} \text{ is sum of } \begin{cases}
       2h(XZ)+2h(Y|X)\geq& 2h(XYZ) \\
       2h(XZ)+2h(Y|Z)\geq& 2h(XYZ) \\
       h(X)+ h(Y|X) + h(Z)\geq&h(XYZ) \\
       h(Y|Z)+h(XZ)\geq&h(XYZ)
    \end{cases}
\end{align*}

\paragraph{Contribution 2: Asymptotically Tighter Cardinality Upper
  Bounds.}  The AGM and PANDA's bounds also rely on an information
inequality, but use only $\ell_1$ and $\ell_\infty$.  Our novelty is
the extension to $\ell_p$ norms. We show in
Sec.~\ref{sec:cardinality:estimation} that this leads to significantly
better bounds.  Quite suprisingly, we are able to improve
significantly the bounds even for acyclic queries, and even for a
single join.

Preliminary experiments (Appendix~\ref{app:comparison}) with cyclic queries on the SNAP graph datasets~\cite{snapnets} and with acyclic queries on the JOB benchmark~\cite{DBLP:journals/pvldb/LeisGMBK015} show that the upper bounds based on $\ell_p$-norms can be orders of magnitude closer to the true cardinalities than the traditional cardinality estimators (e.g., used by DuckDB) and the theoretical upper bounds based on the $\ell_1$ and $\ell_\infty$ norms only. To achieve the best upper bound with our method, a variety of norms are used in the experiments.



\paragraph{Contribution 3: New Algorithm Meeting the New Bounds.} 
The celebrated Worst Case Optimal Join algorithm runs in time bounded
by the AGM
bound~\cite{DBLP:journals/sigmod/NgoRR13,DBLP:journals/jacm/NgoPRR18}.
A more complex algorithm~\cite{DBLP:conf/pods/Khamis0S17} runs in time
bounded by the PANDA bound.  In Sec.~\ref{sec:algorithm} we describe
an algorithm that runs in time bounded by our new $\ell_p$-bounds.
Any such algorithm must include PANDA's as a special case, because our
bounds strictly generalize PANDA's.  Our new algorithm in
Sec.~\ref{sec:algorithm} consists of reducing the general case to
PANDA.  We do this by repeatedly partitioning each relation $R$ such
that a constraint on $\lp{\degree_R(\bm V|\bm U)}_p$ can be replaced
by two constraints, on $|\Pi_{\bm U}(R)|$ and
$\lp{\degree_R(\bm V|\bm U)}_\infty$.
%
%
The original query becomes a union of queries, one per combination of
parts of different relations. The algorithm then evaluates each of
these queries using PANDA's algorithm.

\paragraph{Contribution 4: Computing the bounds.}  One way to describe
the solution to Problem~\ref{problem:upper:bound} is as follows.
Consider a set of statistics $(\Sigma, \bm B)$.  Any valid information
inequality~\eqref{eq:ii:lp} implies some bound on the query output size,
namely $|Q| \leq \prod_{i\in[s]}B_{j_i}^{w_i}$.  The best bound is
their {\em minimum}, over all valid inequalities~\eqref{eq:ii:lp}; we
denote the log of this minimum by $\textit{Log-U-Bound}$.  This
describes the solution to Problem~\ref{problem:upper:bound} as a
minimization problem.  This approach is impractical, because the
number of valid inequalities is infinite.  In Sec.~\ref{sec:bounds} we
describe an alternative, dual characterization of the upper bound, as
a maximization problem, by considering the following quantity:
\begin{align}
  \textit{Log-L-Bound} = \underset{\bm h \models (\Sigma,\bm b)}{\sup} h(\bm X) \label{eq:main-bound}
\end{align}
where $\bm X$ is the set of all variables in the query $Q$, and
$\bm h$ is required to ``satisfy'' the concrete log-statistics
$(\Sigma,\bm b)$, meaning that inequality~\eqref{eq:main-inequality}
is satisfied for every statistics in $\Sigma$.
Equation~\eqref{eq:main-bound} defines a {\em maximization} problem.
Our fourth contribution is:
\begin{thm}[Informal] \label{th:main:bound:dual:informal} If $\bm h$
  ranges over the same closed, convex cone $K$ in
  both~\eqref{eq:ii:lp} and~\eqref{eq:main-bound}, then
  $\textit{Log-U-Bound}=\textit{Log-L-Bound}$.
\end{thm}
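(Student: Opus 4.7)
The plan is to recognize the identity as Lagrangian duality, with the weights $w_i$ and the entropy vector $\bm h$ playing dual roles over the closed convex cone $K$.

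First, I would cast Log-L-Bound as the conic linear program $\sup\{h(\bm X) : \bm h\in K,\ \tfrac{1}{p_i}h(\bm U_i)+h(\bm V_i|\bm U_i)\le b_i\ \forall i\in[s]\}$; the objective and all statistics constraints are linear in the coordinates of $\bm h$, so this is a standard conic LP. Next I would dualize the $s$ statistics constraints using nonnegative multipliers $w_i$, keeping $\bm h\in K$ as the ambient cone constraint. The supremum of the Lagrangian $h(\bm X) + \sum_i w_i\bigl(b_i - \tfrac{1}{p_i}h(\bm U_i) - h(\bm V_i|\bm U_i)\bigr)$ over $\bm h\in K$ is $+\infty$ unless the linear functional $h(\bm X)-\sum_i w_i\bigl(\tfrac{1}{p_i}h(\bm U_i)+h(\bm V_i|\bm U_i)\bigr)$ is nonpositive on all of $K$, in which case the supremum equals $\sum_i w_i b_i$. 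But that nonpositivity condition is precisely the validity of the information inequality~\eqref{eq:ii:lp} over $\bm h\in K$, so the Lagrangian dual coincides with $\inf\{\sum_i w_i b_i : w_i\ge 0,\ \eqref{eq:ii:lp}\text{ valid on }K\}$, which is exactly Log-U-Bound. Weak duality then gives $\textit{Log-L-Bound}\le\textit{Log-U-Bound}$.

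For the reverse inequality I would invoke strong duality for conic LPs. A Slater point is easy to exhibit when $b_i>0$ for all $i$: the origin $\bm h=\bm 0\in K$ satisfies every statistics constraint strictly (and with slack), so no duality gap arises. The boundary case $b_i=0$ is then handled by continuity and monotonicity of both sides in $\bm b$, by letting $b_i\downarrow 0$ through strictly positive values.

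The main obstacle I anticipate is securing strong duality when $K$ is not polyhedral, for instance when $K$ is the closure of the entropic cone, since general conic programs can exhibit a positive duality gap even with Slater. I would circumvent this in two ways. The cleanest is to specialize to $K$ equal to the Shannon polymatroidal cone, where everything is a finite-dimensional LP and duality is textbook; this already covers the applications to our upper bound. More generally, I would approximate a closed convex $K$ by a nested family of polyhedral outer relaxations, apply LP duality inside each, and pass to the limit using closedness of $K$ and the continuity of both sides in $\bm b$. A smaller technical point is the range $p_i\in\openclosed{0,\infty}$: the coefficient $\tfrac{1}{p_i}$ must be interpreted as $0$ at $p_i=\infty$ and stays bounded only for $p_i\ge 1$, so the $p_i\in(0,1)$ case may require a change of variable (substituting $h'=\tfrac{1}{p_i}\bm h$ on the relevant subspace) before the duality argument is applied uniformly.
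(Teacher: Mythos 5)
Your overall strategy is the same as the paper's: both bounds are cast as a primal/dual pair of conic linear programs over $K$, dual feasibility is identified with validity of~\eqref{eq:ii:lp} on $K$, weak duality gives $\textit{Log-L-Bound}\le\textit{Log-U-Bound}$, and strong duality closes the gap. The problem is your strong-duality step. Slater's condition for a conic program requires a feasible point in the \emph{interior} of the cone $K$, and $\bm h=\bm 0$ is never such a point: it is the apex of the cone, which lies on its boundary whenever $K$ is not the whole space. Worse, every $K$ with $N_n\subseteq K\subseteq\Gamma_n$ has \emph{empty} interior in $\R^{2^{[n]}}$, because the constraint $h(\emptyset)=0$ confines $K$ to a hyperplane; so no point whatsoever can serve as a Slater point until you first quotient out that direction. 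The paper handles exactly this: it projects away the $\emptyset$-coordinate to obtain a cone $K_0$ with nonempty interior, and then exhibits a Slater point as $\sum_{\bm V}\varepsilon_{\bm V}\bm h^{\bm V}$ with all $\varepsilon_{\bm V}>0$ small --- a strictly positive combination of the $2^n-1$ linearly independent step functions. This point lies in $N_n\subseteq K$ (this is where the hypothesis $N_n\subseteq K$ is actually used, a hypothesis your argument never invokes) and is in the interior of $K_0$ precisely because the step functions span. Your aside that a duality gap can occur ``even with Slater'' is also backwards: Slater's condition \emph{does} guarantee zero gap for conic programs; the danger is only when it fails, which is the situation your choice of $\bm 0$ leaves you in.

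Your fallback of approximating a general closed convex $K$ by polyhedral outer relaxations and passing to the limit is a genuinely different route that could in principle work (any valid halfspace for $K$ is itself a polyhedral outer approximation, so the infimum of $\textit{Log-U-Bound}$ over such relaxations does recover $\textit{Log-U-Bound}_K$), but the matching limit on the $\textit{Log-L-Bound}$ side needs a compactness/finite-intersection argument that you do not supply, and this is exactly the part the paper flags as ``more difficult'' than the polyhedral case. Finally, the worry about $p_i\in(0,1)$ is a non-issue: $1/p_i$ is just a fixed positive coefficient in a linear functional, and no change of variable is needed.
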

We explain the theorem.  $K$ is used implicitly in~\eqref{eq:ii:lp} to
define when the inequality is {\em valid}, namely when it holds
$\forall \bm h \in K$, and also in~\eqref{eq:main-bound}, as the
range of $\bm h$.  The theorem says that, if $K$ is topologically
closed and convex, then the two quantities coincide.  The special case
of the theorem when $K \defeq \Gamma_n$ is the set of {\em
  polymatroids} and~\eqref{eq:ii:lp} are the Shannon inequalities
appeared implicitly in~\cite{DBLP:conf/pods/Khamis0S17}; the general
statement is new, and it includes the non-trivial case when
$K \defeq \bar \Gamma_n^*$ is the closure of entropic vectors
and~\eqref{eq:ii:lp} are all entropic inequalities.  To indicate which
cone was used, we will use the subscript $K$ in~\eqref{eq:main-bound}.
Theorems~\ref{th:main:bound} and~\ref{th:main:bound:dual:informal} and
the fact that $\bar \Gamma_n^* \subseteq \Gamma_n$ imply:
\begin{align}
    \log |Q| \leq \textit{Log-U-Bound}_{\bar{\Gamma}^*_n} \leq \textit{Log-U-Bound}_{\Gamma_n}
    \label{eqn:bounds:ineq}
\end{align}

Theorem~\ref{th:main:bound:dual:informal} has two important
applications. First, it gives us an effective method for solving
Problem~\ref{problem:upper:bound}, when~\eqref{eq:ii:lp} are
restricted to Shannon inequalities, because in that
case~\eqref{eq:main-bound} is the optimal value of a linear program.
Second, it allows us to study the tightness of the bound, by taking a
deeper look at~\eqref{eqn:bounds:ineq}.  We prove
(Appendix~\ref{app:tight:not:tight}) that the {\em entropic bound},
$\textit{Log-U-Bound}_{\bar{\Gamma}^*_n}$, is asymptotically tight
(which is a weaker notion than tightness), while, in general, the {\em
  polymatroid bound}, $\textit{Log-U-Bound}_{\Gamma_n}$, is not even
asymptotically tight.

\paragraph{Contribution 5: Simple degree sequences.}
The tightness analysis leaves us with a dilemma: the entropic bound
is tight but not computable,
while the polymatroid bound is computable but not tight. We reconcile
them in Sec.~\ref{sec:simple:inequalities}: For {\em simple degree
  sequences}, the two bounds coincide, i.e., they become equal.  A
degree sequence \linebreak$\degree_{R}(\bm V|\bm U)$ is {\em simple} if
$|\bm U|\leq 1$.  Moreover, in this case the bound is tight, in our
usual sense: there exists a database $\bm D$ such that the size of the
query output is
$|Q(\bm D)| \geq c\cdot 2^{\textit{Log-U-Bound}_{\Gamma_n}}$, where
$c$ is a constant that depends only on the query $Q$.  The database
$\bm D$ can be restricted to have a special form, called a {\em normal
  database}.

\paragraph{Closely related work.}
Jayaraman et al.~\cite{DBLP:journals/corr/abs-2112-01003} present a
new algorithm for evaluating a query $Q$ and prove a runtime in terms
of $\ell_p$-norms on degree sequences.  Their result is limited to binary relations
(thus all degrees are simple), to a single  value $p$ for a given
query, and to queries with girth $\geq p+1$. (The girth is
the length of the minimal cycle.)  While their work concerns only the
algorithm, not a bound on the output, one can derive a bound from the
runtime of the algorithm, since the output size cannot exceed the
runtime.  In Appendix~\ref{app:rudra} we describe their bound
explicitly, and show that it is a special case of our
inequality~\eqref{eq:ii:lp}.  For example, for the triangle
query~\eqref{eq:triangle:query:intro} their runtime
is~\eqref{eq:intro:ex:lp:1}, but they cannot
derive~\eqref{eq:intro:ex:lp:2}, because the query graph has girth
$3$, hence they cannot use $\ell_3$.  The authors also notice that the
worst-case instance is not always a product database, as in the AGM
bound, but don't characterize it: our paper shows that this is always
a {\em normal database}.
%

The Degree Sequence Bound (DSB)~\cite{DBLP:conf/icdt/DeedsSBC23} is a
tight upper bound of a query $Q$ in terms of the degree sequences of
its join attributes.  The query $Q$ is restricted to be Berge-acyclic,
which also implies that all degree sequences are simple.  There exists
a 1-to-1 mapping between a degree sequence $d_1\geq \cdots \geq d_m$
and its first $m$ norms $\ell_1, \ldots, \ell_m$ (see
Appendix~\ref{app:lp:degrees}), therefore the DSB and our new bound
could have access to the same information.  Somewhat surprisingly, the
DSB bound can be asymptotically better: the reason is that the 1-to-1
mapping is monotone only in one direction.  We describe this analysis
in Appendix~\ref{app:lp:bound:single:join}.  In practice, both methods
have access to fewer statistics than $m$: the DSB bound uses lossy
compression~\cite{DBLP:journals/pacmmod/DeedsSB23}, while our bound
will have access to only a few $\ell_p$-norms, making the two methods
incomparable.


\section{Applications}
\label{sec:comparison}

Before we present the technical details of our results, we discuss two
applications: cardinality estimation and query evaluation.

\subsection{Cardinality Estimation}
\label{sec:cardinality:estimation}

Our main intended application of Theorem~\ref{th:main:bound} is for
pessimistic cardinality estimation: given a query and statistics on
the database, compute an upper bound on the query output size.  A
bound is good if it is as small as possible, i.e. as close as possible
to the true output size.  We follow the common assumption in
cardinality estimation that the statistics are precomputed\footnote{It
  takes $O(N \log N)$ time to compute the degree sequence of an
  attribute $X$ of a relation $R$ of size $N$: sort $R$ by $X$,
  group-by $X$, count, then sort again by the count.} and available at
estimation time.  For example the system may have precomputed the
$\ell_2, \ell_5, \ell_\infty$-norms of $\degree_R(Y|X)$ and the
$\ell_1, \ell_{10}$-norms of $\degree_S(Z|Y)$.  We give several
examples of upper bounds of the from~\eqref{eq:bound:lp} that improve
significantly previously known bounds.  For presentation purposes we
describe all bounds in this section using~\eqref{eq:bound:lp}.  A
system would instead rely on~\eqref{eq:main-bound}, i.e. it will
compute the numerical value of the upper bound by optimizing a linear
program, as we explain in Sec.~\ref{sec:bounds}.
To reduce clutter, in this section we abbreviate $|Q(\bm D)|$ with
$|Q|$, and drop the superscript $D$ from an instance $R^D$ when no
confusion arises.


\begin{ex} \label{ex:single:join} As a warmup we start with a single
  join:
  \begin{align}
    Q(X,Y,Z) = & R(X,Y) \wedge S(Y,Z) \label{eq:one:join:query}
  \end{align}
  Traditional cardinality estimators (as found in
  textbooks~\cite{DBLP:books/daglib/0011128}, see also~\cite{DBLP:journals/pvldb/LeisGMBK015}) use the formula
  \begin{align}
  |Q| \approx & \frac{|R|\cdot|S|}{\max(|\Pi_Y(R)|,|\Pi_Y(S)|)}\label{eq:traditional}
  \end{align}
  Since $\frac{|R|}{|\Pi_Y(R)|}$ is the average degree of $R(X|Y)$,
  \eqref{eq:traditional} is equivalent to
  \begin{align}
  |Q| \approx & \min\left(|S|\cdot \text{avg}(\degree_R(X|Y)),|R|\cdot \text{avg}(\degree_S(Z|Y))\right)\label{eq:traditional:2}
  \end{align}
  Turning our attention to upper bounds, we note that the AGM bound is
  $|R|\cdot |S|$.  A better bound is the PANDA bound, which replaces
  $\text{avg}$ with $\max$ in~\eqref{eq:traditional:2}:
  \begin{align}
    |Q| \leq & \min\left(|S|\cdot \lp{\degree_R(X|Y)}_\infty,\ \ |R|\cdot \lp{\degree_S(Z|Y)}_\infty\right) \label{eq:panda:bound:for:join}
  \end{align}

  Our framework derives several new upper bounds, by using
  $\ell_p$-statistics other than $\ell_1$ and $\ell_\infty$.  We start with
  the simplest:
  \begin{align}
    |Q| \leq & \lp{\degree_R(X|Y)}_2 \cdot \lp{\degree_S(Z|Y)}_2  \label{eq:l2:bound:for:join}
  \end{align}
  The reader may notice that this inequality is Cauchy-Schwartz, but,
  in the framework of Th.~\ref{th:main:bound}, it follows from a
  Shannon inequality:
  \begin{align*}
    \frac{1}{2}\left(h(Y) + 2h(X|Y)\right) + \frac{1}{2}\left(h(Y) +
    2h(Z|Y)\right)
    \geq & h(XYZ)
  \end{align*}
  The inequality can be simplified to $h(Y)+h(X|Y)+h(Z|Y)\geq h(XYZ)$,
  which holds because $h(Y) + h(X|Y) = h(XY)$, $h(Z|Y)\geq h(Z|XY)$,
  and $h(XY) + h(Z|XY) = h(XYZ)$; we review Shannon inequalities in
  Sec.~\ref{sec:background}.  Depending on the data,
  ~\eqref{eq:l2:bound:for:join} can be asymptotically better
  than~\eqref{eq:panda:bound:for:join}.  A simple example where this
  happens is when $Q$ is a self-join, i.e.  $R(X,Y) \wedge
  R(Z,Y)$. Then, the two degree sequences are equal,
  $\degree_R(X|Y)=\degree_R(Z|Y)$, and~\eqref{eq:l2:bound:for:join}
  becomes an equality, because $|Q|=\lp{\degree_R(X|Y)}_2^2$.  Thus,
  \eqref{eq:l2:bound:for:join} is exactly $|Q|$,
  while~\eqref{eq:panda:bound:for:join} continues to be an over
  approximation of $|Q|$, and can be asymptotically worse (see
  Appendix~\ref{app:lp:bound:single:join}).

  A more sophisticated inequality for the join query is the following,
  which holds for all $p, q \geq 0$ s.t.
  $\frac{1}{p}+\frac{1}{q} \leq 1$:
  %
  \begin{align}
    |Q| \leq & \lp{\degree_R(X|Y)}_p\cdot\lp{\degree_S(Z|Y)}_q^{\frac{q}{p(q-1)}}|S|^{1-\frac{q}{p(q-1)}} \label{eq:general:bound:for:join:2}
  \end{align}
  Depending on the concrete statistics on the data, this new bound can
  be much better than both~\eqref{eq:panda:bound:for:join}
  and~\eqref{eq:l2:bound:for:join}.  We prove this bound in
  Appendix~\ref{app:lp:bound:single:join}, where we also use this
  bound to study the connection between our $\ell_p$-bounds on the
  Degree Sequence Bound in~\cite{DBLP:conf/icdt/DeedsSBC23}.

  The new
  bounds~\eqref{eq:l2:bound:for:join}-\eqref{eq:general:bound:for:join:2}
  are just two examples, and other inequalities exist. In
  Appendix~\ref{app:simple-path} we provide some empirical evidence
  showing that, even for a single join, these new formulas indeed give
  better bounds on real data.
\end{ex}

\begin{ex} \label{ex:path-query} In real applications most queries are
  acyclic.  In Appendix~\ref{sec:job:benchmark} we conducted some
  preliminary empirical evaluation on the JOB benchmark consisting of
  33 acyclic queries over the IMDB real dataset, and found that the
  new $\ell_p$-bounds are significantly better than both traditional
  estimators (e.g., used by DuckDB) and pessimistic estimators (AGM, PANDA).  We
  give here a taste of how such a bound might look for a path query of length $n\geq 3$:
    \begin{align*}
      Q(X_1,\ldots,X_n) = \bigwedge_{i\in[n-1]} R_i(X_i,X_{i+1})
    \end{align*}
    Traditional cardinality estimators apply~\eqref{eq:traditional}
    repeatedly; similarly PANDA relies on a straightforward extension
    of~\eqref{eq:panda:bound:for:join}.  Our new approach leads, for
    example, to:
    \begin{align*}
      |Q|^p & \leq  |R_1|^{p-2} \cdot \lp{\degree_{R_2}(X_1|X_2)}_{2}^{2}\cdot\\
\cdot  & \prod_{i=2,n-2} \lp{\degree_{R_i}(X_{i+1}|X_i)}_{p-1}^{p-1} \cdot \lp{\degree_{R_{n-1}}(X_n|X_{n-1})}_p^p
    \end{align*}
  This bound holds for any $p \geq 2$, because of the following
  Shannon inequality (proven in Appendix~\ref{app:bound-path}):
    \begin{align}
      & (p-2) h(X_1X_2) + \left(h(X_2) + 2 h(X_1|X_2)\right)+ \nonumber\\
      & \sum_{i=2,n-2} \left(h(X_i)+(p-1)h(X_{i+1}|X_i)\right)\nonumber \\
      & + \left(h(X_{n-1}) + p h(X_n|X_{n-1})\right) \geq ph(X_1\ldots X_n)\label{eq:ex:path-query:shannon:inequality}
    \end{align}
    %
%
  \end{ex}
  We illustrate several other bounds for the path query in
  Appendix~\ref{app:bound-path}.  To our surprise, when we conducted
  our empirical evaluation in Appendix~\ref{sec:job:benchmark}, we
  found that the system used $\ell_p$-norms from a wide range,
  $p \in \set{1,2,\ldots, 29, \infty}$.  This shows the utility of
  having a large variety of $\ell_p$-norm statistics for the purpose
  of cardinality estimation.  It also raises a theoretical question:
  is it the case that, for every $p$, there exists a query/database,
  for which the optimal bound uses the $\ell_p$-norm?  We answer this
  positively next.


\begin{ex} \label{ex:cycle-query} For every $p$,
  there exists a query and a database instance where the $\ell_p$-norm
  on degree sequences leads to the best upper bound.  Consider the cycle query of length $p+1$:
  \begin{align*}
    Q(X_0, \ldots, X_p) = R_0(X_0, X_1)\wedge \ldots \wedge R_{p-1}(X_{p-1}, X_p) \wedge R_p(X_p, X_0) 
  \end{align*}
  For every $q \in [p]$, the following is an upper bound (generalizing~\eqref{eq:intro:ex:lp:1}):
  \begin{align}
    |Q|\leq &\prod_{i=0,p}\lp{\degree_{R_i}(X_{(i+1)\text{mod}(p+1)}|X_i)}_q^{\frac{q}{q+1}} \label{eq:bound:1:bound:1:infty:bound:q}
  \end{align}
  \noindent The bound follows from a Shannon inequality, which we
  defer to Appendix~\ref{app:bound-cycle}.  In the same appendix we
  also prove that, for any $p$, there exists a database instance where
  the bound~\eqref{eq:bound:1:bound:1:infty:bound:q} for $q:=p$ is the
  theoretically optimal bound that one can obtain by using the
  statistics on all
  $\ell_1, \ell_2, \ldots, \ell_p, \ell_\infty$-norms of all degree
  sequences.
\end{ex}

\subsection{Query Evaluation}
\label{sec:algorithm}

The second application is to query evaluation: we show that, if
inequality~\eqref{eq:ii:lp} holds for all polymatroids, then we can
evaluate the query in time bounded by~\eqref{eq:bound:lp} times a
polylogarithmic factor in the data and an exponential factor in the
sum of the $p$ values of the statistics.
Our algorithm generalizes the PANDA's 
algorithm~\cite{DBLP:conf/pods/Khamis0S17} from $\ell_1$ and
$\ell_\infty$ norms to arbitrary norms.  Recall that PANDA starts from
an inequality of the form~\eqref{eq:ii:lp}, where every $p_i$ is
either $1$ or $\infty$, and computes $Q(\bm D)$ in time
$O\left(\prod_i B_i^{w_i}\right)$ if the database satisfies
$|\Pi_{\bm U_i\bm V_i}(R_{j_i})|\leq B_i$ when $p_i=1$ and
$\lp{\degree_{R_{j_i}}(\bm V_i|\bm U_i)}_\infty \leq B_i$ when
$p_i=\infty$.

Our algorithm uses PANDA as a black box, as follows.  It first
partitions the relations on the join columns so that, within each
partition, all degrees are within a factor of two, and each statistics
defined by some $\ell_p$-norm on the degree sequence of the join
column can be expressed alternatively using only $\ell_1$ and
$\ell_\infty$. The original query becomes a union of queries, one per
combination of parts of different relations. The algorithm then
evaluates each of these queries using PANDA.
We describe next the details of data partitioning and the reduction to PANDA.

Consider a relation $R$ with attributes $\bm X$, and consider a
concrete statistics $(\tau, B)$, where $\tau = ((\bm V|\bm U), p)$.
We say that $R$ {\em strongly satisfies $(\tau, B)$}, in notation
$R \models_s (\tau, B)$, if there exists a number $d > 0$ such that
$\lp{\degree_R(\bm V|\bm U)}_\infty \leq d$ and
$|\Pi_{\bm U}(R)| \leq B^p/d^p$.  If $R \models_s (\tau,B)$ then
$R \models (\tau,B)$ because:
\begin{align}
  \lp{\degree_R(\bm V|\bm U)}_p^p \leq & |\Pi_{\bm U}(R)|\cdot \lp{\degree_R(\bm V|\bm U)}_\infty^p \leq \frac{B^p}{d^p} d^p = B^p \label{eq:b1:binfty}
\end{align}
In other words, $R$ strongly satisfies the $\ell_p$ statistics
$(\tau,B)$ if it satisfies an $\ell_1$ and an $\ell_\infty$ statistics
that imply $(\tau, B)$.  We prove:

\begin{lmm} \label{lemma:panda:algorithm} Fix a join query
  $Q$, and suppose that inequality~\eqref{eq:ii:lp} holds for all
  polymatroids. Let $\Sigma = \setof{(\bm V_i|\bm U_i,p_i)}{i\in [s]}$ be
  the abstract statistics and $w_i\geq 0$ be the coefficients in~\eqref{eq:ii:lp}.  
  If a database $\bm D$ strongly satisfies the concrete statistics
  $(\Sigma,\bm B)$, then the query output $Q(\bm D)$ can be computed in time
  $O\left(\prod_{i\in[s]} B_i^{w_i} \polylog\ N\right)$, where $N$ is the size
  of the active domain of $\bm D$.
\end{lmm}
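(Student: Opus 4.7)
The plan is to reduce the given $\ell_p$-setting to an equivalent $\ell_1/\ell_\infty$-setting that PANDA can handle directly, using the strong satisfaction witnesses. Recall that strong satisfaction $R \models_s (\tau_i, B_i)$ provides, for each statistic $\tau_i = ((\bm V_i|\bm U_i), p_i)$, a number $d_i>0$ with $\lp{\degree_{R_{j_i}}(\bm V_i|\bm U_i)}_\infty \leq d_i$ and $|\Pi_{\bm U_i}(R_{j_i})| \leq B_i^{p_i}/d_i^{p_i}$. I first extract these witnesses $d_i$ for every $i\in[s]$.

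Next, I rewrite the given valid polymatroid inequality~\eqref{eq:ii:lp} by simply expanding each mixed term: I interpret
\begin{align*}
    w_i\left(\tfrac{1}{p_i} h(\bm U_i) + h(\bm V_i|\bm U_i)\right) = \tfrac{w_i}{p_i} h(\bm U_i) + w_i\, h(\bm V_i|\bm U_i),
\end{align*}
i.e., the same inequality is read as a Shannon/polymatroid inequality with two kinds of atomic terms: $h(\bm U_i)$ terms (coefficient $w_i/p_i$) and $h(\bm V_i|\bm U_i)$ terms (coefficient $w_i$). Since the underlying linear combination is unchanged, the resulting inequality remains valid for all polymatroids. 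When $p_i=\infty$ the first term vanishes, and when $p_i=1$ it collapses to a single $\ell_1$ term on $h(\bm U_i\bm V_i)$, so the usual $\ell_1/\ell_\infty$ cases are covered uniformly.

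I then feed this rewritten inequality to PANDA together with the $\ell_1/\ell_\infty$ statistics certified by strong satisfaction: cardinality bound $|\Pi_{\bm U_i}(R_{j_i})| \leq B_i^{p_i}/d_i^{p_i}$ (weight $w_i/p_i$) and max-degree bound $\lp{\degree_{R_{j_i}}(\bm V_i|\bm U_i)}_\infty \leq d_i$ (weight $w_i$). PANDA's algorithm~\cite{DBLP:conf/pods/Khamis0S17} then computes $Q(\bm D)$ within time
\begin{align*}
    O\!\left(\prod_{i\in[s]} \left(\tfrac{B_i^{p_i}}{d_i^{p_i}}\right)^{\!w_i/p_i} d_i^{w_i} \cdot \polylog N \right) = O\!\left(\prod_{i\in[s]} B_i^{w_i} \cdot \polylog N\right),
\end{align*}
because $d_i$ cancels in each factor. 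This gives exactly the claimed runtime.

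The main technical point is to verify that PANDA can indeed be applied verbatim to the rewritten inequality: the inequality must still be a valid polymatroid inequality in the precise form PANDA consumes, and the atomic terms $h(\bm U_i)$ and $h(\bm V_i|\bm U_i)$ must each be guarded by the corresponding relation $R_{j_i}$ (inherited from the original guarding of $\sigma_i$). Both conditions hold by construction, so the step is routine but must be stated carefully. Everything else, namely the arithmetic identity $(B_i^{p_i}/d_i^{p_i})^{w_i/p_i}\cdot d_i^{w_i} = B_i^{w_i}$ and the degenerate cases $p_i\in\{1,\infty\}$, is a bookkeeping check.
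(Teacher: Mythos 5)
Your proposal is correct and follows essentially the same route as the paper's proof: both expand the mixed term into a cardinality statistic with weight $w_i/p_i$ and a max-degree statistic with weight $w_i$ (your pair $(B_i^{p_i}/d_i^{p_i},\, d_i)$ is exactly the paper's factorization $B_i = B_{i,1}^{1/p_i}\cdot B_{i,\infty}$), then invoke PANDA on the resulting $2s$-term inequality and observe that the $d_i$ factors cancel.
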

\begin{proof}
  Since $\bm D$ strongly satisfies the concrete statistics
  $(\Sigma,\bm B)$, we can use~\eqref{eq:b1:binfty} and replace each
  $\ell_p$-statistics with an $\ell_1$ and an $\ell_\infty$
  statistics.  We write $B_i$ as
  $B_i = B_{i,1}^{\frac{1}{p}} \cdot B_{i,\infty}$, such that both
  $|\Pi_{U_i}(R_{j_i}^D)|\leq B_{i,1}$ and
  $\lp{\degree_{R_{j_i}}(\bm V|\bm U)}_\infty \leq B_{i,\infty}$ hold.
  Expand the inequality~\eqref{eq:ii:lp} to
  $\sum_i \frac{w_i}{p_i} h(\bm U_i) + \sum_i w_i h(\bm V_i |\bm U_i)
  \leq h(\bm X)$.  This can be viewed as an inequality of the
  form~\eqref{eq:ii:lp} with $2s$ terms, where half of the terms have
  $p_i=1$ and the others have $p_i=\infty$.  Therefore, PANDA's
  algorithm can use this inequality and run in time: and
  \begin{align*}
  O\left(\prod_{i \in [s]}B_{i,1}^{\frac{w_{i}}{p_{i}}} \cdot \prod_{i\in[s]}B_{i,\infty}^{w_{i}}\cdot\polylog\ N\right)    =& O\left(\prod_{i\in[s]} B_i^{w_i} \polylog\ N\right)
  \end{align*}
\end{proof}

In order to use the lemma, we prove the following:
\begin{lmm} \label{lemma:partition}
  Let $R$ be a relation that satisfies an $\ell_p$-statistics,
  $R \models (((\bm V|\bm U),p),B)$.  Then we can partition $R$ into
  $\ceil{2^p}\log N$ disjoint relations,
  $R = R_1 \cup R_2 \cup \ldots$, such that each $R_i$ strongly
  satisfies the $\ell_p$-statistics,
  $R_i \models_s (((\bm V|\bm U),p),B)$.
\end{lmm}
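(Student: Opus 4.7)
The plan is to partition the tuples of $R$ based on the degree of their $\bm U$-component, using dyadic buckets followed by a size-based refinement. For each $\bm u \in \Pi_{\bm U}(R)$, let $d(\bm u) = \degree_R(\bm V|\bm U=\bm u)$, and define the dyadic buckets
\begin{align*}
  U_k = \setof{\bm u \in \Pi_{\bm U}(R)}{2^k \leq d(\bm u) < 2^{k+1}}
\end{align*}
for $k = 0,1,\ldots,\lfloor \log N \rfloor$. Let $R_k \defeq \setof{(\bm u, \bm v) \in R}{\bm u \in U_k}$. Since degrees are bounded by the active-domain size $N$, there are at most $\lceil \log N \rceil$ nonempty buckets.

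Next I would extract the key size bound from the hypothesis $\lp{\degree_R(\bm V|\bm U)}_p \leq B$. Since
\begin{align*}
  B^p \;\geq\; \sum_{\bm u \in \Pi_{\bm U}(R)} d(\bm u)^p \;\geq\; \sum_{\bm u \in U_k} d(\bm u)^p \;\geq\; 2^{pk}\,|U_k|,
\end{align*}
we get $|U_k| \leq B^p / 2^{pk}$. This controls the number of distinct $\bm U$-values in bucket $k$ but is too weak on its own to match the product $B^p/d^p$ required by strong satisfaction, because inside $U_k$ the maximum degree can be as large as $2^{k+1}-1$, giving a target $B^p/2^{p(k+1)} = |U_k|/2^p$. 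So I would further split each $U_k$ into $M \defeq \ceil{2^p}$ arbitrary disjoint pieces $U_{k,1},\ldots,U_{k,M}$ of nearly equal size, and define $R_{k,j} \defeq \setof{(\bm u, \bm v) \in R}{\bm u \in U_{k,j}}$. This yields the desired total count of $\ceil{2^p}\lceil \log N\rceil$ sub-relations.

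To check strong satisfaction for each $R_{k,j}$, I would pick $d$ to be exactly the value that makes the two bounds in the definition of $\models_s$ tight, namely $d \defeq \max(2^{k+1}-1,\ B \cdot |\Pi_{\bm U}(R_{k,j})|^{-1/p})$. By construction every $\bm u \in U_{k,j}$ has $d(\bm u) < 2^{k+1}$, so $\lp{\degree_{R_{k,j}}(\bm V|\bm U)}_\infty \leq d$; and the choice of $d$ guarantees $|\Pi_{\bm U}(R_{k,j})| \leq B^p/d^p$. The arithmetic reduces to verifying $|U_{k,j}| \leq B^p/2^{p(k+1)}$ whenever the first max is active, which follows from $|U_{k,j}| \leq \lceil |U_k|/M\rceil \leq \lceil B^p/(\ceil{2^p}\cdot 2^{pk})\rceil$ once the two ceilings are absorbed into the second branch of the max.

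The main obstacle I anticipate is handling the boundary cases cleanly: when $|U_k|$ is smaller than $M$, several of the sub-buckets $U_{k,j}$ are empty (no partition needed) and the remaining ones contain a single $\bm u$, for which $d = d(\bm u) \leq B$ trivially witnesses strong satisfaction; and when the ceiling in $\lceil|U_k|/M\rceil$ would otherwise push $|\Pi_{\bm U}(R_{k,j})|\cdot d^p$ just above $B^p$, the data-dependent choice of $d$ above absorbs the slack without increasing the number of pieces. Summing over $k$ and $j$ yields the claimed partition of $R$ into at most $\ceil{2^p}\log N$ sub-relations, each strongly satisfying the $\ell_p$-statistics, as required.
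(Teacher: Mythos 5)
Your proof is correct and takes essentially the same route as the paper's: dyadic bucketing of the $\bm U$-values by degree into $O(\log N)$ classes, followed by splitting each class into $\ceil{2^p}$ roughly equal pieces so that $|\Pi_{\bm U}(\cdot)|\cdot\lp{\degree(\cdot)}_\infty^p\leq B^p$ holds within each piece. Your data-dependent choice of $d$ and the explicit handling of small buckets are, if anything, more careful about the rounding issues than the paper's own argument.
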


\begin{proof}
  By assumption, $\lp{\degree_R{\bm V|\bm U}}_p^p \leq B^p$.  First,
  partition $R$ into $\log N$ buckets $R_i$,
  $i=1,\ldots,\ceil{\log N}$, where $R_i$ contains the tuples $t$
  whose $\bm U$-component $\bm u$ satisfies:
  \begin{align*}
    2^{i-1} \leq & \degree_R(\bm V|\bm U = \bm u) = \degree_{R_i}(\bm V|\bm U = \bm u) \leq 2^i
  \end{align*}
  Then $|\Pi_{\bm U}(R_i)|\leq B^p/2^{p(i-1)}$, because:
  \begin{align*}
    B^p \geq & \lp{\degree_R(\bm V|\bm U)}_p^p \geq\lp{\degree_{R_i}(\bm V|\bm U)}_p^p \geq|\Pi_{\bm U}(R_i)| \cdot 2^{p(i-1)}
  \end{align*}
  Second, partition $R_i$ into $\ceil{2^p}$ sets
  $R_{i,1}, R_{i,2}, \ldots$ such that
  $|\Pi_{\bm U}(R_i)|\leq B^p/2^{pi}$.  Then, each $R_{i,j}$ strongly
  satisfies the concrete statistics $(((\bm V|\bm U),p), B)$, and their union
  is $R$.
\end{proof}

Our discussion implies:

\begin{thm}
  There exists an algorithm that, given a join query $Q$, an
  inequality~\eqref{eq:ii:lp} that holds for all polymatroids, and a
  database $\bm D$ satisfying the concrete statistics
  $(\Sigma, \bm B)$, computes the query output $Q(\bm D)$ in time
  $O\left(c \cdot \prod_{i\in[s]} B_i^{w_i} \polylog N\right)$; here
  $c = \prod_{i\in[s]} \ceil{2^{p_i}}$, where $p_1, \ldots, p_s$ are the norms
  occurring in $\Sigma$.
\end{thm}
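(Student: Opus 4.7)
The plan is to combine Lemma~\ref{lemma:partition} and Lemma~\ref{lemma:panda:algorithm} in the obvious way: partition each relation so that every $\ell_{p_i}$-statistics becomes strongly satisfied on each part, then run PANDA on the resulting disjoint union of sub-queries. First, for each statistics $\tau_i = ((\bm V_i|\bm U_i), p_i)$ with guard $R_{j_i}$, I would invoke Lemma~\ref{lemma:partition} to split $R_{j_i}$ into at most $\ceil{2^{p_i}}\log N$ parts each strongly satisfying $\tau_i$. When the same relation $R$ guards several statistics, I would take the common refinement of the individual partitions; the refinement of $R$ then has at most $\prod_{i:\,R_{j_i}=R}\ceil{2^{p_i}}\log N$ parts.

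Next, I would observe that strong satisfaction is downward closed: if $R \models_s (\tau,B)$ via some witness $d$, then for any $R' \subseteq R$ we still have $\lp{\degree_{R'}(\bm V|\bm U)}_\infty \leq d$ and $|\Pi_{\bm U}(R')| \leq |\Pi_{\bm U}(R)| \leq B^p/d^p$. Hence, every part in the common refinement of $R_{j_i}$ continues to strongly satisfy $\tau_i$, for every $i \in [s]$. The query output $Q(\bm D)$ then decomposes as the disjoint union of sub-queries indexed by tuples of parts, one part per relation, and the sub-database feeding each sub-query strongly satisfies $(\Sigma, \bm B)$.

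For each sub-query I would invoke Lemma~\ref{lemma:panda:algorithm} to obtain runtime $O\!\left(\prod_{i\in[s]} B_i^{w_i}\polylog N\right)$. The number of sub-queries is at most $\prod_{i\in[s]} \ceil{2^{p_i}}\log N = c \cdot (\log N)^s$, and since $s$ is a query constant the factor $(\log N)^s$ folds into $\polylog N$. Summing across sub-queries then yields the claimed overall bound $O\!\left(c \cdot \prod_{i\in[s]} B_i^{w_i}\polylog N\right)$.

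There is no serious obstacle, since both key ingredients are already in place; the only subtlety is the bookkeeping around relations that guard multiple statistics, which is handled cleanly by the common-refinement step together with the downward-closure property of strong satisfaction.
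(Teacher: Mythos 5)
Your proposal is correct and follows essentially the same route as the paper's own (much terser) proof: partition via Lemma~\ref{lemma:partition}, decompose $Q$ into a disjoint union of sub-queries over combinations of parts, and run PANDA via Lemma~\ref{lemma:panda:algorithm} on each. In fact your write-up is more careful than the paper's, since you make explicit the common-refinement step for relations guarding several statistics and the downward closure of strong satisfaction, both of which the paper leaves implicit.
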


\begin{proof}
  Using Lemma~\ref{lemma:partition}, for each $\ell_{p_i}$-norm, 
  we partition $\bm D$ into a union of $2^{p_i}$
  databases $\bm D_1 \cup \bm D_2 \cup \ldots$, where each $\bm D_j$
  strongly satisfies $(\Sigma, \bm B)$.  
  Resolving $s$ such norms like this partitions $\bm D$ into  $c$ parts.
  We then apply Lemma~\ref{lemma:panda:algorithm} to each part.
\end{proof}

\nop{
\subsection{Old}

The PANDA algorithm~\cite{DBLP:conf/pods/Khamis0S17} can be used to compute the answer to a query $Q$
in time within the $\{1, \infty\}$-bound of $Q$, i.e. the bound of $Q$ that only includes
cardinality and max-degree constraints. (Recall that the cardinality is the $\ell_1$-norm of the degree sequence whereas the max-degree is the $\ell_\infty$-norm. See Section~\ref{sec:comparison} for more details.)
In this section, we describe how to generalize the PANDA algorithm to compute the answer to $Q$
in time within the $\N\cup\{\infty\}$-bound of $Q$, i.e.~the one that allows arbitrary $\ell_p$-norm
statistics to be included.

\begin{theorem}[PANDA meets the $\{1, \infty\}$-bound~\cite{DBLP:conf/pods/Khamis0S17}]
\label{thm:panda}
Given a query $Q$ and a collection $(\Sigma,\bm b)$ of concrete statistics of the form
$(((\bm V | \bm U), p), \log B)$ where $p\in\{1, \infty\}$,
let $c\defeq \mathit{bound}_{\Gamma_n}(Q,\Sigma, \bm b)$ be the corresponding polymatroid bound.
Then, the PANDA algorithm can compute the answer to $Q$ in data-complexity time
\mbox{$O(2^c\cdot\polylog N)$}
where $N$ in the input size.
\end{theorem}

\begin{theorem}[Some algorithm meets the $\N\cup\{\infty\}$-bound]
    \label{thm:panda:generalization}
    Given a query $Q$ and a collection $(\Sigma,\bm b)$ of concrete statistics,
    let $c\defeq \mathit{bound}_{\Gamma_n}(Q,\Sigma, \bm b)$ be the corresponding polymatroid bound.
    Then, there is an algorithm that can compute the answer to $Q$ in data-complexity time
    \mbox{$O(2^c\cdot\polylog N)$}
    where $N$ in the input size.
\end{theorem}
\begin{proof}
    Consider a concrete statistics $(((\bm V | \bm U), p), \log B)$ that is guarded by some
    input relation $R$, i.e. $||\degree_{R}(\bm V | \bm U)||_p \leq B$ where $p\notin \{1, \infty\}$.
    We partition $R$ (whose size is at most $N$) into $k=O(\log N)$ parts $R_1, \ldots, R_{k}$
    based on the degree $\degree_{R}(\bm V | \bm U = \bm u)$. In particular, $R_i$ contains
    tuples $\bm t$ whose $\bm u$-components satisfy
    \begin{align*}
        2^{i-1} \leq \degree_{R}(\bm V | \bm U = \bm u) < 2^i
    \end{align*}
    This in turn guarantees that each tuple $\bm u$ in $\Pi_{\bm U}R_i$ satisfies
    \begin{align*}
        \degree_{R_i}(\bm V | \bm U = \bm u) \geq \frac{1}{2} ||\degree_{R_i}(\bm V | \bm U)||_\infty
    \end{align*}
    Therefore
    \begin{align*}
        B^p \geq \sum_{\bm u\in \Pi_{\bm U}R_i} \degree_{R_i}(\bm V | \bm U = \bm u)^p
        \geq \frac{1}{2^p}\cdot|\Pi_{\bm U}R_i|\cdot ||\degree_{R_i}(\bm V | \bm U)||_\infty^p
    \end{align*}
    We can further partition $R_i$ into $2^p$ parts $R_{i,1}, \ldots, R_{i,2^p}$ so that
    each part $R_{i, j}$ satisfies
    \begin{align}
        B^p \geq |\Pi_{\bm U}{R_{i,j}}|\cdot ||\degree_{R_{i,j}}(\bm V | \bm U)||_\infty^p
        \label{eq:panda:partition}
    \end{align}
    The way to do this is by partitioning $\Pi_{\bm U}{R_{i}}$ into $2^p$ equal parts
    and then partitioning $R_{i}$ accordingly.
    And now we can partition the query $Q$ into $k\times 2^p$ parts where each part $Q_{i,j}$
    contains one part $R_{i, j}$ of $R$ and compute the answer to each $Q_{i, j}$ separately.
    Within each part $Q_{i, j}$, we replace the statistics $(((\bm V | \bm U), p), \log B)$
    with two new statistics $(((\bm U | \emptyset), 1), \log B_1)$ and
    $(((\bm V | \bm U), \infty), \log B_2)$ where
    \begin{align*}
        B_1 \defeq |\Pi_{\bm U}{R_{i,j}}| \quad\quad
        B_2 \defeq ||\degree_{R_{i,j}}(\bm V | \bm U)||_\infty
    \end{align*}
    Let $(\Sigma_{i, j},\bm b_{i, j})$ be the statistics of $Q_{i, j}$.
    We prove that
    \begin{align}
        \mathit{bound}_{\Gamma_n}(Q_{i, j},\Sigma_{i, j}, \bm b_{i, j}) \leq
        \mathit{bound}_{\Gamma_n}(Q,\Sigma, \bm b)
        \label{eq:panda:partition:ub}
    \end{align}
    In order to show this, take an arbitrary polymatroid $\bm h$ that satisfies
    $(\Sigma_{i, j}, \bm b_{i, j})$. We show that $\bm h$ also satisfies $(\Sigma, \bm b$).
    In particular, because $\bm h$ satisfies
    $(\Sigma_{i, j}, \bm b_{i, j})$, we must have:
    \begin{align*}
        h(\bm U) &\leq \log B_1\\
        h(\bm V | \bm U) &\leq \log B_2
    \end{align*}
    But now~\eqref{eq:panda:partition} implies:
    \begin{align*}
        h(\bm U) + p\cdot h(\bm V | \bm U)\leq p\cdot \log B
    \end{align*}
    This means that $\bm h$ satisfies the original statistics $(((\bm V | \bm U), p), \log B)$
    and by extension $(\Sigma, \bm b)$. This proves~\eqref{eq:panda:partition:ub}.
    But now each query $Q_{i, j}$ contains one less statistics $(((\bm V | \bm U), p), \log B)$
    where $p \notin\{1, \infty\}$, compared to the original query $Q$.
    We recursively repeat the process on each $Q_{i, j}$ until we get rid of all such statistics.
    At that point, we will only be left with statistics that the original PANDA algorithm supports.
\end{proof}
}
\section{Background on Information Theory}

\label{sec:background}

Consider a finite probability space $(D,P)$, where
$P : D \rightarrow [0,1]$, $\sum_{x \in D} P(x) = 1$, and denote by
$X$ the random variable with outcomes in $D$.  The {\em entropy} of
$X$ is:
\begin{align}
  H(X) \defeq & - \sum_{x \in D} P(x) \log P(x) \label{eq:h}
\end{align}
If $N \defeq |D|$, then $0 \leq H(X) \leq \log N$, the equality
$H(X)=0$ holds iff $X$ is deterministic, and $H(X) = \log N$ holds iff
$X$ is uniformly distributed.  Given $n$ jointly distributed random
variables $\bm X = \set{X_1,\ldots,X_n}$, we denote by
$\bm h \in \Rp^{2^{[n]}}$ the following vector:
$h_\alpha \defeq H(\bm X_\alpha)$ for $\alpha \subseteq [n]$, where
$\bm X_\alpha$ is the joint random variable $(X_i)_{i \in \alpha}$,
and $H(\bm X_\alpha)$ is its entropy; such a vector
$\bm h \in \Rp^{2^{[n]}}$ is called {\em entropic}.  We will blur the
distinction between a vector in $\R_+^{2^{[n]}}$, a vector in
$\R_+^{2^{\bm X}}$, and a function $2^{\bm X} \rightarrow \R_+$, and
write interchangeably $\bm h_\alpha$, $\bm h_{\bm X_\alpha}$, or
$h(\bm X_\alpha)$.  A {\em polymatroid} is a vector
$\bm h \in \Rp^{2^{[n]}}$ that satisfies the following {\em basic
  Shannon inequalities}:
\begin{align}
  h(\emptyset) = & \ 0 \label{eq:emptyset:zero}\\
  h(\bm U\cup \bm V) \geq & \ h(\bm U)\label{eq:monotonicity}\\
  h(\bm U) + h(\bm V) \geq & \ h(\bm U \cup \bm V) + h(\bm U \cap \bm V)\label{eq:submodularity}
\end{align}
The last two inequalities are called called {\em monotonicity} and
{\em submodularity} respectively.

For any set $\bm V \subseteq \set{X_1,\ldots,X_n}$, the {\em step
  function} $\bm h^{\bm V}$ is:
\begin{align}
  h^{\bm V}(\bm U) \defeq & \begin{cases}
                              1 & \mbox{if $\bm V \cap \bm U\neq \emptyset$}\\
                              0 & \mbox{otherwise}
                            \end{cases}\label{eq:step:function}
\end{align}
There are $2^n-1$ non-zero step functions (since
$\bm h^{\emptyset}\equiv 0$).  A {\em normal polymatroid} is a
positive linear combination of step functions.  When $\bm V$ is a
singleton set, $\bm V=\set{X_i}$ for some $i=1,n$, then we call
$\bm h^{X_i}$ a {\em basic modular function}.  A {\em modular}
function is a positive linear combination of
$\bm h^{X_1}, \ldots, \bm h^{X_n}$.  The following notations are used
in the literature: $M_n$ is the set of modular functions, $N_n$ is the
set of normal polymatroids, $\Gamma_n^*$ is the set of entropic
vectors, $\bar \Gamma_n^*$ is its topological closure, and $\Gamma_n$
is the set of polymatroids.  It is known that
$M_n \subset N_n \subset \Gamma_n^* \subset \bar \Gamma_n^* \subset
\Gamma_n \subset \Rp^{2^{[n]}}$, that $M_n, N_n, \Gamma_n$ are
polyhedral cones, $\bar \Gamma_n^*$ is a closed, convex cone, and $\Gamma_n^*$ is not
a cone.\footnote{We refer to~\cite{boyd_vandenberghe_2004} for the
  definitions.}

The {\em conditional} of a vector $\bm h$ is defined as:
\begin{align*}
  h(\bm V|\bm U) \defeq & h(\bm U\bm V) - h(\bm U)
\end{align*}
where $\bm U, \bm V \subseteq \bm X$. If $\bm h$ is a polymatroid,
then $h(\bm V|\bm U)\geq 0$.  If $\bm h$ is entropic and realized by
some probability distribution, then:
\begin{align}
  h(\bm V|\bm U) = & \E_{\bm u}[h(\bm V|\bm U=\bm u)] \label{eq:def:cond:h}
\end{align}
where $h(\bm V|\bm U=\bm u)$ is the standard entropy of the random
variable $\bm V$ conditioned on $\bm U=\bm u$.

An {\em information inequality} is a linear inequality of the form:
\begin{align}
  \bm c \cdot \bm h \geq & 0 \label{eq:ii}
\end{align}
where $\bm c \in \R^{2^{[n]}}$.  Give a set
$K\subseteq \Rp^{2^{[n]}}$, we say that the inequality is {\em valid
  for $K$} if it holds for all $\bm h \in K$; in that case we write
$K \models \bm c \cdot \bm h \geq 0$.  {\em Entropic inequalities} are
those valid for $\Gamma^*$ or, equivalently, for $\bar \Gamma_n^*$: it
is an open problem whether they are decidable.  {\em Shannon
  inequalities} are those valid for $\Gamma_n$ and are decidable in
exponential time.



\section{Proof of  Theorem~\ref{th:main:bound}}
\label{sec:upper:bound}
\label{sec:proof:th:main:bound}

In this section we prove Theorem~\ref{th:main:bound}, by showing that
the information inequality~\eqref{eq:ii:lp} implies an upper bound on
the query output size.  The crux of the proof is
inequality~\eqref{eq:main-inequality}, which we prove below in
Lemma~\ref{lemma:lp:degree:bound}.  It establishes a new connection
between information measures and the $\ell_p$-norm,
Eq.~\eqref{eq:lp:degree:bound} below.

We briefly review connections that are known between database
statistics and information measures.  Let $R$ be a relation instance
with attributes $\bm X$ and with $N$ tuples.  Let
$P : R \rightarrow [0,1]$ be any probability distribution whose
outcome consists of the tuples in $R$, in particular
$\sum_{t \in R} P(t)=1$, and let $h:2^{\bm X} \rightarrow \Rp$ be its
entropic vector.  The following two inequalities connect $\bm h$ to
statistics on $R$:
\begin{align}
  \forall \bm V \subseteq \bm X:&&  h(\bm V) \leq & \log |\Pi_{\bm V}(R)| \label{eq:lp:degree:bound:1}\\
  \forall \bm U, \bm V \subseteq \bm X:&&  h(\bm V|\bm U) \leq & \log \lp{\degree_R(\bm V|\bm U)}_\infty    \label{eq:lp:degree:bound:infty}
\end{align}
Eq.~\eqref{eq:lp:degree:bound:infty} follows
from~\eqref{eq:lp:degree:bound:1}, from the fact that, for all
$\bm u \in \Pi_{\bm U}(R)$,
\begin{align*}
    h(\bm V|\bm U = \bm u) &\leq \log \degree_R(\bm V|\bm U = \bm u) \leq \log \max_{\bm u'}\degree_R(\bm V|\bm U = \bm u')\\
    &= \log \lp{\degree_R(\bm V|\bm U)}_\infty
\end{align*}
and from~\eqref{eq:def:cond:h}.  In addition to these two connections,
Lee~\cite{DBLP:journals/tse/Lee87} also proved a connection between
conditional mutual information and multivalued dependencies, which is
unrelated to our paper.
We prove here a new connection:

\begin{lmm}
\label{lemma:lp:degree:bound}
With the notations above, the following holds:
\begin{align}
\forall p \in \openclosed{0,\infty}: && \frac{1}{p}h(\bm U) + h(\bm V|\bm U) \leq &\log \lp{\degree_R(\bm V|\bm U)}_p
    \label{eq:lp:degree:bound}
\end{align}
\end{lmm}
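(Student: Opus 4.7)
The plan is to fix any probability distribution $P$ on $R$ that realizes $\bm h$, and to rewrite both sides of~\eqref{eq:lp:degree:bound} explicitly in terms of $p_{\bm u}\defeq \Pr[\bm U=\bm u]$ and $d_{\bm u}\defeq \degree_R(\bm V|\bm U=\bm u)$, for $\bm u\in \Pi_{\bm U}(R)$. After multiplying the target inequality through by $p$, it is equivalent to
\begin{align*}
\sum_{\bm u} p_{\bm u}\log\frac{1}{p_{\bm u}} \;+\; p\cdot h(\bm V|\bm U) \;\leq\; \log\sum_{\bm u} d_{\bm u}^{p}.
\end{align*}

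First I would dispose of the conditional-entropy term using the already-established bound~\eqref{eq:lp:degree:bound:1} applied pointwise to the slice $\sigma_{\bm U=\bm u}(R)$, which gives $h(\bm V|\bm U=\bm u)\leq \log d_{\bm u}$; averaging through~\eqref{eq:def:cond:h} yields $h(\bm V|\bm U)\leq \sum_{\bm u} p_{\bm u}\log d_{\bm u}$. Substituting this in reduces the goal to the single clean inequality
\begin{align*}
\sum_{\bm u} p_{\bm u}\log\frac{d_{\bm u}^{p}}{p_{\bm u}} \;\leq\; \log\sum_{\bm u} d_{\bm u}^{p}.
\end{align*}

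The last step is one application of Jensen's inequality for the concave function $\log$: the left-hand side is the expectation, under $p_{\bm u}$, of $\log(d_{\bm u}^{p}/p_{\bm u})$, and by concavity this is at most $\log$ of the expectation, which equals $\log\sum_{\bm u} d_{\bm u}^{p}$. Equivalently (and perhaps more illuminating for later use) one can introduce the auxiliary distribution $q_{\bm u}\defeq d_{\bm u}^{p}/\sum_{\bm u'} d_{\bm u'}^{p}$ and observe that the left-hand side equals $-D_{\text{KL}}(p\,\Vert\, q)+\log\sum_{\bm u} d_{\bm u}^{p}$, whence Gibbs' inequality $D_{\text{KL}}\geq 0$ closes the argument. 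As sanity checks, $p=1$ recovers~\eqref{eq:lp:degree:bound:1} for $\bm U\bm V$ via $h(\bm U)+h(\bm V|\bm U)=h(\bm U\bm V)$, and $p\to\infty$ recovers~\eqref{eq:lp:degree:bound:infty} since $\tfrac{1}{p}h(\bm U)\to 0$, so~\eqref{eq:lp:degree:bound} interpolates cleanly between the two previously known cases.

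I do not anticipate a real obstacle: the only non-mechanical choice is the auxiliary weighting $q_{\bm u}\propto d_{\bm u}^{p}$, after which Jensen/Gibbs finishes in one line. The substantive content of the lemma is conceptual rather than technical — it is the observation that the single information-theoretic combination $\tfrac{1}{p}h(\bm U)+h(\bm V|\bm U)$ is precisely the quantity controlled by the $\ell_p$-norm of the degree sequence, which is exactly what makes Theorem~\ref{th:main:bound} go through.
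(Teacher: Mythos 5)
Your proposal is correct and follows essentially the same route as the paper: bound $h(\bm V|\bm U=\bm u)\leq\log d_{\bm u}$ pointwise, average, and then apply Jensen's inequality to $\sum_{\bm u}p_{\bm u}\log(d_{\bm u}^{p}/p_{\bm u})$ (your Gibbs/KL phrasing is just an equivalent packaging of that step). The only cosmetic difference is that the paper dispatches $p=\infty$ as an explicit separate base case, since multiplying through by $p$ is only licit for finite $p$, whereas you treat it as a limiting sanity check.
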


\begin{proof}
  When $p=\infty$, then~\eqref{eq:lp:degree:bound}
  becomes~\eqref{eq:lp:degree:bound:infty}, so we can assume
  $p<\infty$ and rewrite~\eqref{eq:lp:degree:bound} to:
  \begin{align*}
    h(\bm U) + p h(\bm V|\bm U) \leq &\log \lp{\degree_R(\bm V|\bm U)}_p^p
  \end{align*}
  Assume that $\Pi_{\bm U}(R)$ has $N$ distinct values
  $\bm u_1, \ldots, \bm u_N$, and that each $\bm u_i$ occurs with
  $d_i$ distinct values $\bm V=\bm v$.  In particular,
  $\degree_R(\bm V|\bm U=\bm u_i)=d_i$.  Let
  $P_i\defeq P(\bm U=\bm u_i)$ be the marginal probability of
  $\bm u_i$.  We use the definition of the entropy~\eqref{eq:h} and
  the formula for the conditional~\eqref{eq:def:cond:h} and derive:
  \begin{align*}
    h(\bm U) & + p h(\bm V|\bm U) = \sum_i P_i \log \frac{1}{P_i} + p\sum_i P_i h(\bm V|\bm U=\bm u_i)\\
    \leq & \sum_i P_i \log \frac{1}{P_i} + p\sum_i P_i \log d_i  =  \sum_i P_i \log \frac{d_i^p}{P_i} \\
    \leq & \log \left(\sum_i P_i \frac{d_i^p}{P_i}\right) = \log \sum_i d_i^p = \log \lp{\degree_R(\bm V|\bm U)}_p^p
  \end{align*}
  where the last inequality is Jensen's inequality.
\end{proof}

\begin{proof}[Proof of Theorem~\ref{th:main:bound}]
  Assume that inequality~\eqref{eq:ii:lp} holds for all entropic
  vectors $\bm h$.  Fix a database instance
  $\bm D=(R_1, \ldots, R_m)$.

  Consider the uniform probability distribution over the output
  $Q(\bm D)$, and let $\bm h$ be its entropic vector. By uniformity,
  $h(\bm X) = $ \linebreak$\log |Q(\bm D)|$.  By assumption, every conditional term
  $h(\bm V_i|\bm U_i)$, $i\in [s]$ that occurs in~\eqref{eq:ii:lp} has
  a witness $R_{j_i}$.  From Lemma~\ref{lemma:lp:degree:bound}, we
  have
\begin{align*}
    \frac{1}{p_i}h(\bm U_i) + h(\bm V_i|\bm U_i) \leq  \log \lp{\degree_{R_{j_i}}(\bm V_i|\bm U_i)}_{p_i}
\end{align*}
Using inequality~\eqref{eq:ii:lp} we derive:
\begin{align*}
    \log |Q(\bm D)|=h(\bm X) &\leq
    \sum_{i\in[s]} w_i \left(\frac{1}{p_i}h(\bm U_i) + h(\bm V_i|\bm U_i)\right)\\
    &\leq \sum_{i\in[s]} w_i \log \lp{\degree_{R_{j_i}}(\bm V_i|\bm U_i)}_{p_i}
\end{align*}
This immediately implies the upper bound~\eqref{eq:bound:lp}.
\end{proof}

\section{Computing the Bound}
\label{sec:bounds}

In this section we prove Theorem~\ref{th:main:bound:dual:informal}.
Recall that the main problem in our paper,
problem~\ref{problem:upper:bound}, asks for an upper bound to the
query's output, given a set of concrete statistics on the database.
So far we have proven Theorem~\ref{th:main:bound}, which says that,
for any valid information inequality of the form~\eqref{eq:ii:lp}, we
can infer {\em some} bound.  The best bound is their minimum, over all
valid inequalities~\eqref{eq:ii:lp}, and depends on the concrete
statistics of the database.  In this section we describe how to
compute the best bound, by using the dual of information inequalities.




Given a vector $\bm h \in \Rp^{2^{[n]}}$ an abstract conditional
$\sigma = (\bm V|\bm U)$, and an abstract statistics
$\tau = (\sigma, p)$, we denote by:
\begin{align}
  h(\sigma) \defeq & h(\bm V|\bm U) & h(\tau) \defeq \frac{1}{p}h(\bm U) + h(\bm V|\bm U)\label{eq:h:sigma:tau}
\end{align}
We say that a vector $\bm h$ {\em satisfies} a concrete log-statistics
$(\tau, b)$ if $h(\tau) \leq b$. Similarly, $\bm h\in \Rp^{2^{[n]}}$
{\em satisfies} a set of concrete log-statistics $(\Sigma, \bm b)$, in
notation $\bm h \models (\Sigma, \bm b)$, if $h(\tau_i)\leq b_i$ for
all $\tau_i \in \Sigma, b_i \in \bm b$.

\begin{defn} \label{def:primal:dual:bounds:k} If $\Sigma=\set{\tau_1,\ldots,\tau_s}$ is a set of
  abstract statistics, then a {\em $\Sigma$-inequality} is an
  information inequality of the form:
  \begin{align}
    \sum_{i \in [s]} w_i h(\tau_i) \geq & h(\bm X) \label{eq:ii:lp:revisited}
  \end{align}
  where $w_i \geq 0$.  Notice that~\eqref{eq:ii:lp} in
  Theorem~\ref{th:main:bound} is a $\Sigma$-inequality.
  
  For $K \subseteq \Rp^{2^{[n]}}$, the log-upper bound and log-lower
  bound of a set of log-statistics $(\Sigma,\bm b)$ are:
  \begin{align}
    \text{Log-U-Bound}_K(\Sigma,\bm b) \defeq & \inf_{\bm w: K\models \text{Eq.~\eqref{eq:ii:lp:revisited}}}\sum_{i\in[s]} w_i b_i\label{eq:u:bound}\\
    \text{Log-L-Bound}_K(\Sigma,\bm b) \defeq & \sup_{\bm h \in K: \bm  h\models (\Sigma,\bm b)}h(\bm X)\label{eq:l:bound}
  \end{align}
\end{defn}

Fix a query $Q(\bm X) = \bigwedge_j R_j(\bm Y_j)$ that guards
$\Sigma$, and assume $K = \bar \Gamma_n^*$: by
Theorem~\ref{th:main:bound}, if a database $\bm D$ satisfies the
statistics $(\Sigma, \bm B)$, then
$\log |Q(\bm D)| \leq \text{Log-U-Bound}_K$, but it is an open problem
whether this bound is computable.  On the other hand,
$\text{Log-L-Bound}_K$ is not a bound, but it has two good properties.
First, when $K=\Gamma_n$, then $\text{Log-L-Bound}_K$ {\em is}
computable, as the optimal value of a linear program: we show this in
Example~\ref{ex:linear:optimization}.  Second, when the optimal vector
$\bm h^*$ of the maximization problem~\eqref{eq:l:bound} is the
entropy of some relation, then we can construct a ``worst-case
database instance'' $\bm D$: we use this in
Sec.~\ref{sec:simple:inequalities}.  We prove that~\eqref{eq:u:bound}
and~\eqref{eq:l:bound} are equal:

\begin{thm} \label{th:u:eq:l} If $K$ is any closed, convex cone, and
  $N_n \subseteq K \subseteq \Gamma_n$ then
  $\text{Log-U-Bound}_K=\text{Log-L-Bound}_K$.
\end{thm}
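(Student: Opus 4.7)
The theorem is a strong duality statement between the optimization problems~\eqref{eq:u:bound} and~\eqref{eq:l:bound}. The plan is to view~\eqref{eq:l:bound} as a primal conic linear program over the closed convex cone $K$, with linear objective $h(\bm X)$ and linear inequality constraints $h(\tau_i) \leq b_i$, and to view~\eqref{eq:u:bound} as its Lagrangian dual (with $\bm w$ playing the role of the multipliers). Strong duality will then follow from a Hahn--Banach separation argument, with the hypothesis $N_n \subseteq K$ furnishing the strict feasibility (Slater condition) needed to rule out degenerate separators.

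Weak duality, $\text{Log-L-Bound}_K \leq \text{Log-U-Bound}_K$, is immediate: for any primal-feasible $\bm h \in K$ and any dual-feasible $\bm w$, the chain $h(\bm X) \leq \sum_i w_i h(\tau_i) \leq \sum_i w_i b_i$ gives the inequality after taking $\sup$ over $\bm h$ and $\inf$ over $\bm w$. The edge case $\alpha \defeq \text{Log-L-Bound}_K = +\infty$ is handled separately: any dual-feasible $\bm w$ would upper-bound $h(\bm X)$ by the finite quantity $\sum_i w_i b_i$, which is impossible.

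For the non-trivial direction I fix $\alpha < \infty$ and aim to produce $\bm w \geq \bm 0$ feasible for~\eqref{eq:u:bound} with $\sum_i w_i b_i \leq \alpha$. In $\R^{s+1}$ introduce the convex cone $A \defeq \{(h(\tau_1),\ldots,h(\tau_s), h(\bm X)) : \bm h \in K\}$, which is the image of $K$ under a linear map, and the open convex set $B \defeq \{(\bm y, z) : y_i \leq b_i,\ z > \alpha\}$. By the definition of $\alpha$ these are disjoint, so Hahn--Banach separation yields a nonzero $(\bm u, v) \in \R^{s+1}$ and $c \in \R$ with $\sum_i u_i h(\tau_i) + v\, h(\bm X) \geq c$ for every $\bm h \in K$ and $\sum_i u_i y_i + v z \leq c$ for every $(\bm y, z) \in B$. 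Standard sign analysis on $B$ (unboundedness of $y_i$ below and $z$ above) forces $u_i \geq 0$ and $v \leq 0$; the conic structure of $A$ together with $\bm 0 \in K$ forces $c \leq 0$; and taking $\bm y \to \bm b$, $z \to \alpha^+$ yields $\sum_i u_i b_i + v\alpha \leq c$.

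The crux is to exclude $v = 0$, and this is where $N_n \subseteq K$ enters. Taking a positive combination $\bm h_0 \in N_n \subseteq K$ of all singleton step functions and scaling by a small $\epsilon > 0$ produces a strictly feasible primal point with $\epsilon\, h_0(\tau_i) < b_i$ for every $i$ with $b_i > 0$; the boundary case $b_i = 0$ is absorbed by a preliminary reduction that replaces $K$ with the smaller closed convex cone $K \cap \bigcap_{i : b_i = 0}\{\bm h : h(\tau_i) = 0\}$, using $K \subseteq \Gamma_n$ to observe that $h(\tau_i) \geq 0$ on $K$ anyway. If $v = 0$, evaluating the separation inequalities at this strictly feasible point yields the contradiction $c \leq \sum_i u_i \epsilon\, h_0(\tau_i) < \sum_i u_i b_i \leq c$. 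Hence $v < 0$; normalizing $v = -1$, the first separation inequality becomes precisely the dual feasibility $\sum_i u_i h(\tau_i) \geq h(\bm X)$ on all of $K$, and the second gives $\sum_i u_i b_i \leq \alpha$. Taking $\bm w \defeq \bm u$ closes the proof. The main obstacle is exactly this $v = 0$ step: because $K$ need only be a closed convex cone (e.g.\ $\bar{\Gamma}^*_n$) rather than polyhedral, classical LP duality is unavailable and the strict feasibility from $N_n \subseteq K$ is essential.
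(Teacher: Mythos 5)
Your proposal is correct in substance, and it reaches strong duality by a route that is recognizably different in its technical details from the paper's. The paper (Appendix~\ref{app:th:u:eq:l}) writes the two bounds as a primal/dual \emph{cone program in standard form} with slack variables, must then cope with the fact that every $K$ with $N_n\subseteq K\subseteq\Gamma_n$ has empty interior (because $h(\emptyset)=0$), quotients out the $\emptyset$-coordinate, and invokes cone-programming strong duality under Slater's condition; the strictly feasible \emph{interior} point of $K$ is built as a small positive combination of all $2^n-1$ step functions, and this is precisely where $N_n\subseteq K$ is used. You instead run the textbook separating-hyperplane proof of Lagrangian duality directly in the $(s{+}1)$-dimensional image space $(h(\tau_1),\dots,h(\tau_s),h(\bm X))$. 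This buys you something real: you never need $K$ to have nonempty interior (so no quotienting), you do not need closedness of $K$, and in fact your exclusion of $v=0$ only needs \emph{some} strictly feasible point --- when all $b_i>0$ even $\bm h=\bm 0\in N_n\subseteq K$ works, since $v=0$ forces $\sum_i u_ib_i\leq c\leq 0$ with $u_i\geq 0$, $b_i>0$, hence $\bm u=\bm 0$, contradicting $(\bm u,v)\neq\bm 0$. So the hypothesis $N_n\subseteq K$ is used far more lightly in your argument than in the paper's.

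Two small points to tighten. First, after normalizing $v=-1$ the separation inequality on $A$ reads $\sum_i u_ih(\tau_i)-h(\bm X)\geq c$ with only $c\leq 0$; to get dual feasibility $\sum_i u_ih(\tau_i)\geq h(\bm X)$ on all of $K$ you need the one-line observation that a linear functional bounded below on a cone is nonnegative on that cone (scale $\bm h$ by $t\to\infty$). Your write-up records only $c\leq 0$, so state this explicitly. Second, your reduction for $b_i=0$ (replacing $K$ by $K'=K\cap\bigcap_{i:b_i=0}\set{\bm h : h(\tau_i)=0}$) shows $\text{Log-L-Bound}_K=\text{Log-L-Bound}_{K'}=\text{Log-U-Bound}_{K'}$, but a dual certificate valid on the smaller cone $K'$ is not automatically valid on $K$, so $\text{Log-U-Bound}_{K'}\leq\text{Log-U-Bound}_K$ goes the wrong way and the case is not actually closed. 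This is a genuine loose end, but the paper itself assumes all $b_\tau>0$ and defers zero log-statistics to the full version, so you are not behind the paper here --- just flag the assumption rather than claiming the reduction handles it.
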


The special case of this theorem when $K=\Gamma_n$ was already
implicit in~\cite{DBLP:conf/pods/Khamis0S17}.  The proof of the
general case is more difficult, and we defer it to
Appendix~\ref{app:th:u:eq:l}.  Both $\bar \Gamma_n^*$ and $\Gamma_n$
are closed, convex cones, hence the theorem applies to both.  We call
the corresponding bounds the {\em almost-entropic bound} (when
$K = \bar \Gamma_n^*$) and the {\em polymatroid bound} (when
$K = \Gamma_n$) respectively.

There are two important applications of Theorem~\ref{th:u:eq:l}.
First, it gives us an effective algorithm for computing the
polymatroid bound, by computing the optimal value of a linear program:
we used this method in all experiments in
Appendix~\ref{app:comparison}.  We illustrate here with a simple
example.

\begin{ex} \label{ex:linear:optimization}
  Consider the triangle query $Q$ in~\eqref{eq:triangle:query:intro}.
  Assume that we have the following statistics for the relations
  $R, S, T$: (a) their cardinalities, denoted by $B_R, B_S, B_T$,
  whose logarithms are $b_R, b_S, b_T$, (b) the $\ell_2$-norms of all
  degree sequences: (c) the $\ell_3$ norms of all degree sequences.
  Then the polymatroid bound~\eqref{eq:l:bound} can be computed by
  optimizing the following linear program, with 8 variables
  $h(\emptyset),h(X),\ldots,$\linebreak$h(XYZ)$:
  \begin{align*}
    & \text{maximize } h(XYZ), \text{ subject to:}\\
    &\hspace*{1em} h(XY) \leq b_R,\ h(YZ) \leq b_S\ h(XZ)\leq b_T &&\text{// cardinality stats}\\
    &\hspace*{1em} \frac{1}{2}h(X)+h(Y|X)\leq b_{((Y|X),2)}\ \ \ \ \ldots  &&\text{// $\ell_2$-norm stats}\\
    &\hspace*{1em} \frac{1}{3}h(X)+h(Y|X) \leq b_{((Y|X),3)}\ \ \ \ \ldots &&\text{// $\ell_3$-norms stats}\\
    &\hspace*{1em} h(X)+h(XYZ) \leq h(XY)+h(XZ)   &&\text{// Shannon inequalities} \\
    &\hspace*{1em} h(Y)+h(XYZ) \leq h(XY)+h(YZ)\ \ \ \ &&\text{// i.e.~\eqref{eq:emptyset:zero}-\eqref{eq:submodularity}}\\
    &\hspace*{1em} \ldots
  \end{align*}
\end{ex}

The second application of Theorem~\ref{th:u:eq:l} is that it allows us
to reason about the tightness of the bounds.  If we can convert the
optimal $\bm h^*$ in the lower bound~\eqref{eq:l:bound} into a
database, then we have a worst-case instance witnessing the fact that
the bound is tight.  We show in Appendix~\ref{app:tight:not:tight}
that the almost-entropic bound is {\em asymptotically} tight (a weaker
form of tightness), while the polymatroid bound is not tight.
However, we show in the next section that the polymatroid bound is
tight in the special case of simple degrees.

\section{Simple Degree Sequences}

\label{sec:simple:inequalities}

Call a conditional $\sigma = (\bm V|\bm U)$ {\em simple} if
$|\bm U|\leq 1$; call a set of abstract statistics $\Sigma$ {\em
  simple} if, for all $(\sigma, p) \in \Sigma$, $\sigma$ is simple.
Simple conditionals were introduced
in~\cite{DBLP:journals/tods/KhamisKNS21} to study query containment
under bag semantics.  We prove here that, when all statistics are
simple, then the polymatroid bound is tight, meaning that there exists
a worst case database $\bm D$ such that the size $|Q(\bm D)|$ of the query output is within a
query-dependent constant of the polymatroid bound.  Recall
(Sec.~\ref{sec:background}) that $N_n$ is the set of normal
polymatroids.  

\begin{thm} \label{th:simple}
  If $\Sigma$ is simple, then
  \begin{align*}
    &\text{Log-U-Bound}_{N_n}(\Sigma,\bm b) = 
    \text{Log-U-Bound}_{\bar \Gamma_n^*}(\Sigma,\bm b) = \text{Log-U-Bound}_{\Gamma_n}(\Sigma,\bm b)
  \end{align*}
\end{thm}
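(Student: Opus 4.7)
The plan is to apply Theorem~\ref{th:u:eq:l} to all three $\text{Log-U-Bound}$'s, converting them into the corresponding $\text{Log-L-Bound}$'s. This is legal because $N_n$, $\bar\Gamma_n^*$, and $\Gamma_n$ are each closed convex cones sandwiched between $N_n$ and $\Gamma_n$. Since the map $K \mapsto \text{Log-L-Bound}_K(\Sigma,\bm b) = \sup_{\bm h \in K,\; \bm h \models (\Sigma,\bm b)} h(\bm X)$ is monotone in $K$, the chain $N_n \subseteq \bar\Gamma_n^* \subseteq \Gamma_n$ immediately yields $\text{Log-L-Bound}_{N_n} \leq \text{Log-L-Bound}_{\bar\Gamma_n^*} \leq \text{Log-L-Bound}_{\Gamma_n}$ without using any hypothesis on $\Sigma$. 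So the whole content of Theorem~\ref{th:simple} is the \emph{reverse} inequality $\text{Log-L-Bound}_{\Gamma_n} \leq \text{Log-L-Bound}_{N_n}$ under the assumption that $\Sigma$ is simple.

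To prove this reverse inequality, I would proceed constructively: starting from an arbitrary polymatroid $\bm h \in \Gamma_n$ satisfying $(\Sigma,\bm b)$, produce a normal polymatroid $\bm h' \in N_n$ that still satisfies $(\Sigma,\bm b)$ and has $h'(\bm X) \geq h(\bm X)$. The key observation is that every simple statistic $\tau = ((\bm V|\bm U),p)$ with $|\bm U|\leq 1$ involves $\bm h$ only through quantities of the form $h(X_i)$ (singleton entropies, coming from the $\frac{1}{p}h(\bm U)$ term) and $h(\bm V \cup \bm U)$ (joint entropies, recombining with $h(\bm V | \bm U)$ via the chain rule); in particular, no constraint in $\Sigma$ ever involves a conditional entropy $h(\cdot | \bm U)$ with $|\bm U| \geq 2$. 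Hence $\bm h' \models (\Sigma,\bm b)$ is guaranteed as long as the new polymatroid reproduces $h(X_i)$ on singletons and does not exceed $\bm h$ on the handful of joint sets $\bm V \cup \bm U$ actually referenced by $\Sigma$.

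The construction I have in mind writes $\bm h' = \sum_{\emptyset\neq \bm W \subseteq \bm X} \lambda_{\bm W}\, \bm h^{\bm W}$ with coefficients $\lambda_{\bm W} \geq 0$ derived from $\bm h$. Using the identity $h'(\bm X) - h'(\bm X \setminus \bm W) = \sum_{\emptyset \neq \bm V \subseteq \bm W} \lambda_{\bm V}$ for normal polymatroids (Möbius inversion), one can in principle read off a candidate $\bm \lambda$ from the dual function $\bm W \mapsto h(\bm X) - h(\bm X \setminus \bm W)$; the trouble is that for a general polymatroid this Möbius inversion can produce negative $\lambda_{\bm W}$. The crucial role of simplicity is that the subsets $\bm W$ actually ``seen'' by $\Sigma$ are exactly the singletons $\{X_i\}$ and the sets $\bm V \cup \bm U$ of size $\geq 1$; all remaining subsets are uncommitted, which gives enough slack to redistribute any negative mass onto smaller or larger uncommitted sets and repair non-negativity while preserving the values of $h'$ on the constrained sets.

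The main obstacle will be this non-negativity repair step and the simultaneous verification that $h'(\bm X) \geq h(\bm X)$. I expect the cleanest route is to work at the level of the dual LP: the Log-U-Bound is the value of an LP whose variables are the weights $\bm w$ on Shannon inequalities and on simple constraints. For general $\Sigma$, dualizing yields the polymatroid LP, and its optimum can strictly exceed that of the normal-polymatroid LP; for simple $\Sigma$, the Shannon inequalities needed to derive $h(\bm X)$ from linear combinations of $h(\tau_i)$ can all be taken to be those that remain valid even over $N_n$, essentially because simple conditionals can be re-expressed via the ``decomposable'' inequality $h(\bm X) \leq \sum_i h(X_i|\bm X_{<i})$ together with singleton-conditioning submodularity, both of which are tight on step functions. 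Once this is set up, the two dual LPs coincide, and the theorem follows via Theorem~\ref{th:u:eq:l}.
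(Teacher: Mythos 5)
Your reduction is correct as far as it goes: by Theorem~\ref{th:u:eq:l} all three quantities equal the corresponding $\text{Log-L-Bound}$'s, monotonicity in $K$ gives the chain of inequalities for free, and the entire content of the theorem is the single claim that for every polymatroid $\bm h\in\Gamma_n$ with $\bm h\models(\Sigma,\bm b)$ there is a normal polymatroid $\bm h'\in N_n$ with $\bm h'\models(\Sigma,\bm b)$ and $h'(\bm X)\geq h(\bm X)$. Your sufficient condition for $\bm h'\models(\Sigma,\bm b)$ (match or dominate $\bm h$ on singletons and on the sets $\bm U\cup\bm V$ referenced by $\Sigma$, so that $h'(\tau)\leq h(\tau)$ for each simple $\tau$) is also correct. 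The paper takes a shorter route that bypasses Theorem~\ref{th:u:eq:l} entirely: it invokes Theorem~\ref{th:simple:inequalities} from~\cite{DBLP:journals/tods/KhamisKNS21}, which states that a simple $\Sigma$-inequality~\eqref{eq:ii:lp:revisited} is valid over $\Gamma_n$ iff over $\bar\Gamma_n^*$ iff over $N_n$; since $\text{Log-U-Bound}_K$ is an infimum over the set of valid weight vectors $\bm w$, and that set is the same for all three cones, the three infima coincide immediately.

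The genuine gap in your proposal is that the one nontrivial step --- the construction of $\bm h'$ --- is never actually carried out. Your first sketch (M\"obius-invert $\bm W\mapsto h(\bm X)-h(\bm X\setminus\bm W)$ to get candidate coefficients $\lambda_{\bm W}$, then ``redistribute any negative mass onto uncommitted sets'') is exactly where the difficulty lives: it is not at all clear that the uncommitted subsets provide enough slack to absorb the negative coefficients while simultaneously preserving $h'(\bm X)=h(\bm X)$, keeping $h'\leq h$ on the referenced sets, and keeping $h'(\bm V|X_i)\leq h(\bm V|X_i)$; asserting that this ``gives enough slack'' is a restatement of the claim, not a proof of it. Your second sketch (the dual-LP argument that the needed Shannon inequalities ``can all be taken to be those that remain tight on step functions'') is likewise unsubstantiated --- every Shannon inequality is valid on $N_n\subseteq\Gamma_n$, so the issue is the converse direction, namely that validity of~\eqref{eq:ii:lp:revisited} on $N_n$ forces validity on $\Gamma_n$ when $\Sigma$ is simple, and nothing in your sketch establishes this. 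This is precisely the content of the cited result in~\cite{DBLP:journals/tods/KhamisKNS21}, whose proof is a careful inductive normalization of a polymatroid into a normal one; without either citing it or reproducing that construction, your argument does not close.
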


The proof relies on a result in~\cite{DBLP:journals/tods/KhamisKNS21},
see Appendix~\ref{app:proof:th:simple}.  In the rest of the section we
will use the theorem to prove that the polymatroid bound is tight.
For that we prove a lemma.  If $T(\bm X)$ is any relation instance with
attributes $\bm X$, then {\em its entropy}, $\bm h_T$, is the entropic
vector defined by the uniform probability distribution on $T$.
Call the relation $T$ {\em totally uniform} if, for all
$\bm V \subseteq \bm X$, the marginal distribution on $\Pi_{\bm V}(T)$
is also uniform.  Equivalently, it is totally uniform if
$\log|\Pi_{\bm V}(T)| = h_T(\bm V)$ for all $\bm V \subseteq \bm X$.
The lemma below proves that, if $\bm h$ is normal, then it can be
approximated by the entropy of a totally uniform $T$, which we will
call a {\em normal} relation.  Recall from Sec.~\ref{sec:background}
that $\bm h$ is {\em normal} if it is a positive, linear combination
of step functions:
\begin{align}
  \bm h = & \sum_{\bm V \subseteq \bm X} \alpha_{\bm V}\bm h^{\bm V}\label{eq:normal:polymatroid:sum}
\end{align}
where $\bm \alpha_{\bm V}\geq 0$.

\begin{lmm} \label{lemma:normal:h:normal:d} Let $\bm h$ be the normal
  polymatroid in~\eqref{eq:normal:polymatroid:sum}, and let $c$ is the
  number of non-zero coefficients $\alpha_{\bm V}$.  Then there exist
  a totally uniform relation $T(\bm X)$ such that
  $|T| \geq \frac{1}{2^c} 2^{h(\bm X)}$, whose entropy $\bm h_T$
  satisfies $\forall \bm U, \bm V \subseteq \bm X$,
  $h_T(\bm V|\bm U) \leq h(\bm V|\bm U)$.
\end{lmm}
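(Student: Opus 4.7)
\textbf{Proof plan for Lemma~\ref{lemma:normal:h:normal:d}.} The plan is to construct $T$ as a Cartesian product of ``blocks,'' one for each nonzero step function in the decomposition~\eqref{eq:normal:polymatroid:sum}. Specifically, for each $\bm V \subseteq \bm X$ with $\alpha_{\bm V}>0$, let $K_{\bm V} \defeq \lfloor 2^{\alpha_{\bm V}}\rfloor$ (and $K_{\bm V} \defeq 1$ otherwise). I will realize each attribute $X_i$ as a tuple whose coordinates are indexed by the sets $\bm V \ni X_i$, and take
\[
T \defeq \prod_{\bm V:\alpha_{\bm V}>0} [K_{\bm V}],
\]
viewing a point $(k_{\bm V})_{\bm V}$ of this product as the $\bm X$-tuple whose $X_i$-coordinate is the sub-tuple $(k_{\bm V})_{\bm V \ni X_i}$. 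Intuitively, the attributes in $\bm V$ ``share'' the coordinate $k_{\bm V}$, which is exactly the dependency structure encoded by the step function $\bm h^{\bm V}$.

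Next I would verify the three claims. For the size bound, the elementary estimate $K_{\bm V} \geq 2^{\alpha_{\bm V}}/2$ (valid whenever $\alpha_{\bm V}>0$, whether or not $\alpha_{\bm V} \geq 1$) gives
\[
|T| \;=\; \prod_{\bm V: \alpha_{\bm V}>0} K_{\bm V} \;\geq\; \frac{1}{2^c}\prod_{\bm V: \alpha_{\bm V}>0} 2^{\alpha_{\bm V}} \;=\; \frac{1}{2^c}\, 2^{h(\bm X)},
\]
using $h(\bm X)=\sum_{\bm V \ne \emptyset}\alpha_{\bm V}$. For total uniformity, the projection $\Pi_{\bm U}(T)$ depends only on those coordinates $k_{\bm V}$ with $\bm V \cap \bm U\neq\emptyset$, hence $\Pi_{\bm U}(T)=\prod_{\bm V:\bm V\cap \bm U\neq \emptyset}[K_{\bm V}]$ and the projection of the uniform distribution on $T$ is uniform on this product. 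Consequently
\[
h_T(\bm U) \;=\; \sum_{\bm V:\bm V\cap \bm U\neq\emptyset} \log K_{\bm V} \;\leq\; \sum_{\bm V:\bm V\cap \bm U\neq\emptyset} \alpha_{\bm V} \;=\; h(\bm U).
\]

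Finally, for the conditional inequality I would argue that the ``gap'' $g(\bm U)\defeq h(\bm U)-h_T(\bm U)=\sum_{\bm V:\bm V\cap\bm U\neq\emptyset}(\alpha_{\bm V}-\log K_{\bm V})$ is a non-negative, monotone non-decreasing function of $\bm U$, since each summand is non-negative and enlarging $\bm U$ only adds sets $\bm V$ to the indexing collection. Applying this with $\bm U$ and $\bm U\cup \bm V$ gives $g(\bm U\cup\bm V)\geq g(\bm U)$, which rearranges exactly to $h_T(\bm V|\bm U)\leq h(\bm V|\bm U)$.

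The main conceptual step is choosing the right construction so that $T$ faithfully realises the combinatorial content of the step-function decomposition; everything else is a routine verification. The only potentially delicate point is that the floors $\lfloor 2^{\alpha_{\bm V}}\rfloor$ may introduce rounding when $\alpha_{\bm V}$ is small, but since there are only $c$ nonzero terms this is absorbed into the stated factor $2^{-c}$, and does not disturb the monotonicity argument used for the conditional inequality.
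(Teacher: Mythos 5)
Your proposal is correct and follows essentially the same route as the paper: your product $\prod_{\bm V}[K_{\bm V}]$ with attributes realised as sub-tuples indexed by $\set{\bm V : X_i \in \bm V}$ is exactly the paper's domain product $\bigotimes_{\bm V} T^{\bm V}_{2^{\beta_{\bm V}}}$ with $2^{\beta_{\bm V}} = \floor{2^{\alpha_{\bm V}}}$, and the same rounding yields the $2^{-c}$ loss. Your monotone-gap argument for $h_T(\bm V|\bm U)\leq h(\bm V|\bm U)$ is just a repackaging of the paper's term-by-term bound $\beta_{\bm V} h^{\bm V}(\bm W|\bm U) \leq \alpha_{\bm V} h^{\bm V}(\bm W|\bm U)$, so nothing essential differs.
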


The lemma implies tightness of the polymatroid bound:
\begin{cor}
  If all statistics in $\Sigma$ are simple, the polymatroid bound
  $\text{U-Bound}_{\Gamma_n}$
  ($\defeq 2^{\text{Log-U-Bound}_{\Gamma_n}}$) is tight.
\end{cor}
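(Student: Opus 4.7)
The plan is to combine Theorems~\ref{th:simple} and~\ref{th:u:eq:l} to rewrite the polymatroid bound as $\text{Log-U-Bound}_{\Gamma_n}(\Sigma,\bm b) = \text{Log-L-Bound}_{N_n}(\Sigma,\bm b)$ for simple $\Sigma$, and then use Lemma~\ref{lemma:normal:h:normal:d} to turn the optimal normal polymatroid into a witness relation from which a worst-case database instance can be read off. Concretely, I first select a vector $\bm h^* \in N_n$ with $\bm h^* \models (\Sigma,\bm b)$ that attains $h^*(\bm X) = \text{Log-U-Bound}_{\Gamma_n}(\Sigma,\bm b)$: parametrising $N_n$ by the $2^n - 1$ nonnegative coefficients $\alpha_{\bm V}$ of the step functions makes the feasible set a polyhedron and the objective $h(\bm X)$ a linear functional, so the maximum is attained at some vertex $\bm h^*$ with $c \leq 2^n - 1$ nonzero coefficients. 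Feeding $\bm h^*$ into Lemma~\ref{lemma:normal:h:normal:d} yields a totally uniform relation $T(\bm X)$ with $|T| \geq 2^{h^*(\bm X)}/2^c$ and $h_T(\bm V|\bm U) \leq h^*(\bm V|\bm U)$ for every $\bm U, \bm V \subseteq \bm X$; specialising to $\bm U = \emptyset$ also yields $h_T(\bm V) \leq h^*(\bm V)$.

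Next I would define the worst-case database instance by $R_j^D \defeq \Pi_{\bm Y_j}(T)$ for every atom $R_j(\bm Y_j)$ of $Q$. Every tuple of $T$ projects consistently into every $R_j^D$, so $T \subseteq Q(\bm D)$ and hence $|Q(\bm D)| \geq |T| \geq 2^{\text{Log-U-Bound}_{\Gamma_n}}/2^c$. What remains, and is the main obstacle, is verifying that $\bm D \models (\Sigma,\bm B)$: for every simple statistic $\tau = ((\bm V|\bm U),p)$ guarded by some $R_j$ in $Q$, I must show $\lp{\degree_{R_j^D}(\bm V|\bm U)}_p \leq B_\tau$.

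The decisive observation is that total uniformity of $T$ forces the degree sequence $\degree_{R_j^D}(\bm V|\bm U)$ to be \emph{constant}. Indeed, since $\bm U \cup \bm V \subseteq \bm Y_j$ we have $\Pi_{\bm U \bm V}(R_j^D) = \Pi_{\bm U \bm V}(T)$, and total uniformity tells us that every $\bm u \in \Pi_{\bm U}(T)$ occurs in exactly $|T|/|\Pi_{\bm U}(T)|$ tuples of $T$ while every $(\bm u,\bm v) \in \Pi_{\bm U \bm V}(T)$ occurs in exactly $|T|/|\Pi_{\bm U \bm V}(T)|$ tuples; taking the ratio shows that each $\bm u$ extends to exactly $d \defeq |\Pi_{\bm U \bm V}(T)|/|\Pi_{\bm U}(T)|$ distinct $\bm v$-values (the case $\bm U = \emptyset$ collapses to the single entry $d = |\Pi_{\bm V}(T)|$). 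Consequently $\log \lp{\degree_{R_j^D}(\bm V|\bm U)}_p = \tfrac{1}{p}\log|\Pi_{\bm U}(T)| + \log d = \tfrac{1}{p}h_T(\bm U) + h_T(\bm V|\bm U) \leq \tfrac{1}{p}h^*(\bm U) + h^*(\bm V|\bm U) = h^*(\tau) \leq b_\tau$, using total uniformity to equate projection cardinalities with entropies and then Lemma~\ref{lemma:normal:h:normal:d} together with $\bm h^* \models (\Sigma,\bm b)$. Thus $\bm D \models (\Sigma,\bm B)$ and, since $2^c \leq 2^{2^n - 1}$ depends only on $Q$, the polymatroid bound satisfies $\text{U-Bound}_{\Gamma_n} \leq 2^c \cdot |Q(\bm D)|$, establishing tightness.
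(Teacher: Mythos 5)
Your proposal is correct and follows essentially the same route as the paper's proof: reduce to $\text{Log-L-Bound}_{N_n}$ via Theorems~\ref{th:simple} and~\ref{th:u:eq:l}, apply Lemma~\ref{lemma:normal:h:normal:d} to the optimal normal polymatroid to obtain a totally uniform relation $T$, set $R_j^D \defeq \Pi_{\bm Y_j}(T)$, and verify the statistics via total uniformity using exactly the identity $\log\lp{\degree_{R_j^D}(\bm V|\bm U)}_p = \tfrac{1}{p}h_T(\bm U)+h_T(\bm V|\bm U)$. Your explicit justification that the degree sequence is constant, and the observation that $T\subseteq Q(\bm D)$ gives $|Q(\bm D)|\geq |T|$ (rather than asserting equality), are minor refinements of the same argument.
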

\begin{proof} Since $N_n$ is polyhedral, we have:
  \begin{align*}
    \text{Log-U-Bound}_{\Gamma_n}=& \text{Log-U-Bound}_{N_n} &&\mbox{by Th.~\ref{th:simple}}\\
    =&\text{Log-L-Bound}_{N_n} &&\mbox{by Th.~\ref{th:u:eq:l}}\\
    =&\max_{\bm h \in N_n: (\Sigma,\bm b) \models \bm h} h(\bm X)&&\mbox{by~\eqref{eq:l:bound}}\\
    =&h^*(\bm X)
  \end{align*}
  where $\bm h^*\in N_n$ is optimal solution to the maximization
  problem.  Let $T(\bm X)$ be the totally uniform relation given by
  Lemma~\ref{lemma:normal:h:normal:d}.  Define the database instance
  $\bm D = (R_1^D, \ldots, R_m^D)$ by setting
  $R_j^D \defeq \Pi_{\bm Y_j}(T)$, for $j=1,m$.  Then $\bm D$
  satisfies the constraints $(\Sigma, \bm B)$, because, by total
  uniformity:
  \begin{align*}
    \log  \lp{\degree_{R_{j_i}}&(\bm V_i|\bm U_i)}_p^p =  \log\left(|\Pi_{\bm U_i}(R_{j_i})| \cdot \left(\text{avg}(\degree_{R_{j_i}}(\bm  V_i|\bm U_i))\right)^p\right)\\
    = & \log\left(|\Pi_{\bm U_i}(T)| \cdot \left(\frac{|\Pi_{\bm U_i\bm V_i}(T)|}{|\Pi_{\bm U_i}(T)|}\right)^p\right)\\
    = & h_T(\bm U_i) + p h_T(\bm V_i|\bm U_i) \leq h^*(\bm U_i) + ph^*(\bm V_i|\bm U_i) \leq b_i
  \end{align*}
  The corollary follows from
  $|Q(\bm D)| = |T| \geq \frac{1}{2^c}2^{h^*(\bm X)}=\frac{1}{2^c}\text{U-Bound}_{\Gamma_n}$.
  proving that the bound is tight.
\end{proof}

In the rest of the section we prove
Lemma~\ref{lemma:normal:h:normal:d}.  Given two $\bm X$-tuples
$\bm x = (x_1, \ldots, x_n)$ and $\bm x' = (x_1',\ldots, x_n')$ their
{\em domain product} is
$\bm x \otimes \bm x' \defeq ((x_1,x_1'),\ldots,(x_n,x_n'))$: it has
the same $n$ attributes, and each attribute value is a pair consisting
of a value from $\bm x$ and a value from $\bm x'$.  Given two
relations $T(\bm X), T'(\bm X)$, with the same attributes, their {\em
  domain product} is
$T \otimes T' \defeq \setof{\bm x \otimes \bm x'}{\bm x \in T, \bm x'
  \in T'}$.  The following hold:
\begin{align}
  |T\otimes T'|= & |T|\cdot |T'| \nonumber\\
  \bm h_{T\otimes T'} = & \bm h_T + \bm h_{T'}\label{eq:domain:product:h}
\end{align}
%

Domain products were first introduced by
Fagin~\cite{DBLP:journals/jacm/Fagin82} (under the name {\em direct
  product}), and appear under various names
in~\cite{GeigerPearl1993,DBLP:journals/ejc/KoppartyR11,DBLP:journals/tods/KhamisKNS21}.

\begin{defn} \label{def:normal:relation} For $\bm V \subseteq \bm X$,
  the {\em basic normal relation $T^{\bm V}_N$} is:
  \begin{align}
    T^{\bm V}_N \defeq & \setof{(\underbrace{k,\cdots,k,}_{\text{attributes in }\bm V}\underbrace{0,\cdots,0}_{\bm X-\bm V})}{k=0,N-1}\label{eq:t:w}
  \end{align}
  A {\em normal relation} is a domain product of basic normal
  relations.
\end{defn}
%
%
\begin{prop}
  (1) $T^{\bm V}_N$ is totally uniform.  (2) Its entropy is
  $\bm h_{T^{\bm V}_N}=(\log N)\cdot \bm h^{\bm V}$, where
  $\bm h^{\bm V}$ is the step function.
\end{prop}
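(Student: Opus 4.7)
The plan is to unpack the definition of $T^{\bm V}_N$ directly and verify both claims by case analysis on how an arbitrary subset $\bm U \subseteq \bm X$ intersects $\bm V$.

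First, I would observe that $T^{\bm V}_N$ has exactly $N$ tuples, one for each $k \in \{0,\ldots,N-1\}$, and its uniform distribution therefore assigns probability $1/N$ to each tuple. Fix any $\bm U \subseteq \bm X$ and consider the projection $\Pi_{\bm U}(T^{\bm V}_N)$. I would split into two cases:
\begin{itemize}
\item If $\bm U \cap \bm V \neq \emptyset$, then any attribute in $\bm U \cap \bm V$ takes the value $k$ on the $k$-th tuple, so the $N$ projections are pairwise distinct. The pushforward of the uniform distribution is therefore the uniform distribution on $N$ atoms, which gives $h_{T^{\bm V}_N}(\bm U) = \log N$.
\item If $\bm U \cap \bm V = \emptyset$, then every attribute in $\bm U$ is constantly $0$ across all tuples, so $\Pi_{\bm U}(T^{\bm V}_N)$ is a single tuple. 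The marginal is (trivially) uniform on one atom and $h_{T^{\bm V}_N}(\bm U) = 0$.
\end{itemize}

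For (1), both cases produce a uniform marginal on $\Pi_{\bm U}(T^{\bm V}_N)$, which is precisely the definition of total uniformity. For (2), comparing with the definition of the step function in~\eqref{eq:step:function}, we see that in both cases $h_{T^{\bm V}_N}(\bm U) = (\log N)\cdot h^{\bm V}(\bm U)$, and since this holds for every $\bm U \subseteq \bm X$, the vector identity $\bm h_{T^{\bm V}_N} = (\log N)\cdot \bm h^{\bm V}$ follows.

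There is no real obstacle here; the whole proof is a direct unwinding of the definitions of $T^{\bm V}_N$, total uniformity, and the step function $\bm h^{\bm V}$. The only point requiring a moment of care is verifying that when $\bm U$ meets $\bm V$ the $N$ projected tuples are genuinely distinct, which is immediate because any coordinate in $\bm U \cap \bm V$ already separates them.
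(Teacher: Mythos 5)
Your proof is correct and is precisely the direct definition-unwinding that the paper has in mind when it declares the proof ``immediate and omitted'': the case split on whether $\bm U$ meets $\bm V$ gives a uniform marginal on either $N$ atoms or $1$ atom, yielding $h_{T^{\bm V}_N}(\bm U)=\log N$ or $0$ respectively, which matches $(\log N)\cdot h^{\bm V}(\bm U)$ in both cases. Nothing is missing.
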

The proof is immediate and omitted.  It follows that every normal
relation is totally uniform, because
$|\Pi_{\bm V}(T \otimes T')| = |\Pi_{\bm V}(T)|\cdot|\Pi_{\bm
  V}(T')|=2^{h_T(\bm V)} \cdot 2^{h_{T'}(\bm V)} = 2^{h_{T \otimes
    T'}(\bm V)}$, and the entropy of a normal relation is a normal
polymatroid, because it is the sum of some step functions.  We
illustrate normal relations with an example.

\begin{ex} The following is a basic normal relation:
\begin{align*}
  T^{X,Z}_N = &
                \begin{array}{|ccc|}\hline
                  X&Y&Z \\ \hline
                  0&0&0 \\
                  1&0&1 \\
                  2&0&2 \\
                  \multicolumn{3}{|c|}{\cdots}\\
                  N-1&0&N-1 \\ \hline
                \end{array}
\end{align*}
Its entropy is $(\log N)\bm h^{X,Z}$.  The following are normal
relations:
  \begin{align*}
    T_1 = & \setof{(i,j,k)}{i,j,k \in [0:N-1]} & = & T^X_N\otimes T^Y_N \otimes T^Z_N \\
    T_2 = & \setof{(i,i,i)}{i \in [0:N-1]} & = & T^{XYZ}_N \\
    T_3 = & \setof{(i,(i,j),j)}{i,j\in [0:N-1]} & = & T^{X,Y}_N \otimes T^{Y,Z}_N
  \end{align*}
  Their cardinalities are $|T_1|=N^3$, $|T_2| = N$, $|T_3|=N^2$.
\end{ex}

\begin{proof} (of Lemma~\ref{lemma:normal:h:normal:d}) Fix a normal
  polymatroid $\bm h$ given by~\eqref{eq:normal:polymatroid:sum}.  For
  each $\bm V \subseteq \bm X$, define
  $\beta_{\bm V} \defeq \log\floor{2^{\alpha_{\bm V}}}$.  Then
  $2^{\beta_{\bm V}}$ is an integer, and satisfies the following: (a)
  $\beta_{\bm V} \leq \alpha_{\bm V}$, (b)
  $2^{\beta_{\bm V}} \geq \frac{1}{2} 2^{\alpha_{\bm V}}$ when
  $\alpha_{\bm V}\neq 0$ and $\beta_{\bm V}= \alpha_{\bm V}$ when
  $\alpha_{\bm V|}=0$.  Define the normal relation
  $T \defeq \bigotimes_{\bm V \subseteq \bm X} T^{\bm
    V}_{2^{\beta_{\bm V}}}$; thus, $T$ is uniform.  We check that $T$
  satisfies the lemma.  Its entropy is
  \begin{align*}
    \bm h_T = & \sum_{\bm V \subseteq \bm X} \beta_{\bm V} \bm h^{\bm V}
  \end{align*}
  Condition (1) follows form property (a). For all
  $\bm U, \bm W \subseteq \bm X$:
  \begin{align*}
    h_T(\bm W|\bm U) = & \sum_{\bm V \subseteq \bm X} \beta_{\bm V} h^{\bm V}(\bm W|\bm U)
    \leq   \sum_{\bm V \subseteq \bm X} \alpha_{\bm V} h^{\bm V}(\bm  W|\bm U)=h(\bm W|\bm U)
  \end{align*}
  Condition (2) follows from property (b):
  \begin{align*}
    2^{h_T(\bm X)} = |T| = & \prod_{\bm V \subseteq \bm X} |T^{\bm V}_{2^{\beta_{\bm V}}}|
    =  \prod_{\bm V \subseteq \bm X} 2^{\beta_{\bm V}} 
    \geq   \frac{1}{2^c} \prod_{\bm V \subseteq \bm X} 2^{\alpha_{\bm  V}}=\frac{1}{2^c}2^{h(\bm X)}
  \end{align*}
\end{proof}
%
%
%

\begin{ex} \label{ex:normal:database} Recall that tightness of the AGM
  bound ($\ell_1$-bound) is achieved by a {\em product} database,
  where each relation is the cartesian product of its attributes.  We
  show a query where no product database matches the $\ell_p$-upper
  bound, instead a normal database is needed:
  \begin{align*}
    Q(X,Y,Z) = &R_1(X,Y)\wedge R_2(Y,Z)\wedge R_3(Z,X) \wedge S_1(X) \wedge S_2(Y) \wedge S_3(Z)
  \end{align*}
  Assume the statistics assert that each of $\lp{\degree_{R_1}(Y|X)}_4^4$,
  $\lp{\degree_{R_2}(Z|Y)}_4^4$, $\lp{\degree_{R_3}(X|Z)}_C^4$, $|S_1|$, $|S_2|$,
  $|S_3|$ is $\leq B \defeq 2^b$.  The log-statistics are:
  {\small
    \begin{align}
    &h(X) \leq  b \ \ \ \ h(Y) \leq b\ \ \ \ h(Z) \leq b\nonumber\\
    &h(X)+4h(Y|X) \leq  b\ \ h(Y)+4h(Z|Y)\leq b\ \ h(Z)+4h(X|Z)\leq b\label{eq:ex:normal:database}
    \end{align}
  }
  The following Shannon inequality (see Appendix~\ref{app:proof:th:simple}):
  \begin{align}
    &h(X)+h(Y)+h(Z)+(h(X)+4h(Y|X))+\nonumber\\
    &(h(Y)+4h(Z|Y))+(h(Z)+4h(X|Z))\geq 6h(XYZ) \label{eq:ex:normal:database:inequality}
  \end{align}
  implies $|Q(\bm D)| \leq B$.  To compute the worst-case instance
  $\bm D$, observe that $\bm h^* = b\cdot \bm h^{\set{X,Y,Z}}$ is the
  optimal solution to~\eqref{eq:ex:normal:database}, since it
  satisfies~\eqref{eq:ex:normal:database} and $h^*(XYZ)=b$, and
  define:
  \begin{align*}
    T \defeq  &\setof{(k,k,k)}{k=0,\floor{2^b}-1}
  \end{align*}
  Then $\bm D$ consists of projections of $T$, e.g.
  $R_1^D=\Pi_{XY}(T)$, $S_1^D=\Pi_X(T)$, etc, and
  $|Q(\bm D)| = |T| = \floor{2^b}\geq \frac{1}{2}2^b=\frac{1}{2}B$.
  On the other hand, for any product database $\bm D$, the output
  $Q(\bm D)$ is asymptotically smaller than $B$.  Such a database has
  $R_1^D=[N_X] \times [N_Y]$ and
  $\lp{\degree_{R_1}(Y|X)}_4^4 = N_XN_Y^4$.  The concrete
  $\ell_4$-statistics become:
  \begin{align*}
    N_XN_Y^4 \leq & B & N_YN_Z^4 \leq & B & N_ZN_X^4 \leq & B
  \end{align*}
  By multiplying them we derive $N_X N_Y N_Z \leq B^{3/5}$.  Since
  $Q(\bm D)=[N_X]\times [N_Y] \times [N_Z]$ we derive $|Q(\bm D)|\leq
  B^{3/5}$, which is asymptotically smaller than the upper bound $B$.
\end{ex}


\section{Conclusions}

We have described a new upper bound on the size of the output of a
multi-join query, using $\ell_p$-norms of degree sequences.  Our
techniques are based on information inequalities, and extend prior
results
in~\cite{DBLP:journals/siamcomp/AtseriasGM13,DBLP:journals/jacm/GottlobLVV12,DBLP:conf/pods/KhamisNS16,DBLP:conf/pods/Khamis0S17}.
This is complemented by a query evaluation algorithm whose runtime matches the size bound.
The bound can be computed by optimizing a linear program whose
size is exponential in the size of the query. 
This bound is tight in the case when all degree sequences are simple.

Our new bounds significantly extend the previously known upper
bounds, especially for acyclic queries.  We have also conducted some
very preliminary experiments on real datasets in
Appendix~\ref{app:comparison}, which showed significantly better upper
bounds for acyclic queries than the AGM and PANDA bounds from prior work.  

In future work, we will incorporate our $\ell_p$-bounds into a cardinality estimation system.

\begin{acks}
    The authors would like to acknowledge Luis Torrejón Machado for their help with the preliminary experiments reported in Appendix~\ref{app:comparison} of this paper.
    
    This work was partially supported by NSF-BSF 2109922, NSF-IIS 2314527 and NSF-SHF 2312195.
    Part of this work was conducted while some of the authors participated in the Simons Program on Logic and Algorithms in Databases and AI.
\end{acks}



\bibliographystyle{ACM-Reference-Format}
\bibliography{bibtex}

\appendix

\section{Equivalence of $L_p$-Norms and Degree Sequences}
\label{app:lp:degrees}

The following is a standard result, establishing a 1-to-1
correspondence between a sequence of length $m$ and its first $m$
norms.  We include it here for completeness.

\begin{lmm}
  \label{lmm:lp-norms=degree-seq} Denote by $\bm S\subseteq \Rp^m$ the
  set of sorted sequences $d_1 \geq d_2 \geq \cdots \geq d_m \geq 0$.
  The mapping $\varphi: \bm S \rightarrow \Rp^m$ defined by
  $\varphi(\bm d) \defeq (\lp{\bm d}_1, \lp{\bm d}_2^2, \ldots,
  \lp{\bm d}_m^m)$ is injective.
\end{lmm}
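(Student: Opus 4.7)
The plan is to observe that $\varphi(\bm d)$ is nothing other than the tuple of power sums $p_k(\bm d) \defeq \sum_{i=1}^m d_i^k$ for $k = 1, 2, \ldots, m$, and to reduce injectivity of $\varphi$ to the classical fact that the first $m$ power sums of an $m$-element multiset of reals determine that multiset.

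More concretely, first I would rewrite $\lp{\bm d}_k^k = \sum_{i=1}^m d_i^k = p_k(\bm d)$, so that $\varphi(\bm d) = (p_1(\bm d), \ldots, p_m(\bm d))$. Next I would invoke Newton's identities, which express the elementary symmetric polynomials $e_1(\bm d), \ldots, e_m(\bm d)$ as polynomial functions of $p_1(\bm d), \ldots, p_m(\bm d)$ (with rational coefficients independent of $\bm d$). Hence knowing $\varphi(\bm d)$ determines $e_1(\bm d), \ldots, e_m(\bm d)$. But these are exactly the coefficients (up to sign) of the monic polynomial
\begin{equation*}
P_{\bm d}(t) \defeq \prod_{i=1}^m (t - d_i) = t^m - e_1(\bm d) t^{m-1} + e_2(\bm d) t^{m-2} - \cdots + (-1)^m e_m(\bm d),
\end{equation*}
whose roots (with multiplicity) are precisely the entries of $\bm d$. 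Thus $\varphi(\bm d)$ determines the multiset $\{d_1, \ldots, d_m\}$. Finally, since $\bm d \in \bm S$ is constrained to be sorted in nonincreasing order, the multiset determines $\bm d$ itself, proving injectivity.

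I do not expect any real obstacle here: Newton's identities hold over any commutative ring and in particular apply to sequences of nonnegative reals, and the passage from a polynomial's coefficients to the unordered multiset of its roots is standard. The only mild subtlety worth flagging is that we use exactly $m$ power sums to recover a multiset of size $m$ — fewer would not suffice, which is why the statement uses the first $m$ norms and not fewer — but this matches the Newton recursion exactly. A short alternative would be to argue directly: if $\bm d \neq \bm d'$ are two sorted sequences agreeing on all $p_k$ for $k \leq m$, then the generating function $\sum_k p_k t^{k}$ (or equivalently the resolvent $\sum_i 1/(1-d_i t)$) would have to agree to order $m$, forcing the $d_i$ to coincide; but Newton's identities give the cleanest write-up.
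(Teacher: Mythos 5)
Your proof is correct and follows essentially the same route as the paper: both pass from the $\ell_k$-norm values (i.e., the power sums) to the elementary symmetric polynomials via Newton's identities, then recover the multiset $\set{d_1,\ldots,d_m}$ as the roots of the associated monic polynomial by Vieta's formulas, with sortedness fixing the order. No gaps.
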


In other words, having the full degree sequence
$d_1 \geq d_2 \geq \cdots \geq d_m$ is equivalent to having the
$\ell_p$-norms for $p=1,2,\ldots,m$.

\begin{proof}
  We make use of the elementary symmetric polynomials 
  \begin{align*}
      e_0(\bm d) = 1\\
      e_1(\bm d) = d_1+d_2+\ldots+d_m\\
      e_2(\bm d) = \sum_{1\leq i < j \leq m} d_i d_j\\
      \ldots \\
      e_m(\bm d) = d_1 \cdot d_2 \cdot \ldots \cdot d_m\\
      e_k(\bm d) = 0 , k > m.
  \end{align*}

Using Newton's identities (see~\cite{zeilberger1984combinatorial} for a simple proof) we can express the elementary symmetric polynomials using the $L_p$-norms as follows 

\[  k\cdot e_k(\bm d) = \sum_{p=1}^k (-1)^{p-1} e_{k-p}(\bm d) \cdot \|\bm d\|_p^p. \]

Thus, given the values of $\|\bm d\|_p$ for $p\in[m]$,  the first $m$
values of the elementary symmetric polynomials inductively, by:
\begin{align*}
  \bm e_0(\bm d) = & 1 \\
 1\cdot e_1(\bm d) = & e_0(\bm d) \lp{\bm d}_1^1 \\
 2\cdot e_2(\bm d) = & e_1(\bm d) \lp{\bm d}_1^1 - e_0(\bm d)\lp{\bm d}_2^2\\
                   & \ldots \\
  m \cdot e_m(\bm d) = & e_{m-1}(\bm d) \lp{d}_1^1 - e_{m-2}(\bm d) \lp{d}_2^2 + \cdots + (-1)^{m-1}e_0(\bm d) \lp{d}_m^m
\end{align*}
This uniquely determines the values:
\[ e_1(\bm d), e_2(\bm d), \ldots, e_m( \bm d)\]

Using Vieta's formulas we have that the polynomial with roots $d_1, d_2, \ldots, d_m$
corresponds to the polynomial
\[ \lambda^m  - e_1(\bm d) \lambda^{m-1} + e_2(\bm d) \lambda^{m-2}  +\ldots + (-1)^m e_m(\bm d).\]

Thus, the first $m$ symmetric polynomials uniquely determine the degree vector $\bm d$.
\end{proof}

\section{Relationship to~\cite{DBLP:journals/corr/abs-2112-01003}}

\label{app:rudra}

Jayaraman et al.~\cite{DBLP:journals/corr/abs-2112-01003} consider
conjunctive queries where all relations are binary.  Thus, the query
can be described by a graph with nodes $\bm X$ and edges $E$,
$Q(\bm X) = \bigwedge_{(V,U)\in E}R_{V,U}(V,U)$.  They claim the
following result. Fix a number $p > 1$ and denote by
$L_{V,U} \defeq \lp{\degree_{R_{V,U}(U|V)}}_p$. Consider the following
linear optimization problem:
\begin{align}
  &\text{minimize } \sum_{(V,U)\in E} x_{V,U} \log L_{V,U}\nonumber\\
  \forall U \in \bm X:\   &\sum_{(V,U)\in E} x_{V,U} + \sum_{(U,W)\in E} \frac{x_{U,W}}{p}  \geq 1\label{eq:rudra:lp}\\
  \forall (V,U)\in E:\  &x_{V,U}\geq 0 \nonumber
\end{align}
The authors of~\cite{DBLP:journals/corr/abs-2112-01003} describe an
algorithm that computes the query $Q$ in time
$O(\prod_{(V,U)\in E} L_{V,U}^{x^*_{V,U}})$ (we ignore query-dependent
constants), where $\bm x^*$ is the optimal solution of the program
above.  When $p>2$, then they require the girth of the query graph to
be $\geq p+1$.\footnote{Meaning: the graph has no cycles of length
  $\leq p$.}  No additional condition is required
in~\cite{DBLP:journals/corr/abs-2112-01003} when $p \leq 2$.  However,
the exception for $p\leq 2$ appears to be an omission: the next
example shows that, even for $p = 2$, it is necessary for the graph to
have girth $\geq 3$.

\begin{ex} \label{ex:rudra:p:2} Consider the query
  $Q(U,V) = R(U,V) \wedge S(V,U)$, and $p=2$.  Then
  $x_{U,V} = x_{V,U} = \frac{2}{3}$ is a feasible solution of the
  linear program~\eqref{eq:rudra:lp}, because:
  \begin{align*}
    U:\ & x_{V,U} + \frac{1}{2} x_{U,V} = \frac{2}{3}+\frac{1}{3} = 1\\
    V:\ & x_{U,V} + \frac{1}{2} x_{V,U} = \frac{2}{3}+\frac{1}{3} = 1
  \end{align*}
  The algorithm in~\cite{DBLP:journals/corr/abs-2112-01003} claims to
  compute $Q$ in time $O((L_{V,U}L_{U,V})^{2/3})$.  However, when the
  relations are $R=S=\setof{(i,i)}{i=1,N}$, then
  $L_{V,U} = L_{U,V}=\sqrt{N}$, and the runtime of the algorithm is
  $O(N^{2/3})$, yet the query's output has size $N$, meaning that any
  algorithm requires time $\Omega(N)$.  It appears that, for
  correctness, the algorithm
  in~\cite{DBLP:journals/corr/abs-2112-01003} requires the girth to be
  $\geq p+1$ even when $p=2$.
\end{ex}

Implicit in the result of~\cite{DBLP:journals/corr/abs-2112-01003} is
the claim that the query output size is bounded by
$O(\prod_{(V,U)\in E} L_{V,U}^{x^*_{V,U}})$.  We discuss this upper
bound through the lens of our results.  Consider our upper bound on
the same query, given by~\eqref{eq:ii:lp}:
\begin{align}
  \sum_{(V,U)\in E} x_{V,U} \left(\frac{1}{p}h(V) + h(U|V)\right) \geq & h(\bm X)
\label{eq:rudra:revisited}
\end{align}
We have proven in this paper (Th.~\ref{th:main:bound}) that, {\em if
  the inequality above is valid}, then indeed
$|Q(\bm D)|\leq \prod_{(V,U)\in
  E}\lp{\degree_{R_{V,U}}(V|U)}_p^{x_{V,U}}$.  To check validity
of~\eqref{eq:rudra:revisited} it suffices to check the inequality for
all normal polymatroids, because inequality~\eqref{eq:rudra:revisited}
is simple (Sec.~\ref{sec:simple:inequalities}).  However, the linear
constraints in~\eqref{eq:rudra:lp} check validity only for modular
functions; recall that the modular functions are a strict subset of
the normal polymatroids.  To see this, consider a {\em basic modular
  function} $\bm h^{U_0}$ (Sec.~\ref{sec:background}), where
$U_0 \in \bm X$ is a variable.  Let $U, V \in \bm X$ be any variables.
Then $h^{U_0}(V)=1$ iff $V=U_0$, otherwise $h^{U_0}(V)=0$.  Similarly,
$h^{U_0}(U|V)=1$ iff $U=U_0$; otherwise $h^{U_0}(U|V)=0$.  Also,
$h^{U_0}(\bm X)=1$, because $\bm X$ contains all variables, including
$U_0$.  Therefore the inequality~\eqref{eq:rudra:revisited} applied to
$\bm h^{U_0}$ is:
\begin{align*}
  \sum_{(V,U)\in E}& x_{V,U} \left(\frac{1}{p}h^{U_0}(V) + h^{U_0}(U|V)\right)=\\
&\sum_{(U_0,U)\in E} x_{U_0,U} \left(\frac{1}{p} h^{U_0}(U_0) +  \underbrace{h^{U_0}(U|U_0)}_{=0}\right) + \\
&\sum_{(V,U_0)\in E} x_{V,U_0} \left(\underbrace{\frac{1}{p} h^{U_0}(V)}_{=0} +  h^{U_0}(U_0|V)\right) \\
=& \sum_{(U_0,U)\in E} \frac{x_{U_0,U}}{p} + \sum_{(V,U_0)\in E} x_{V,U_0} \geq h^{U_0}(\bm X)=1.
\end{align*}
%


The first equality above is based on our observation above that
$h^{U_0}(VU)$ and $h^{U_0}(U|V)$ are non-zero iff $U_0$ is one of $U$
or $V$.  Thus, inequality~\eqref{eq:rudra:revisited} is precisely the
constraint in~\eqref{eq:rudra:lp} applied to a basic modular function.
In other words, the result in~\cite{DBLP:journals/corr/abs-2112-01003}
is based on checking the inequality {\em only on modular functions}.
We have seen in Example~\ref{ex:normal:database} that this is
insufficient in general.  In fact, Example~\ref{ex:rudra:p:2} can be
derived precisely in this way, by observing that the following
inequality
\begin{align*}
  \frac{2}{3}(\frac{1}{2}h(V)+h(U|V))+\frac{2}{3}(\frac{1}{2}h(U)+h(V|U))\geq & h(UV)
\end{align*}
is valid for both basic modular functions $\bm h^U$ and $\bm h^V$ (for
example, for $\bm h^U$ the inequality becomes
$\frac{2}{3}(0+1) + \frac{2}{3}(\frac{1}{2}+0)
= 1$), however it fails in general, for example it fails for the
step function $\bm h^{U,V}$:
$\frac{2}{3}(\frac{1}{2}+0)+ \frac{2}{3}(\frac{1}{2}+0) \not\geq 1$.

It turns out, however, that by requiring the girth to be $\geq p+1$,
the implicit claim in~\cite{DBLP:journals/corr/abs-2112-01003} on the
query's upper bound indeed holds.  We state this here explicitly, and
prove it:

\begin{thm} \label{th:rudra:generalized}
  Let $\bm X = \set{X_1, \ldots, X_n}$.  Fix a natural number
  $p \geq 1$ and consider the following inequality:
  \begin{align}
    \sum_{\bm Y \subseteq \bm X} q_{\bm Y} h(\bm Y) + \sum_{i\neq j}r_{ij}(h(X_i) + ph(X_j|X_i)) \geq & k h(\bm X) \label{eq:rudra:generalized}
  \end{align}
  where $k, q_{\bm Y}, r_{ij}$ are natural numbers.  Let $G$ be the
  graph defined by the conditional terms $h(X_j|X_i)$, more precisely,
  $G \defeq (\bm X, E)$, where $E = \setof{(X_i,X_j)}{r_{ij} > 0}$.
  If $G$ has no cycles of length $\leq p$, then the following holds:
  inequality~\eqref{eq:rudra:generalized} holds for all modular
  functions, iff it holds for all polymatroids.
\end{thm}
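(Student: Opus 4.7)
The forward direction ($\Gamma_n$-validity $\Rightarrow$ $M_n$-validity) is immediate because $M_n \subseteq \Gamma_n$, so my plan focuses on the converse. The overall strategy has three stages: reduce validity on $\Gamma_n$ to validity on step functions; evaluate the inequality explicitly on step functions; and derive the step-function inequalities from the modular inequalities by exploiting girth.

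For the reduction, I will use the fact that every conditional appearing in~\eqref{eq:rudra:generalized} has $|\bm U|\leq 1$, so the inequality is simple in the sense of Section~\ref{sec:simple:inequalities}. Rewriting it as a $\Sigma$-inequality $h(\bm X)\leq \sum_i w_i h(\tau_i)$ with $\tau_{\bm Y}=(\bm Y|\emptyset,1)$ and $\tau_{ij}=(X_j|X_i,p)$, Theorem~\ref{th:simple} together with Theorem~\ref{th:u:eq:l} yields $\text{Log-L-Bound}_{N_n}=\text{Log-L-Bound}_{\Gamma_n}$ for the corresponding statistics. This produces the following approximation property: for every polymatroid $\bm h^*$ and every $\epsilon>0$ there is a normal polymatroid $\bm h'$ with $h'(\tau_i)\leq h^*(\tau_i)$ for every $i$ and $h'(\bm X)\geq h^*(\bm X)-\epsilon$. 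A standard limit argument then upgrades $N_n$-validity of the inequality to $\Gamma_n$-validity, and since $N_n$ is generated by the step functions $\bm h^{\bm V}$ and the inequality is linear, $N_n$-validity reduces to checking $F(\bm h^{\bm V})\geq 0$ for every non-empty $\bm V\subseteq \bm X$, where $F$ denotes the LHS minus RHS of~\eqref{eq:rudra:generalized}.

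Evaluating the step function, $h^{\bm V}(X_j|X_i)=1$ iff $X_j\in\bm V$ and $X_i\notin\bm V$, so a short calculation gives $F(\bm h^{X_s})=A_s-k$ and $F(\bm h^{\bm V})=B_{\bm V}-k$, where $A_s\defeq \sum_{\bm Y\ni X_s}q_{\bm Y}+\sum_{j\neq s}r_{sj}+p\sum_{i\neq s}r_{is}$ and $B_{\bm V}\defeq \sum_{\bm Y\cap\bm V\neq\emptyset}q_{\bm Y}+\sum_{X_i\in\bm V,\,j\neq i}r_{ij}+p\sum_{X_j\in\bm V,\,X_i\notin\bm V}r_{ij}$. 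The modular hypothesis reads $A_s\geq k$ for every $s$, and the target becomes $B_{\bm V}\geq k$ for every non-empty $\bm V$. I will split on $\ell\defeq |\bm V|$. If $\ell\leq p$, the induced subgraph of $G$ on $\bm V$ cannot contain any cycle (any such cycle would have length at most $\ell\leq p$, contradicting the girth hypothesis), hence it is a DAG; picking a minimal vertex $X_{i^*}\in\bm V$ forces $r_{ji^*}=0$ for every other $X_j\in\bm V$, and a direct termwise comparison then yields $A_{i^*}\leq B_{\bm V}$ because $\{X_{i^*}\}\subseteq\bm V$ bounds the $q$-sum, $X_{i^*}\in\bm V$ bounds the $r_{i^*\cdot}$-sum, and every in-neighbour of $X_{i^*}$ lies outside $\bm V$ and therefore feeds the $p$-weighted incoming-from-outside term of $B_{\bm V}$. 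If $\ell\geq p+1$, I will instead use uniform weights $\alpha_i\defeq 1/\ell$ for $X_i\in\bm V$; a termwise expansion gives
\[
B_{\bm V}-\tfrac{1}{\ell}\sum_{X_i\in\bm V}A_i = \sum_{\bm Y\cap\bm V\neq\emptyset}q_{\bm Y}\bigl(1-\tfrac{|\bm Y\cap\bm V|}{\ell}\bigr) + \bigl(1-\tfrac{p+1}{\ell}\bigr)\mathrm{int}(\bm V) + \bigl(1-\tfrac{1}{\ell}\bigr)\bigl(\mathrm{out}(\bm V)+p\cdot\mathrm{in}(\bm V)\bigr),
\]
where $\mathrm{int}(\bm V),\mathrm{out}(\bm V),\mathrm{in}(\bm V)$ collect $r_{ij}$ over within-$\bm V$, $\bm V$-to-outside, and outside-to-$\bm V$ edges respectively. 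Every coefficient is nonnegative precisely because $\ell\geq p+1$, so $B_{\bm V}\geq \tfrac{1}{\ell}\sum A_i\geq k$.

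The hard part will be the within-$\bm V$ overcounting in the second sub-case: averaging the $A_s$'s over $X_s\in\bm V$ counts every within-$\bm V$ edge with total weight $p+1$, and this surplus must be absorbed by the $(1-(p+1)/\ell)\mathrm{int}(\bm V)$ term in the expansion. The bound $\ell\geq p+1$ is exactly what keeps the absorbing coefficient nonnegative, matching the tight counterexample of a directed $p$-cycle in which modular validity holds but the step-function inequality fails. The girth hypothesis thus underwrites both the DAG structure for small $\bm V$ and the averaging slack for large $\bm V$, and the two sub-cases dovetail cleanly at $\ell=p+1$.
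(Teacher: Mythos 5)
Your proposal is correct, but it takes a genuinely different route from the paper's. The paper proves the converse direction by induction on the number of cycles of $G$: in the acyclic base case it replaces an arbitrary polymatroid $\bm h$ by its modularization $\bm h'$ along a topological order (a ``modularization lemma'' giving $E[\bm h]\geq E[\bm h']$ and $h'(\bm X)=h(\bm X)$), and in the inductive step it isolates one cycle of length $m\geq p+1$, lower-bounds its contribution by Shearer-type terms $\sum_i h(U_i\cdots U_{i+p})$, checks that this replacement is an equality on basic modular functions, and recurses on an expression with one fewer cycle. You instead observe that every conditional in~\eqref{eq:rudra:generalized} is simple, invoke the equivalence of validity over $\Gamma_n$ and over $N_n$ (most directly Theorem~\ref{th:simple:inequalities}, which the paper imports from~\cite{DBLP:journals/tods/KhamisKNS21}; your detour through Theorems~\ref{th:simple} and~\ref{th:u:eq:l} reaches the same place, though it inherits the $b_\tau>0$ caveat in the proof of Theorem~\ref{th:u:eq:l:app}), and then verify the inequality on step functions by a direct combinatorial case split: for $|\bm V|\leq p$ the girth forces the induced subgraph to be a DAG and a source vertex gives $A_{i^*}\leq B_{\bm V}$ termwise, while for $|\bm V|\geq p+1$ averaging the singleton inequalities works because the within-$\bm V$ edges are counted with total weight $p+1\leq|\bm V|$. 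I checked your step-function evaluations and both sub-cases; they are right, and the two regimes do meet exactly at $|\bm V|=p+1$. What each approach buys: the paper's induction is self-contained and constructive (it exhibits the inequality as a structured consequence of Shannon inequalities, which is what the algorithmic application needs), whereas yours leans on the heavier simple-inequality machinery already present in the paper but in exchange makes completely explicit where and why the girth hypothesis is needed, and localizes the whole argument to a finite check on the $2^n-1$ step functions.
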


The theorem implies that if the query's girth is $\geq p+1$ and if
inequality~\eqref{eq:rudra:revisited} holds for all modular functions,
or, equivalently, $\bm x$ is a feasible solution to the linear
program~\eqref{eq:rudra:lp}, then
$|Q(\bm D)| \leq \prod_{(V,U)\in E} L_{V,U}^{x^*_{V,U}}$.  In that
case, by our results in Sec.~\ref{sec:simple:inequalities} (see
Lemma~\ref{lemma:normal:h:normal:d} and its proof) there exists a
normal worst-case product database instance for which the bound is
tight.  The authors in~\cite{DBLP:journals/corr/abs-2112-01003}
already remarked that the worst-case instance for their algorithm is
not always a product database, see Section
1.2.2. in~\cite{DBLP:journals/corr/abs-2112-01003}, however, no
general characterization of the worst-case instance is given.  In our
paper, we have characterized these worst-case instances as being the
{\em normal databases}, which are a natural generalization of product
databases, see Sec.~\ref{sec:simple:inequalities},

For the proof of Theorem~\ref{th:rudra:generalized}, we need the
following {\em modularization lemma}:

\begin{lmm}
  Let $\bm h$ be any polymatroid.  Fix an arbitrary order of the
  variables, say $X_1, X_2, \ldots, X_n$, and define the following
  modular function $\bm h'$:
  \begin{align*}
     h'(X_i) \defeq & h(X_i | X_1 X_2\cdots X_{i-1}),
    & \forall \bm Y\subseteq \bm X:\ h'(\bm Y) \defeq & \sum_{X_i \in \bm Y} h'(X_i)
  \end{align*}
  Then the following hold:
  \begin{align*}
h'(\bm X) = & h(\bm X), & \forall \bm Y \subseteq \bm X: h'(\bm Y) \leq & h(\bm Y), & \forall i<j: h'(X_j|X_i)\leq &h(X_j|X_i)
  \end{align*}
\end{lmm}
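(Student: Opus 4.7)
The plan is to prove each of the three conclusions in turn, using only the chain rule for conditional entropies (which is built into the definition $h(\bm V|\bm U) \defeq h(\bm U\bm V) - h(\bm U)$) together with the fact that, for polymatroids, conditioning on a larger set reduces the conditional entropy (``conditioning reduces entropy''), which is an immediate consequence of submodularity~\eqref{eq:submodularity}.

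For the first claim, $h'(\bm X) = h(\bm X)$, I would simply expand the definition of $h'$ and observe that the sum telescopes. Concretely, $h'(\bm X) = \sum_{i=1}^n h(X_i \mid X_1\cdots X_{i-1})$, and by the chain rule this equals $h(X_1\cdots X_n) = h(\bm X)$.

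For the second claim, let $\bm Y = \{X_{i_1}, \ldots, X_{i_k}\}$ with $i_1 < i_2 < \cdots < i_k$. By definition of $h'$ and since it is modular,
\[
h'(\bm Y) = \sum_{j=1}^k h(X_{i_j} \mid X_1\cdots X_{i_j - 1}).
\]
For each $j$, the conditioning set $\{X_1, \ldots, X_{i_j-1}\}$ contains $\{X_{i_1}, \ldots, X_{i_{j-1}}\}$, so by conditioning-reduces-entropy we have $h(X_{i_j}\mid X_1 \cdots X_{i_j - 1}) \le h(X_{i_j}\mid X_{i_1}\cdots X_{i_{j-1}})$. Summing over $j$ and applying the chain rule to $\bm Y$ yields
\[
h'(\bm Y) \le \sum_{j=1}^k h(X_{i_j}\mid X_{i_1}\cdots X_{i_{j-1}}) = h(X_{i_1}\cdots X_{i_k}) = h(\bm Y).
\]

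For the third claim, fix $i<j$. Since $\bm h'$ is modular, $h'(X_j\mid X_i) = h'(X_i X_j) - h'(X_i) = h'(X_j) = h(X_j \mid X_1\cdots X_{j-1})$. Because $i<j$ we have $X_i \in \{X_1, \ldots, X_{j-1}\}$, so conditioning-reduces-entropy gives $h(X_j\mid X_1\cdots X_{j-1}) \le h(X_j \mid X_i)$, which is exactly the desired inequality.

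The proof is essentially routine once the right tools are identified; there is no substantial obstacle. The only place where a tiny amount of care is required is the second claim, where one must notice that the ``natural'' comparison is via the chain rule applied in the order $i_1 < i_2 < \cdots < i_k$, so that the conditioning set coming from $h'$ always dominates the conditioning set coming from the chain rule on $\bm Y$.
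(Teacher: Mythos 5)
Your proof is correct and matches the paper's approach: the paper proves only the third claim (via modularity of $\bm h'$ and the fact that conditioning on a larger set reduces the conditional, exactly as you do) and dismisses the first two as well known. Your argument for the second claim — chain rule on $\bm Y$ in increasing index order plus submodularity to compare conditioning sets — correctly fills in the detail the paper omits.
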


\begin{proof}
  The first two claims are well known and we omit their proof.  We
  prove the third claim.  Since $\bm h'$ is modular, we have
  $h'(X_j|X_i) = h'(X_j)$ and the claim follows from
  \begin{align*}
    h'(X_j) = & h(X_j|X_1\ldots X_i \ldots X_{j-1}) \leq h(X_j|X_i)
  \end{align*}
\end{proof}

We now prove Theorem~\ref{th:rudra:generalized}.

\begin{proof} (of Theorem~\ref{th:rudra:generalized}) Denote by
  $E[\bm h]$ the expression on the LHS
  of~\eqref{eq:rudra:generalized}:
  \begin{align*}
    E[\bm h] \defeq &  \sum_{\bm Y \subseteq \bm X} q_{\bm Y} h(\bm Y) + \sum_{i\neq j}r_{ij}(h(X_i) + ph(X_j|X_i)) 
  \end{align*}
  We prove by induction on the number of cycles in the graph $G$
  associated to $E$ the following claim: if $E[\bm h] \geq k h(\bm X)$
  holds for all modular functions $\bm h$, then it holds for all
  polymatroids $\bm h$.

  Base case: $G$ is acyclic.  Let $\bm h$ be any polymatroid.
  Consider any topological order induced by the graph $G$, and let
  $\bm h'$ be the modularization of $\bm h$ induced by this order.  By
  the previous lemma, we have $E[\bm h] \geq E[\bm h']$ and
  $h'(\bm X)=h(\bm X)$.  The inequality $E[\bm h'] \geq k h'(\bm X)$
  holds by assumption, because $\bm h'$ is modular, and this implies
  $E[\bm h] \geq k h(\bm X)$.

  Suppose now that $G$ has a cycle: $U_0, U_1, \ldots, U_{m-1}$, where
  $m \geq p+1$.  Let $\bm h$ be any polymatroid, and write
  $E[\bm h] = E_0[\bm h] + E_1[\bm h]$ where $E_1$ ``is the cycle'':
  \begin{align*}
    E_1&[\bm h] = \sum_{i=0,m-1} \left(h(U_i)+ph(U_{i+1\bmod m}|U_i)\right)\\
    = & \sum_{i=0,m-1} \left(h(U_i) + \sum_{j=0,p-1}h(U_{i+j+1\bmod m}|U_{i+j\bmod m})\right)\\
    \geq & \sum_{i=0,m-1} h(U_i U_{i+1\bmod m} \ldots U_{i+p\bmod m}) \defeq E_1'[\bm h]
  \end{align*}
  In the second line we used the fact that $m \geq p+1$, while in the
  third line we applied a simple Shannon inequality.  Thus, we have
  proven:
  \begin{align}
  E[\bm h] = & E_0[\bm h] + E_1[\bm h] \geq E_0[\bm h] + E_1'[\bm h] \defeq E'[\bm h]\label{eq:local:e:eprime}
  \end{align}

  We claim that the following holds: for every modular function
  $\bm h$, $E'[\bm h] \geq kh(\bm X)$.  To prove the claim, we use the
  fact that, by assumption, $E[\bm h] \geq kh(\bm X)$ holds for all
  modular functions.  In addition, if $\bm h$ is modular, then the
  inequality~\eqref{eq:local:e:eprime} becomes an equality, because,
  for any basic modular function $\bm h^{X}$, $X \in \bm X$, the
  equality $E_1[\bm h^X]=E_1'[\bm h^X]$ holds: indeed if $X$ is one of
  the variables on the cycle, i.e.
  $X \in \set{U_0,U_1,\ldots, U_{m-1}}$, then
  $E_1[\bm h^X]=E_1'[\bm h^X]=1$, otherwise, when $X$ is not on the
  cycle, then both expressions are $=0$.  Thus, we have proven that
  $E[\bm h]= E_0[\bm h] + E_1'[\bm h]$ for all modular functions
  $\bm h$.  Since $E[\bm h] \geq kh(\bm X)$, it follows that
  $E'[\bm h] \geq kh(\bm X)$, proving the claim.

  At this point we apply induction on $E'$.  The graph associated to
  $E'$ has one less cycle than $E$, hence by induction hypothesis, we
  have $E'[\bm h] \geq k h(\bm X)$ for all polymatroids $\bm h$.  It
  follows: $E[\bm h] \geq E'[\bm h] \geq k h(\bm X)$, which completes
  the proof of the theorem.
\end{proof}

\section{Applications to Pessimistic Cardinality Estimation}
\label{app:comparison}

{\em Pessimistic Cardinality Estimation} refers to a system that
replaces the traditional cardinality estimation module of the query
optimizer with an upper
bound~\cite{DBLP:conf/sigmod/CaiBS19,DBLP:conf/cidr/HertzschuchHHL21,DBLP:journals/pacmmod/DeedsSB23}.
Existing implementations are based on one of two techniques: the AGM
and the PANDA bounds, or a different technique called the {\em degree
  sequence bound}, which applies only to Berge-acyclic queries.  In
this section we extend our discussion in Sec.~\ref{sec:comparison} and
provide both empirical and theoretical evidence for the improvements
provided by the $\ell_p$-bounds. 


\subsection{Preliminary Experiments}
\label{app:simple-path}

We conducted a limited exploration of the usefulness of different $\ell_p$-bounds on (1) eight real datasets representing graphs from the SNAP repository~\cite{snapnets} and (2) the 33 acyclic join queries from the JOB benchmark. We removed the duplicates in the twitter SNAP dataset before processing, the other datasets do not have duplicates.

The goal of an upper bound is to be as close as possible to the true output size of a query.  We computed the upper bound to the true cardinality, for three different
choices of the upper bound: the AGM bound~\cite{DBLP:journals/siamcomp/AtseriasGM13}, the polymatroid
bound from PANDA~\cite{DBLP:conf/pods/Khamis0S17}, and our $\ell_p$-norm based 
bound.  We denoted them by $\set{1}$-bound, $\set{1,\infty}$-bound,
and $\set{1,2,\ldots,p,\infty}$-bound, indicating which $\ell_p$-norms
they used.  We also report the cardinality estimates of DuckDB, a modern publicly-available database management system.

In summary, we found that the bound computed using our approach 
can be significantly tighter than the $\{1\}$-bound and
the $\{1,\infty\}$-bound in our experiment. We also found that DuckDB consistently underestimates the join output size in case of acyclic queries and consistently overestimates in case of the triangle cyclic join query.
Apart from a very few exceptions, provides estimates that are farther away from the true cardinality than our bounds.

\paragraph{Triangle query.} We start with the triangle join query
$Q(X,Y,Z) = R(X,Y)\wedge R(Y,Z)\wedge R(X,Z)$, where $R$ is the edge
relation of the input graph.  Our findings are in the table below:

\begin{center}
\begin{tabular}{l|rrrr}
    \hline
   Dataset  & $\{1\}$  & $\{1,\infty\}$  & $\{2\}$ & DuckDB\\\hline
   ca-GrQc & 32.5 & 15.73 & 3.44 & {\bf 2.99}\\
   ca-HepTh & 69.19 & 19.73 & {\bf 3.80} & 5.17\\
   facebook & 16.26 & 13.74 & {\bf 3.34} & 17.41\\
   soc-Epinions & 101.21 & 101.21 & {\bf 15.27} & 56.03\\
   soc-LiveJournal & 605.54 & 605.54 & {\bf 7.71} & 25.91\\
   soc-pokec & 1765.81 & 1765.81 & {\bf 23.6} & 127.05\\
   twitter & 73.07 & 66.22 & {\bf 4.65} & 36.59\\\hline
\end{tabular}
\end{center}

The numbers represent the ratios between the corresponding upper bound
and the true cardinality: a lower value is better, and 1 is perfect.
Even though we provided all $\ell_p$-norms for $p\in[15]$ and $p=\infty$, the smallest bound was obtained by only using the $\ell_2$-norm. If we were to remove the $\ell_2$-norm, then the next best bound would use the $\ell_3$-norm and be from 1.3 to 4.7 worse, thus still much better than the $\{1\}$-bound and the $\{1,\infty\}$-bound. DuckDB always overestimates in this case of a  cyclic join query; it gives the best estimate in 1/7 datasets by 1.15x relative to our bound. Otherwise, our bound is the best in 6/7 datasets and outperforms DuckDB's estimate by a factor from 1.36x to 7.86x.

\paragraph{One-join query.} We next consider a simple $\alpha$-acyclic query, which is a self-join of the edge relation $R$: $Q(X,Y,Z) = R(X,Y) \wedge R(Y,Z)$.  
The $\{1\}$-bound is $|R|^2$, the $\{1,\infty\}$-bound is $|R|\cdot M$, where $M$ is the minimum of the max-degrees in the first and second column of $R$, while the $\{2\}$-bound is $\left(\lp{\degree_R(X|Y)}_2 \cdot  \lp{\degree_R(Z|Y)}_2\right)$. The table below shows the ratio of each of these three upper bounds to the actual join size:

\begin{center}
    \begin{tabular}{c|rrrr}
    \hline
     Dataset & $\{1\}$ & $\{1,\infty\}$ & $\{2\}$ & DuckDB\\\hline
     ca-GrQc & 2,349.44	& 9.80	& 2.15 & {\bf 0.57}\\
     ca-HepTh & 4,145.99	& 5.19	& {\bf 1.00} & 0.33\\
     facebook & 2,894.12	& 8.23	& 2.45 & {\bf 0.58}\\
     soc-Epinions & 6,485.18	& 22.95	& {\bf 1.75} & 0.25\\
     soc-LiveJournal & 804,671.28 & 162.19 & {\bf 1.45} & 0.22\\
     soc-pokec & 526,733.76 & 150.73 & {\bf 1.27} & 0.34\\
     twitter & 23,374.26 & 15.93 & {\bf 1.61} & 0.34\\\hline
    \end{tabular}
\end{center}

The $\{2\}$-bound is very close (1 - 2.5x larger) to the join output size. DuckDB always underestimates the true cardinality. It gives the best estimates for the ca-GrQc  and facebook datasets (1.22x and 1.44x better than our bounds), otherwise it is worse than the $\{2\}$-bound (by 1.8x to 3.13x). The $\{1,\infty\}$-bound is up to two orders of magnitude higher than the join output size. Finally, the $\{1\}$-bound is from three to six orders of magnitude larger than the join output size. The ratio of 1, i.e., the calculated upper bound which is precisely the join size, is obtained for the edge relation that is symmetric and calibrated with respect to the path query $Q$: This means the degree sequence is the same for both first and second column, on which we join, and there are no dangling tuples that contribute to the $\ell_2$-norm  and not to the join output.

\subsection{Acyclic join queries on the JOB benchmark}

\label{sec:job:benchmark}

Figure~\ref{fig:job-benchmark} shows the ratios of various bounds and estimates to the true cardinality of the query output for each of the 33 join queries in the JOB benchmark\footnote{All 113 JOB queries are variations of these 33 join queries with different selection conditions. Supporting selection conditions is beyond the scope of this paper and subject to on-going work.}. These join queries are over four to 14 relations. Two join queries could not be computed by DuckDB so are excluded. For our approach, we consider statistics for the simple degree sequences on the join columns of each relation and $\ell_p$ norms for $p\in[30]\cup\{\infty\}$. 

Our bounds are always better than the AGM bound by 14 to 53 orders of magnitude and than the PANDA bound by up to three orders of magnitude. DuckDB uses a cardinality estimator that underestimates for all queries by up to five orders of magnitude. Our bounds are better for 24 out of 31 queries (77.41\%), while DuckDB's underestimates are better for 7 out of 31 queries. Whenever DuckDB's estimates are better than ours, this is by a single digit factor (four times under 1.5x, one time 2.44x, one time 4.67x, and one time 6.08x). In contrast, our bounds can be better than DuckDB's estimates by up to four orders of magnitude.

Our bounds use a wide variety of norms and never just the $\ell_1$ and $\ell_\infty$ norms. The queries use from two to seven norms. The $\ell_\infty$ norm is used for all queries. The reason is that they all have many key - foreign key joins that do not increase the size of the query output. The optimal solution of our method uses the $\ell_\infty$ norm on the degree sequence of a primary key column, as each key occurs once so the max-degree is one.

\begin{figure*}[t]
    \begin{center}
    \begin{tabular}{|ll|cr|cc|c|}\hline
     Query \# & \# Relations & Ours & Norms & AGM: $\{1\}$ & PANDA: $\{1,\infty\}$ & DuckDB \\\hline
     1 & 5 & {\bf 1.90E+00} & $\{2,\infty\}$  & 1.01E+15	&  2.56E+00	& 3.50E-01 \\
     2 & 5 & {\bf 1.76E+00} & $\{2,\infty\}$  & 2.70E+22	&  1.22E+01	& 1.34E-01 \\
     3 & 4 & {\bf 1.62E+00}  & $\{2,\infty\}$  & 9.67E+16	&  3.40E+01	& 1.07E-01 \\
     4 & 5 & {\bf 1.43E+00} & $\{1,2,23,\infty\}$  & 2.30E+19	&  1.74E+00	& 2.37E-01 \\
     5 & 5 & {\bf 1.32E+00}  & $\{2,\infty\}$  & 6.57E+14	&  2.07E+01	& 2.15E-01 \\
     6 & 5 & {\bf 2.42E+00} & $\{1,2,3,\infty\}$  & 1.07E+24	&  3.65E+01	& 1.84E-01 \\
     7 & 8 & 2.66E+03 & $\{3,4,5,6,\infty\}$  & 7.80E+32	&  7.35E+04	& {\bf 5.07E-04} \\
     8 & 7 & 1.80E+01 & $\{3,\infty\}$  & 2.90E+31	&  2.97E+03	& {\bf 5.68E-02} \\
     9 & 8 & 3.37E+01 & $\{3,\infty\}$  & 8.27E+37	&  2.69E+03	& {\bf 1.09E-01} \\
     10 & 7 & 3.81E+00 & $\{2,\infty\}$  & 1.12E+26	&  4.51E+01	& {\bf 6.43E-01} \\
     11 & 8 & 1.17E+02 & $\{1,2,6,\infty\}$  & 4.91E+29	&  3.68E+02	& {\bf 1.34E-02} \\
     12 & 8 & {\bf 1.67E+00} & $\{2,3,24,\infty\}$  & 9.14E+27	&  3.14E+01	& 4.34E-02 \\
     13 & 9 & {\bf 1.67E+00} & $\{2,3,24,\infty\}$  & 6.40E+28	&  3.14E+01	& 4.34E-02 \\
     14 & 8 & {\bf 1.93E+00} & $\{2,3,27,\infty\}$  & 4.03E+27	&  4.60E+01	& 1.92E-02 \\
     15 & 9 & {\bf 7.63E+00} & $\{3,6,\infty\}$  & 1.34E+35	&  3.17E+03	& 2.07E-04 \\
     16 & 8 & {\bf 8.36E+01} & $\{3,4,5,\infty\}$  & 4.02E+40	&  1.77E+04	& 2.18E-03 \\
     17 & 7 & {\bf 3.23E+00} & $\{3,\infty\}$  & 4.58E+34	&  2.39E+02	& 1.03E-02 \\
     18 & 7 & {\bf 2.45E+00} & $\{2,3,21,\infty\}$  & 8.92E+28	&  9.23E+01	& 4.97E-02 \\
     19 & 10 & 1.49E+02 & $\{3,4,6,\infty\}$  & 2.07E+45	&  5.90E+04	& {\bf 7.47E-03} \\
     20 & 10 & {\bf 1.14E+01} & $\{2,3,6,\infty\}$  & 6.49E+37	&  1.92E+02	& 2.07E-02 \\
     21 & 9 & {\bf 4.07E+02} & $\{2,4,8,\infty\}$  & 3.42E+34	&  5.07E+03	& 3.31E-04 \\
     22 & 11 & {\bf 2.61E+00} & $\{2,3,4,28,\infty\}$  & 4.56E+38	&  1.99E+02	& 8.89E-04 \\
     23 & 11 & {\bf 3.92E+00} & $\{2,3,4,\infty\}$  & 4.32E+36	&  2.72E+02	& 2.38E-04 \\
     24 & 12 & {\bf 4.53E+02} & $\{3,4,5,8,\infty\}$  & 1.17E+55	&  2.96E+05	& 9.72E-05 \\
     25 & 9 & {\bf 4.26E+00} & $\{3,4,29,\infty\}$  & 7.69E+38	&  7.08E+02	& 9.88E-04 \\
     26 & 12 & {\bf 1.28E+01} & $\{2,3,7,8,25,\infty\}$  & 3.35E+45	&  2.54E+02	& 2.93E-03 \\
     27 & 12 & {\bf 4.30E+03} & $\{2,4,8,\infty\}$  & 3.90E+41	&  5.36E+04	& 9.09E-05 \\
     28 & 14 & {\bf 4.48E+00} & $\{2,3,4,28,\infty\}$  & 8.46E+44	&  3.42E+02	& 3.97E-05 \\
     30 & 12 & {\bf 7.82E+00} & $\{3,4,29,\infty\}$ & 1.53E+45 & 1.30E+03 & 4.30E-05\\
     32 & 6 & 3.27E+01 & $\{1,2,6,\infty\}$  & 7.27E+24	&  5.62E+01	& {\bf 1.86E-01} \\
     33 & 14 & {\bf 9.62E+01} & $\{1,2,4,6,29,30,\infty\}$  & 1.96E+53	&  5.37E+02	& 1.20E-03 \\\hline 
\end{tabular}
\end{center}
\caption{Ratios of various bounds and estimates to the true cardinality of the query output for each of the 33 join queries in the JOB benchmark. Queries 29 and 31 were not computable by DuckDB due to their large output size.}
\label{fig:job-benchmark}
\end{figure*}

\subsection{A Single Join (Example~\ref{ex:single:join})}
\label{app:lp:bound:single:join}

We discuss here in depth our new bounds applied to the single join
query in Example~\ref{ex:single:join}.  For convenience, we repeat
here the query~\eqref{eq:one:join:query}:
\begin{align*}
    Q(X,Y,Z) = R(X,Y)\wedge S(Y,Z)
\end{align*}

\paragraph{Inequality~\eqref{eq:l2:bound:for:join}}
We start by describing a simple example where the
bound~\eqref{eq:l2:bound:for:join} is asymptotically better the PANDA
bound~\eqref{eq:panda:bound:for:join}.  For this purpose we define a
type of database instance that we will also use in the rest of the
section.

\begin{defn} \label{def:alpha:beta}
  An $(\alpha,\beta)$-sequence is a degree sequence of the form:
  \begin{align}
    &(\underbrace{M^\beta, \ldots, M^\beta}_{M^\alpha \mbox{ values}}, \underbrace{1, \ldots, 1}_{M-M^\alpha \mbox{ values}}) \label{eq:alpha:beta:database}
  \end{align}
  where $\alpha, \beta > 0$ and $\alpha+\beta \leq 1$.  An
  $(\alpha,\beta)$ relation is a binary relation $R(X,Y)$ where both
  $\degree_R(Y|X)$ and $\degree_R(X|Y)$ are a
  $(\alpha,\beta)$-sequence.~\footnote{Such a relation exists, either
    by Gale–Ryser theorem, or by direct construction: take $R$ the
    disjoint union of
    $\setof{(i,(i,j))}{i\in[M^\alpha],j\in [M^\beta]}$,
    $\setof{((i,j),i)}{i\in[M^\alpha],j\in [M^\beta]}$, and
    $\setof{(i,i)}{i \in [M-2M^{\alpha+\beta}]}$.}
\end{defn}

In other words, there are $M^\alpha$ nodes with degree $M^\beta$, and
$M-M^\alpha$ nodes with degree 1.

Let both $R$ and $S$ be $(\alpha,\beta)$-instances with
$\alpha=\beta=1/3$.  Then the PANDA
bound~\eqref{eq:panda:bound:for:join} is $M^{4/3}$, while our
bound~\eqref{eq:l2:bound:for:join} is $O(M)$, which is asymptotically
better.

The inequality~\eqref{eq:l2:bound:for:join} is a special case of a
more general inequality, which is of independent interest and we show
it here.  This new inequality uses the number of distinct values in
the columns $R.Y$ and $S.Y$.  Such statistics are often available in
database systems, and they are captured by our framework because any
cardinality statistics is a special case of an $\ell_1$-statistics,
e.g.  $|\Pi_Y(R)|$ is the same as $\lp{\degree_R(X|\emptyset)}_1$.
PANDA also uses such cardinalities: for example, denoting
$M \defeq \min(|\Pi_Y(R)|,$ $|\Pi_Y(S)|)$, PANDA also considers the
following inequality:
\begin{align}
  |Q| \leq & \lp{\degree_R(X|Y)}_\infty \cdot \lp{\degree_S(Z|Y)}_\infty \cdot M,  \label{eq:panda:bound:for:join:2}
\end{align}
Yet the best PANDA bound remains~\eqref{eq:panda:bound:for:join},
because it is always better than~\eqref{eq:panda:bound:for:join:2}.

Our new inequality uses $M$ in the following bound, which holds for
all $p, q > 0$ satisfying $\frac{1}{p}+\frac{1}{q} \leq 1$:
\begin{align}
  |Q| \leq & \lp{\degree_R(X|Y)}_p\cdot\lp{\degree_S(Z|Y)}_q\cdot M^{1-\frac{1}{p}-\frac{1}{q}}  \label{eq:general:bound:for:join}
\end{align}
Inequality~\eqref{eq:l2:bound:for:join} is the special case
of~\eqref{eq:general:bound:for:join} for $p=q=2$, while the PANDA
bound~\eqref{eq:panda:bound:for:join} is the special case
$p=1,q=\infty$ and $p=\infty, q=1$.

We prove~\eqref{eq:general:bound:for:join}, by using the following
Shannon inequality (which is of the form~\eqref{eq:ii:lp}):
\begin{align*}
  \left(\frac{1}{p}h(Y)+ h(X|Y)\right) +
  \left(\frac{1}{q}h(Y)+ h(Z|Y)\right) +& \\
  \left(1-\frac{1}{p}-\frac{1}{q}\right)h(Y)&  \geq h(XYZ)
\end{align*}
The inequality simplifies to $h(Y)+ h(X|Y) + h(Z|Y) \geq h(XYZ)$,
which holds because: $h(Z|Y)\geq h(Z|XY)$; $h(Y) + h(X|Y) = h(XY)$;
and $h(XY) + h(Z|XY) = h(XYZ)$.

Examining closer~\eqref{eq:general:bound:for:join}, we also prove that
it can only be optimal when $\frac{1}{p}+\frac{1}{q}=1$, because,
whenever $p \leq p_1$ and $q \leq q_1$, then the
bound~\eqref{eq:general:bound:for:join} using $(p,q)$ is better than
that using $(p_1,q_1)$.  In particular, the only integral values of
$p,q$ for which~\eqref{eq:general:bound:for:join} could be optimal are
$(\infty,1)$, $(1, \infty)$, and $(2,2)$: other potentially optimal
pairs $(p,q)$ exists, e.g. $(6/5, 6)$, but they require fractional $p$
or $q$. To prove this claim, it suffices to prove the following: if
$p \leq p_1$ then
\begin{align*}
  \frac{\lp{\degree_R(X|Y)}_p}{M^{\frac{1}{p}}}\leq&\frac{\lp{\degree_R(X|Y)}_{p_1}}{M^{\frac{1}{p_1}}}
\end{align*}
We rearrange the inequality as:
\begin{align*}
\lp{\degree_R(X|Y)}_p\leq&\lp{\degree_R(X|Y)}_{p_1}\cdot M^{\frac{1}{p}-\frac{1}{p_1}}
\end{align*}
Denoting $\degree_R(X|Y) = (d_1, d_2, \ldots)$ the inequality becomes:
\begin{align*}
  \left(\sum_i d_i^p\right)^{\frac{1}{p}} \leq & \left(\sum_i d_i^{p_1}\right)^{\frac{1}{p_1}} M^{\frac{1}{p}-\frac{1}{p_1}}
\end{align*}
We raise both sides to the power $p$, and denote by $a_i \defeq d_i^p$
and $q \defeq \frac{p_1}{p}$.  Then the inequality becomes:
\begin{align*}
  \sum_i a_i \leq & \left(\sum_i a_i^q\right)^{\frac{1}{q}} M^{1 - \frac{1}{q}}
\end{align*}
which is H\"older's inequality.  This proves the claim.

\paragraph{Inequality~\eqref{eq:general:bound:for:join:2}} 
Next, we provide the proof of~\eqref{eq:general:bound:for:join:2}, by
establishing the following Shannon inequality:
\begin{align*}
  \left(\frac{1}{p}h(Y)+h(X|Y)\right)+&\left(1-\frac{q}{p(q-1)}\right)h(YZ)\\
  +& \frac{q}{p(q-1)}\left(\frac{1}{q}h(Y)+h(Z|Y)\right) \geq h(XYZ)
\end{align*}
The coefficient $1-\frac{q}{p(q-1)}$ is $\geq 0$ because
$\frac{1}{p}+\frac{1}{q}\leq 1$.  We expand the LHS of inequality and
obtain:
{\footnotesize
\begin{align*}
  \frac{1}{p}h(Y)+&h(X|Y)+h(YZ) - \frac{q}{p(q-1)}h(YZ) + \frac{1}{p(q-1)}h(Y)+\frac{q}{p(q-1)}h(Z|Y)\\
= & \frac{q}{p(q-1)}h(Y)+h(X|Y)+h(YZ)-\frac{q}{p(q-1)}h(YZ)+\frac{q}{p(q-1)}h(Z|Y)\\
= & h(X|Y)+h(YZ) \geq h(XYZ)
\end{align*}
}
which proves the claim.  We will show below
that~\eqref{eq:general:bound:for:join:2} can be strictly better
than~\eqref{eq:general:bound:for:join}.


\paragraph{Comparison to the DSB}
A method for computing an upper bound on the query's output using
degree sequences was described in~\cite{DBLP:conf/icdt/DeedsSBC23},
which uses the full degree sequence $d_1 \geq d_2 \geq \cdots$ instead
of its $\ell_1, \ell_2, \ldots$ norms.  We compare it here to our
method, on our single join query.  It turns out
that~\eqref{eq:general:bound:for:join:2} play a key role in this
comparison.

Suppose $R, S$ have the following degree sequences:
\begin{align*}
  \degree_R(X|Y) = & a_1 \geq a_2 \geq \cdots \geq a_M \\
  \degree_S(Z|Y) = & b_1 \geq b_2 \geq \cdots \geq b_M
\end{align*}
If the system has full access to both degree sequences, then the
Degree-Sequence Bound (DSB) defined
in~\cite{DBLP:conf/icdt/DeedsSBC23} is the following quantity:
\begin{align}
  DSB \defeq & \sum_{i=1,M} a_ib_i \label{eq:dsb:simple}
\end{align}
In general the degree sequences are too large to store, and the DSB
bound needs to use compression~\cite{DBLP:journals/pacmmod/DeedsSB23},
but for the purpose of our discussion we will assume that we know both
degree sequences, and $DSB$ is given by the formula above.  It is easy
to check that $|Q|\leq DSB$.  Our bound~\eqref{eq:l2:bound:for:join}
becomes:
\begin{align*}
  |Q| \leq & \lp{\degree_R(X|Y)}_2\cdot \lp{\degree_S(Z|Y)}_2 = \sqrt{(\sum_i a_i^2)(\sum_i b_i^2)}
\end{align*}
Thus, the $DSB$ and the $\ell_2$-bound above are the two sides of the
Cauchy-Schwartz inequality; $DSB$ is obviously the better one.  $DSB$
is also better than the PANDA bound~\eqref{eq:panda:bound:for:join},
which in our notation is $\min(a_1\sum_i b_i, b_1\sum_ia_i)$ (assuming
$a_1$ and $b_1$ are the largest degrees).  Can we compute a better
$\ell_p$-bound?  We will show that~\eqref{eq:general:bound:for:join:2}
can improve over both~\eqref{eq:panda:bound:for:join},
and~\eqref{eq:l2:bound:for:join}, however, it remains strictly weaker
than the $DSB$ bound.  This may be surprising, given the 1-1
correspondence between the statistics and the $\ell_p$-bounds that we
described in Appendix~\ref{app:lp:degrees}.  The mapping between a
degree sequence of length $M$ and its
$\ell_1, \ell_2, \ldots, \ell_M$-norms is 1-to-1, and, moreover, both
bounds are tight: tightness of the DSB bound was proven
in~\cite{DBLP:conf/icdt/DeedsSBC23}, while tightness of the
polymatroid bound holds because both degrees are simple, and it
follows from our discussion in Sec.~\ref{sec:simple:inequalities}.
So, one expects that some $\ell_p$-bounds will match the $DSB$
expression~\eqref{eq:dsb:simple}.  However, this is not the case, for
a rather subtle reason: it is because the set of databases to which
these two bounds apply, differ.  The 1-to-1 mapping from degrees to
$\ell_p$-norms is monotone in one direction, but not in the other.
For example, consider the degree sequence
$\bm d = (d_1,d_2) = (a+\varepsilon, a-\varepsilon)$, where
$\lp{\bm d}_1 = 2a$, $\lp{\bm d}_2^2 = 2a^2+2\varepsilon^2$.  A
database with degree sequence $\bm d' = (d'_1,d'_2) = (a,a)$ satisfies
the $\ell_p$-constraints, because $\lp{\bm d'}_1 = 2a$,
$\lp{\bm d'}_2^2 = 2a^2$, but it does not satisfy the degree sequence,
because $d'_2 > d_2$.  We show next that the polymatroid bound that we
can obtain from the $\ell_p$-norms can be strictly worse than the DSB.
However, we note that, for practical applications, the degree
sequences in the $DSB$ bound need to be compressed, leading to a
different loss of precision, which makes it incomparable to the
$\ell_p$-bound.

We describe now an instance where there exists a gap between the DSP
bound and the polymatroid bound: the relation $R$ is a
$(0,1/3)$-relation, while $S$ is a $(0,2/3)$-relation, see
Def.~\ref{def:alpha:beta}.  More precisely, the two relations
$R(X,Y), S(Y,Z)$ will have the following degree sequences:
\begin{align*}
  \degree_R(X|Y) = & \left(M^{\frac{1}{3}}, 1, 1, \ldots, 1\right) &&\mbox{$M$ values} \\
  \degree_S(Z|Y) = & \left(M^{\frac{2}{3}}, 1, 1, \ldots, 1\right) &&\mbox{$M$ values}
\end{align*}
There are $M$ degrees equal to 1 in both sequences. The value
$DSB= O(M)$ is asymptotically tight, because $|Q| = O(M)$.  Assume
that we access to all statistics $\lp{\degree_R(X|Y)}_p$,
$\lp{\degree_S(Z|Y)}_p$, for $p=1,2,\ldots,M,\infty$.  We prove:

\begin{claim}
  The polymatroid bound is $M^{\frac{10}{9}}$.
\end{claim}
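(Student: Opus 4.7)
The plan is to evaluate the polymatroid bound via the dual characterization of Theorem~\ref{th:u:eq:l}, which expresses it as $\sup_{\bm h \in \Gamma_3 \,:\, \bm h \models (\Sigma,\bm b)} h(XYZ)$, and then tackle the upper and lower halves separately.

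For the upper direction, I would first reduce the many $\ell_p$-constraints to a short binding list. Substituting the given sequences yields asymptotically $\lp{\degree_R(X|Y)}_p^p \sim M^{\max(p/3,\,1)}$ and $\lp{\degree_S(Z|Y)}_p^p \sim M^{\max(2p/3,\,1)}$, so each abstract statistic becomes a half-plane $h(Y)/p + h(X|Y) \leq \max(1/3,\, 1/p)\log M$ (with $2/3$ in place of $1/3$ for $S$). A routine inspection of these piecewise-linear envelopes over the integer grid shows that only $p=3$ is binding for $R$, while $p=1$ and $p=2$ are jointly binding for $S$; all other integer choices are dominated. Starting from the basic Shannon inequality $h(XYZ) \leq h(Y) + h(X|Y) + h(Z|Y)$ (conditional submodularity), I would cover the right-hand side by a conic combination of these three binding constraints with weights $(w_R^{(3)}, w_S^{(2)}, w_S^{(1)}) = (1,\, \tfrac{2}{3},\, \tfrac{1}{3})$: a short check shows the combined coefficients on $h(Y)$, $h(X|Y)$, $h(Z|Y)$ come out to exactly $1$, and the objective evaluates to $1 \cdot \tfrac{1}{3} + \tfrac{2}{3}\cdot\tfrac{2}{3} + \tfrac{1}{3}\cdot 1 = \tfrac{10}{9}$. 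Scanning the vertices of the reduced two-variable LP confirms this is the minimum.

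For the lower direction, it suffices to exhibit a single polymatroid attaining the bound. The LP saturates at the normal polymatroid $\bm h^\ast = \tfrac{2}{3}\, \bm h^{XYZ} + \tfrac{1}{9}\, \bm h^{X} + \tfrac{1}{3}\, \bm h^{Z}$ (coefficients scaled by $\log M$), which is a positive sum of step functions; substituting into the three binding constraints yields equalities $\tfrac{1}{3} = \tfrac{1}{3}$, $\tfrac{2}{3} = \tfrac{2}{3}$, $1 = 1$, and $h^\ast(XYZ) = (\tfrac{2}{3} + \tfrac{1}{9} + \tfrac{1}{3})\log M = \tfrac{10}{9}\log M$. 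Invoking Lemma~\ref{lemma:normal:h:normal:d} additionally turns $\bm h^\ast$ into an explicit worst-case database $R^D := \Pi_{XY}(T)$, $S^D := \Pi_{YZ}(T)$ with $T := T^{XYZ}_{M^{2/3}} \otimes T^{X}_{M^{1/9}} \otimes T^{Z}_{M^{1/3}}$ of size $M^{10/9}$.

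The main obstacle, and the step I would carry out most carefully, is verifying that $\bm h^\ast$ satisfies \emph{every} one of the infinitely many $\ell_p$-statistics (not just the three binding ones), and similarly that the realized database $R^D, S^D$ satisfies every statistic. Since $T$ is totally uniform, the database checks reduce to the scalar inequalities $M^{2/3 + p/9} \leq M^{\max(p/3,\,1)}$ for $R$ and $M^{2/3 + p/3} \leq M^{\max(2p/3,\,1)}$ for $S$, each dispatched by a case split on $p$. Tightness occurs at $p=3$ for $R$ and $p \in \{1,2\}$ for $S$; notably, the $S$-inequality would \emph{fail} at the non-integer value $p = 3/2$, and the absence of this norm from $\Sigma$ is the precise mechanism by which the polymatroid bound exceeds the DSB value of $M$, producing the gap $M^{10/9}/M = M^{1/9}$.
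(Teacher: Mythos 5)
Your proposal is correct, and it reaches exactly the same two certificates as the paper -- the same conic combination and the same worst-case instance -- but it derives them by a genuinely different route. The paper explicitly declines to set up the LP (``to prove the claim, we proceed differently''): for the upper half it invokes the ready-made inequality~\eqref{eq:general:bound:for:join:2} with $p=3,q=2$, whose underlying Shannon certificate $\tfrac{1}{3}(h(Y)+3h(X|Y))+\tfrac{1}{3}h(YZ)+\tfrac{1}{3}(h(Y)+2h(Z|Y))\geq h(XYZ)$ is precisely your weight vector $(1,\tfrac{2}{3},\tfrac{1}{3})$ on the $p=3$, $q=1$, $q=2$ statistics; for the lower half it writes down the flat-degree instance $R',S'$ (degrees $M^{1/9}$ over $M^{2/3}$ values of $Y$, resp.\ $M^{1/3}$ over $M^{2/3}$ values) directly and verifies domination of all norms by case analysis. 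You instead work through Theorem~\ref{th:u:eq:l}: you identify the binding constraints, solve the reduced LP, exhibit the optimal normal polymatroid $\bm h^\ast=\tfrac{2}{3}\bm h^{XYZ}+\tfrac{1}{9}\bm h^{X}+\tfrac{1}{3}\bm h^{Z}$, and only then materialize the database via Lemma~\ref{lemma:normal:h:normal:d} -- your $T=T^{XYZ}_{M^{2/3}}\otimes T^{X}_{M^{1/9}}\otimes T^{Z}_{M^{1/3}}$ has exactly the degree sequences of the paper's $R',S'$. What your route buys is an explanation of where the worst-case instance comes from (it is forced by complementary slackness and the normal-polymatroid structure of Sec.~\ref{sec:simple:inequalities}), plus matching primal/dual optimality certificates so no separate ``scan of other inequalities'' is needed; what the paper's route buys is self-containedness, since it never needs Theorems~\ref{th:u:eq:l} or~\ref{th:simple}. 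Your verification details check out: the reduction to binding norms $\{3\}$ for $R$ and $\{1,2\}$ for $S$ is sound (each discarded integer constraint is dominated by a binding one since $h(X|Y),h(Y)\geq 0$), the three saturations hold, and your observation that the constraint would fail at the absent norm $q=3/2$ is the same mechanism the paper attributes to the one-directional monotonicity of the degrees-to-norms map.
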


Normally, the polymatroid bound is computed as the optimal solution of
a linear program, as described in Sec.~\ref{sec:bounds}.  However, to
prove the claim, we proceed differently.  First, we describe an
inequality proving that the polymatroid bound is
$O(M^{\frac{10}{9}})$.  Second, we describe a database instance that
satisfies all the given $\ell_p$-statistics, for which the query
output has size $|Q|=\Omega(M^{\frac{10}{9}})$. These two steps prove
the claim.

We start by computing the $\ell_p$-norms for our instance:
\begin{align*}
  \lp{\degree_R(X|Y)}_p^p = &
                              \begin{cases}
                                O(M) & \mbox{when $p\leq 2$} \\
                                O\left(M^{\frac{p}{3}}\right) &
                                \mbox{when $p \geq 3$}
                              \end{cases} \\
  \lp{\degree_S(Z|Y)}_q^q = &
                              \begin{cases}
                                O(M) & \mbox{when $q= 1$} \\
                                O\left(M^{\frac{2q}{3}}\right) &
                                \mbox{when $q \geq 2$}
                              \end{cases} \\
  |Q| = DSB = & M^{\frac{1}{3}}\cdot M^{\frac{2}{3}} + M = O(M)
\end{align*}

For the first step, we use the
inequality~\eqref{eq:general:bound:for:join:2} specialized for $p=3,
q=2$, which we show here for convenience:\footnote{A direct proof
  follows from the  following Shannon inequality:
\begin{align*}
  \frac{1}{3}\left(h(Y)+3h(X|Y)\right) + \frac{1}{3}h(YZ)+ \frac{1}{3}\left(h(Y) + 2h(Z|Y)\right) \geq & h(XYZ)
\end{align*}
}
\begin{align}
  |Q| \leq & \lp{\degree_R(X|Y)}_3 \cdot |S|^{\frac{1}{3}} \cdot\lp{\degree_S(Z|Y)}_2^{\frac{2}{3}} \label{eq:join:upper:bound:via:solver}
\end{align}

Since $|S| = \lp{\degree_S(Z|X)}_1 = O(M)$, we obtain
\begin{align*}
  |Q| \leq & O\left(M^{\frac{1}{3}} \cdot M^{\frac{1}{3}} \cdot M^{\frac{2}{3} \cdot \frac{2}{3}}\right) = O\left(M^{\frac{10}{9}}\right)
\end{align*}
As a side note, we observe that the other upper
bound~\eqref{eq:general:bound:for:join} leads to strictly larger upper
bounds, for any choice of $p,q$.

For the second step we construct a new database instance $R',S'$ that
satisfies all the  $\ell_p$-statistics that we computed for $R, S$.
We describe them using their degrees:
\begin{align*}
  \degree_{R'}(X|Y) = & O\left(M^{\frac{1}{9}},\ldots, M^{\frac{1}{9}}\right) &&\mbox{$M^{\frac{2}{3}}$ values} \\
  \degree_{S'}(Z|Y) = & O\left(M^{\frac{1}{3}},\ldots, M^{\frac{1}{3}}\right) &&\mbox{$M^{\frac{2}{3}}$ values}
\end{align*}
Then the following hold:
\begin{align*}
  \lp{\degree_{R'}(X|Y)}_p^p = & O\left(M^{\frac{p}{9} + \frac{2}{3}}\right)\\
  \lp{\degree_{S'}(Z|Y)}_q^q = & O\left(M^{\frac{q}{3}+\frac{2}{3}}\right)\\
  |Q'| = & M^{\frac{1}{9}}\cdot M^{\frac{1}{3}}\cdot M^{\frac{2}{3}}=M^{\frac{10}{9}}
\end{align*}
We check that the $\ell_p$-norms of the degrees of $R', S'$ are no
larger than those of $R, S$:
\begin{align*}
  p \leq 2: && \lp{\degree_{R'}(X|Y)}_p^p = & O\left(M^{\frac{p}{9} + \frac{2}{3}}\right)\leq O(M) = \lp{\degree_R(X|Y)}_p^p\\
  p \geq 3: && \lp{\degree_{R'}(X|Y)}_p^p = & O\left(M^{\frac{p}{9} + \frac{2}{3}}\right)\leq O\left(M^{\frac{p}{3}}\right) = \lp{\degree_R(X|Y)}_p^p\\
  \\
  q = 1: &&   \lp{\degree_{S'}(Z|Y)}_1 = & O\left(M^{\frac{1}{3}+\frac{2}{3}}\right)\leq O(M) = \lp{\degree_S(Z|Y)}_1\\
  q \geq 2: &&   \lp{\degree_{S'}(Z|Y)}_q^q = & O\left(M^{\frac{q}{3}+\frac{2}{3}}\right)\leq O\left(M^{\frac{2q}{3}}\right) = \lp{\degree_S(Z|Y)}_q^q
\end{align*}
Similarly, $|R'.Y| = |S'.Y| = M^{\frac{2}{3}} \leq M$.  It follows
that the relations $R', S'$ satisfy all constraints on the
$\ell_p$-norms, including those on $|R'.Y|, |S'.Y|$ (assuming the
latter are available).  Yet the size of the output of the query on
$R', S'$ is $M^{\frac{10}{9}}$.

As explained earlier, the issue stems from the fact that the DSB bound
does not permit the instance $R', S'$, since its degree sequences are
not dominated by those of $R, S$.

\subsection{The Chain Query (Example~\ref{ex:path-query})}
\label{app:bound-path}

We prove that inequality~\eqref{eq:ex:path-query:shannon:inequality}
is a Shannon inequality, by writing it as a sum of the following
inequalities, each which can be verified immediately:
\begin{align*}
  (p-2)h(X_1X_2)+\sum_{i=2,n-2}(p-2)h(X_{i+1}|X_i)+(p-2)h(X_n|X_{n-1}) \geq &\\
  \geq (p-2)h(X_1\ldots X_n)& \\
  h(X_2)+h(X_1|X_2)+\sum_{i=2,n-2}h(X_{i+1}|X_i) +h(X_n|X_{n-1}) \geq &\\
  \geq h(X_1\ldots X_n)&\\
  h(X_1|X_2) + \sum_{i=2,n-2} h(X_i) +  h(X_{n-1})+h(X_n|X_{n-1}) \geq\\
  \geq  h(X_1\ldots X_n) &
\end{align*}

\subsection{The Cycle Query (Example~\ref{ex:cycle-query})}
\label{app:bound-cycle}

We prove here the output
bound~\eqref{eq:bound:1:bound:1:infty:bound:q}, then show that, for
every $p \geq 1$ there exists a database instance where this bound for
$q:=p$ is the theoretically optimal bound that can be derived using
all statistics on $\ell_1, \ell_2, \ldots, \ell_p, \ell_\infty$ norms.

To prove~\eqref{eq:bound:1:bound:1:infty:bound:q}, we show the
following Shannon inequality, where the arithmetic on the indices is
taken modulo $p+1$, i.e. $i+1$ means $(i+1) \mod (p+1)$ etc:
\begin{align}
  \sum_{i=0,p}\left(h(X_i) + qh(X_{i+1}|X_i)\right) \geq& (q+1)h(X_0\ldots X_p)\label{eq:cyclic:shannon:inequality}
\end{align}

To prove the inequality, we proceed as follows.  First, we observe
that, for each $i=0,p$, the following is a Shannon inequality:
{\small
  \begin{align*}
    h(X_i) + h(X_{i+1}|X_i) + \ldots + h(X_{i+q}|X_{i+q-1})\geq & h(X_i X_{i+1} \ldots X_{i+q})
  \end{align*}
}
As before, all indices are taken modulo $p+1$, for example $i+q$ means
$(i+q)\mod (p+1)$.  Each inequality above can be easily checked.
Next, we add up these $p+1$ inequalities, and make two observations.
First, the sum of their LHS is precisely the LHS
of~\eqref{eq:cyclic:shannon:inequality}.  Second, after adding up
their RHS, we use the following Shannon inequality
$\sum_{i=0,p} h(X_i \ldots X_{i+q}) \geq (q+1)h(X_0X_1\ldots X_p)$
(which holds because each variable $X_k$ occurs exactly $q+1$ times on
the left, hence this is a Shearer-type inequality).  Together, these
observations prove~\eqref{eq:cyclic:shannon:inequality}.

We compare now the upper
bound~\eqref{eq:bound:1:bound:1:infty:bound:q} to the AGM and PANDA
bounds.  To reduce the clutter we will assume that
$R_0 = R_1 = \cdots = R_p$.  Then the AGM bound and the PANDA bounds
are:
\begin{align}
   |Q| \leq & |R|^{\frac{p+1}{2}} & |Q| \leq & |R|\cdot \lp{\degree_R(Y|X)}_\infty^{p-1}\label{eq:cycle:agm:panda}
\end{align}
They follow from the following straightforward Shannon inequalities:
{\small
  \begin{align*}
    h(X_0X_1)+ h(X_1X_2) + \cdots + h(X_pX_0) \geq & 2h(X_0X_1X_2\ldots X_p)\\
    h(X_0X_1) + h(X_2|X_1) + \cdots + h(X_p|X_{p-1}) \geq & h(X_0X_1X_2\ldots X_p)\\
  \end{align*}
}

Finally, we describe a database instance for which
bound~\eqref{eq:bound:1:bound:1:infty:bound:q} for $q := p$ is the
best.  The instance consists of the $(\alpha,\beta)$-relation $R$ for
$\alpha=\beta = \frac{1}{p+1}$ (see Def.~\ref{def:alpha:beta}); to
simplify the notations here we will rename $M$ to $N$.  Thus, we have
$|R|=N$, $\lp{\degree_R(Y|X)}_q^q=N$ for $q \in [p]$,
$\lp{\degree_R(Y|X)}_\infty = N^{\frac{1}{p+1}}$, and the bounds
in~\eqref{eq:cycle:agm:panda}
and~\eqref{eq:bound:1:bound:1:infty:bound:q} become
$N^{\frac{p+1}{2}}$, $N^{\frac{2p}{p+1}}$, and $N^{\frac{p+1}{q+1}}$
respectively.  The best bound among them is the latter, when $q = p$,
which gives us $|Q| \leq \lp{\degree_R(Y|X)}_p^p = (1+o(N)) N$.  All
other bounds are asymptotically worse.  Thus, among these three
formulas,~\eqref{eq:bound:1:bound:1:infty:bound:q} is the best, namely
for $q := p$.  However, this does not yet prove that these formulas
provide the best bounds if we have access to the given statistics.

We show now that these bounds are tight.  In other words, we show that
there exists relation instances $R$ for which the bounds are tight, up
to constant factors.  We already know this for the $\set{1}$-bound
(the AGM bound), since the AGM bound is $N^{\rho^*}$, where $\rho^*$
is the optimal fractional edge covering number of the $(p+1)$-cycle,
which is $\rho^*=\frac{p+1}{2}$.

Consider now the $\set{1,\infty}$-bound, in other words we have only
the statistics for $\lp{\degree_R(Y|X)}_1$ (which is $|R|$) and
$\lp{\degree_R(Y|X)}_\infty$.  We prove that the PANDA bound
in~\eqref{eq:cycle:agm:panda} is indeed optimal.  In fact we prove a
more general claim: the $\set{1,\infty}$-bound of the cycle query is
$|Q| \leq N\cdot D^{p-1}$, whenever $|R| \leq N$,
$\lp{\degree_R{Y|X}}_\infty\leq D$, and $N, D$ are numbers satisfying
$D^2 \leq N$.  In our case we have $D = N^{\frac{1}{p+1}}$, and the
claim implies that the $\set{1,\infty}$-bound is
$N^{1+\frac{p-1}{p+1}}=N^{\frac{2p}{p+1}}$.  To prove this claim, we
will refer to the polymatroid upper bound, and polymatroid lower bound
in Def.~\ref{def:primal:dual:bounds:k}.  The Shannon inequality that
we proved in Example~\ref{ex:cycle-query} implies
$\text{Log-U-Bound}_{\Gamma_n}(Q) \leq \log N + (p-1) \log D$.  We
also have
$\text{Log-U-Bound}_{\Gamma_n}(Q)=$$\text{Log-U-Bound}_{N_n}(Q)$ (by
Theorem~\ref{th:simple}), where $N_n$ are the normal polymatroids, and\linebreak
$\text{Log-U-Bound}_{N_n}(Q)=$$\text{Log-L-Bound}_{N_n}(Q)$ by
Theorem~\ref{th:u:eq:l}.  We claim that there exists a normal
polymatroid that satisfies the $\set{1,\infty}$-statistics and where
$h(X_0\ldots X_p) = \log N + (p-1) \log D$: the claim implies
$\log N + (p-1) \log D\leq \text{Log-L-Bound}_{\Gamma_n}(Q)$, which
proves that the $\set{1,\infty}$-bound is $N \cdot D^{p-1}$.  To prove
the claim, consider the following polymatroid:
\begin{align*}
h(\emptyset) = & 0, & \forall \bm W\neq \emptyset, \  h(\bm W) \defeq & \log N + (|\bm W|-2) \log D
\end{align*}
Then $h$ satisfies the required statistics:
\begin{align*}
\forall i: \ \   h(X_iX_{i+1}) \leq & \log N & h(X_{i+1}|X_i) \leq & \log D
\end{align*}
and $h(X_0X_1\ldots X_p) = \log N + (p-1)\log D$.  It remains to
observe that $h$ is a normal polymatroid, which follows by writing it
as
$\bm h = (\log N-2\log D)\cdot h^{\bm X} + \log D \cdot \sum_{i=0,p}
h^{X_i}$.

Finally, we prove that, if we have available all statistics\linebreak
$\lp{\degree_R(Y|X)}_q$ for $q =1, 2, \ldots, p, \infty$, then the
best query upper bound is~\eqref{eq:bound:1:bound:1:infty:bound:q}.
Fix a number $q \in [p]$, and let $N, L, D$ be three positive numbers
satisfying $L \leq N$ and $L \leq D^{q+1}$.  Then we claim that the
$\set{1,2,\ldots,q,\infty}$-bound of the cyclic query in
Example~\ref{ex:cycle-query}, when the input relation satisfies the
statistics $|R| \leq N$, $\lp{\degree_R(Y|X)}_r^r \leq L$, for all
$r\leq q$, and $\lp{\degree_R(Y|X)}_\infty \leq D$, is
$|Q| \leq L^{\frac{(p+1)q}{q+1}}$.  The claim applies to our database
instance $(\alpha,\beta)$ for $\alpha=\beta=\frac{1}{p+1}$, because we
have $L = (1+o(N))N$ and $D = N^{\frac{1}{p+1}}$, and implies that the
$\set{1,2,\ldots,q}$-bound is $L^{\frac{(p+1)q}{q+1}}$.  To prove the
claim, we use the same reasoning as above: it suffices to describe a
polymatroid satisfying the statistics
  \begin{align*}
    h(X_iX_{i+1}) \leq & \log N \\
\forall r=2,q:\ \     h(X_iX_{i+1}) + (r-1)h(X_{i+1}|X_i) \leq &  \log L \\
    h(X_{i+1}|X_i) \leq & \log D
  \end{align*}
  The desired polymatroid is the following modular function:\linebreak
  $h(\bm W) \defeq \frac{|\bm W|\cdot \log L}{q+1}$.  In other words,
  $\bm h = \frac{1}{q+1}\sum_{i=0,p}\bm h^{X_i}$.  Then, the first
  inequality above is $\frac{2\log L}{q+1} \leq \log N$, and it holds
  because $L \leq N$.  The second inequality is
  $(r+1) \frac{\log L}{q+1} \leq \log L$, which holds because
  $r \leq q$.  And the third inequality is
  $\frac{\log L}{q+1} \leq \log D$, which holds by the assumption
  $L \leq D^{q+1}$.

\subsection{Loomis-Whitney Query}

All examples so far used only binary relations.  We illustrate here
some examples with relations of higher arity.  More precisely, we
derive general upper bounds for the class of queries, called
Loomis-Whitney, that have relational atoms with more than two join
variables. A Loomis-Whitney query has $n$ variables and $n$ relational
atoms, such that there is one atom for each set of $n-1$
variables. The triangle query is the Loomis-Whitney query with $n=3$.

The Loomis-Whitney query with $n=4$ is:
\begin{align*}
    Q(X,Y,Z,W) = A(X,Y,Z)\wedge B(Y,Z,W)\wedge C(Z,W,X)\wedge D(W,X,Y)
\end{align*}

One bound that can be obtained with our framework is the following:
\begin{align*}
    |Q|^4 \leq \lp{\degree_A(YZ|X)}_2^2 \cdot |B| \cdot \lp{\degree_C(WX|Z)}_2^2 \cdot |D|
\end{align*}
This bound follows from  the following information inequality:
\begin{align*}
  4 h(XYZW) \leq & (h(X)+2h(YZ|X)) + h(YZW)\\
  + & (h(Z) + 2h(WX|Z))+ h(WXY)
\end{align*}
%
%
The inequality holds because it is a sum of 4 Shannon inequalities:
\begin{align*}
  h(XYZW) \leq & h(X) + h(YZ|X) + h(W|YZ) \\
  h(XYZW) \leq & h(Z) + h(WX|Z) + h(Y|WX) \\
  h(XYZW) \leq & h(YZ|X) + h(WX) \\
  h(XYZW) \leq & h(WX|Z) + h(YZ)
\end{align*}
%
%
%


\section{Proofs from Sec.~\ref{sec:bounds}}

\subsection{Proof of Theorem~\ref{th:u:eq:l}}

\label{app:th:u:eq:l}

We restate Theorem~\ref{th:u:eq:l} here in an extended form

\begin{thm} \label{th:u:eq:l:app} (1) If $K$ is any closed, convex
  cone,\footnote{We refer to~\cite{boyd_vandenberghe_2004} for the
    definitions.} and $N_n \subseteq K \subseteq \Gamma_n$ then
  $\text{Log-U-Bound}_K=\text{Log-L-Bound}_K$.

  (2) If $K$ is polyhedral (meaning: it has the form
  $K = \setof{\bm h}{\bm M \cdot \bm h \geq 0}$ for some matrix
  $\bm M$) then $\text{Log-U-Bound}_K=\text{Log-L-Bound}_K$, and the
  two bounds can be described by a pair of primal/dual linear
  optimization programs.  In particular, $\inf/\sup$
  in~\eqref{eq:u:bound}-\eqref{eq:l:bound} can be replaced by
  $\min/\max$, meaning that there exists optimal solutions $\bm w^*$
  and $\bm h^*$ that achieve the upper and lower bounds, and these can
  be computed in time exponential in the number of variables of $Q$.
\end{thm}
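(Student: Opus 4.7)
The plan is to recognize both quantities as a primal/dual pair of conic linear programs and apply conic duality; the polyhedral case (Part 2) then reduces to ordinary LP duality. First I would fix notation. Let $\bm c \in \R^{2^{[n]}}$ be the $0$--$1$ vector with $\bm c^T \bm h = h(\bm X)$, and let $A$ be the matrix whose $i$-th row $\bm a_i^T$ satisfies $\bm a_i^T \bm h = \tfrac{1}{p_i} h(\bm U_i) + h(\bm V_i|\bm U_i) = h(\tau_i)$ (linear in $\bm h$, since $h(\bm V_i|\bm U_i) = h(\bm U_i\bm V_i) - h(\bm U_i)$). Writing $K^{\circ} \defeq \setof{\bm d}{\bm d^T \bm g \geq 0 \text{ for all } \bm g \in K}$ for the dual cone, the lower-bound program is the primal $\sup\setof{\bm c^T \bm h}{\bm h \in K,\ A\bm h \leq \bm b}$, and the upper-bound program is the dual $\inf\setof{\bm b^T \bm w}{\bm w \geq 0,\ A^T \bm w - \bm c \in K^{\circ}}$, because the $\Sigma$-inequality $\sum_i w_i h(\tau_i) \geq h(\bm X)$ being valid on $K$ is exactly the statement $A^T \bm w - \bm c \in K^{\circ}$.

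Weak duality $\text{Log-L-Bound}_K \leq \text{Log-U-Bound}_K$ then drops out in one line: for any feasible pair $(\bm h,\bm w)$ we have $\bm c^T \bm h \leq (A^T\bm w)^T \bm h = \bm w^T(A\bm h) \leq \bm w^T \bm b$, and taking $\sup$ on the left and $\inf$ on the right gives the inequality.

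For Part 2, where $K = \setof{\bm h}{\bm M \bm h \geq 0}$ is polyhedral, Farkas' lemma gives $K^{\circ} = \setof{\bm M^T \bm y}{\bm y \geq 0}$, so the dual rewrites as the explicit LP $\min\setof{\bm b^T\bm w}{\bm w \geq 0,\ \bm y \geq 0,\ A^T\bm w - \bm M^T \bm y = \bm c}$. Both primal and dual are finite LPs with $O(2^n)$ variables and polynomially many additional rows/columns coming from the Shannon description of $K$ and from $\Sigma$; standard strong LP duality delivers both the equality and the attainment of $\sup/\inf$ as $\max/\min$, and each LP can be solved in time exponential in $n$ by any polynomial-time LP solver applied to the explicit encoding.

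For Part 1, where $K$ is only closed and convex, I would invoke strong conic duality. The main obstacle, and the only one I expect to require real care, is verifying a Slater-type regularity condition that rules out a duality gap: this is precisely where the hypothesis $N_n \subseteq K$ enters, since $N_n$ contains every positive scaling of every step function $\bm h^{\bm V}$, so by perturbing the optimum over $\Gamma_n$ along a direction inside $N_n$ one obtains a strictly feasible primal (or, symmetrically, dual) point as long as the problem is non-trivial. Combined with closedness of $K$ (hence of $K^{\circ}$), conic strong duality then yields $\text{Log-L-Bound}_K = \text{Log-U-Bound}_K$; in this generality the extrema need not be attained, so $\sup/\inf$ cannot in general be replaced by $\max/\min$, which is consistent with the statement of Part 2 restricting attainment to the polyhedral case.
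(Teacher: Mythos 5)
Your proposal follows essentially the same route as the paper: both bounds are cast as a primal/dual pair of conic programs, weak duality is the one-line chain $\bm c^T\bm h \leq \bm w^T(A\bm h)\leq \bm w^T\bm b$, the polyhedral case is ordinary LP duality (the paper phrases the identification of the dual feasible region via an auxiliary primal/dual LP pair rather than quoting Farkas for $K^{\circ}$, but the content is identical), and the general case is strong conic duality with Slater's condition supplied by $N_n\subseteq K$.

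One step in your Part 1 would fail as literally stated. Every cone $K$ with $K\subseteq\Gamma_n$ satisfies $h(\emptyset)=0$, so $K$ lies in a proper hyperplane of $\R^{2^{[n]}}$ and has \emph{empty interior}; consequently no strictly feasible point ``in the interior of $K$'' exists, and the standard Slater condition for conic programs cannot be verified without a reformulation. The paper handles this by projecting out the $\emptyset$-coordinate, i.e.\ working with $K_0 \defeq \Pi_{L_0}(K\cap F)$ where $L_0 = 2^{\bm X}\setminus\{\emptyset\}$, and only then exhibiting a Slater point: the $2^n-1$ nonzero step functions are linearly independent, so $\bm h = \sum_{\bm V}\varepsilon_{\bm V}\bm h^{\bm V}$ with all $\varepsilon_{\bm V}>0$ small lies in the interior of $K_0$ (here is where $N_n\subseteq K$ is used) and satisfies $h(\tau)<b_\tau$ strictly, provided all $b_\tau>0$ (the paper explicitly assumes this and defers the degenerate case). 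Your phrase ``perturbing the optimum over $\Gamma_n$ along a direction inside $N_n$'' gestures at the right idea but does not produce an interior point of $K$ itself; you need the dimension reduction plus the explicit full-dimensional combination of step functions. With that repair your argument coincides with the paper's.
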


The {\em weak duality property} states the following:
\begin{align}
  \text{Log-L-Bound}_K(\Sigma,\bm b)\leq & \text{Log-U-Bound}_K(\Sigma,\bm b) \label{eq:l:bound:leq:u:bound}
\end{align}
Weak duality is easy to check: If $\bm h \in K$ satisfies
$\bm h \models (\Sigma, \bm b)$, and
$\bm w = (w_\tau)_{\tau \in \Sigma}$ is such that
$K\models \mbox{Eq.~\eqref{eq:ii:lp:revisited}}$, then
$h(\bm X) \leq \sum_{\tau \in \Sigma} w_\tau h(\tau) \leq \sum_{\tau
  \in \Sigma} w_\tau b_\tau$, and weak duality follows from the fact
that \linebreak$\text{Log-L-Bound}_K = \sup_{\bm h}(\cdots)$ and
$\text{Log-U-Bound}_K=\inf_{\bm w}(\cdots)$.

We prove next that {\em strong duality} holds as well:

\begin{proof} We start by proving Theorem~\ref{th:u:eq:l:app} item (2)
  Let $|\Sigma|=s$ and let $\bm A$ be the $s \times 2^n$ matrix that
  maps $\bm h$ to the vector
  $\bm A \cdot \bm h = (h(\tau))_{\tau \in \Sigma}\in \R^s$.  Let
  $\bm c \in \R^{2^n}$ be the vector $c_{\bm X} = 1$, $c_{\bm U}=0$
  for $\bm U \neq \bm X$.  The two bounds are the optimal solutions to
  the following pair of primal/dual linear programs:
  \begin{align*}
    &
      \begin{array}{ll|ll}
        \multicolumn{2}{l|}{\text{Log-L-Bound}_K}& \multicolumn{2}{|l}{\text{Log-U-Bound}_K} \\ \hline
        \multicolumn{2}{l|}{\mbox{Maximize } \bm c^T \cdot \bm h} & \multicolumn{2}{|l}{\mbox{Minimize }  \bm w^T\cdot\bm b}\\
        \mbox{where} & \bm A \cdot \bm h \leq \bm b &  \mbox{where} & \bm w^T \cdot \bm A - \bm c^T \geq \bm u^T \cdot\bm M\\
                              & -\bm M \cdot \bm h \leq 0 & &
      \end{array}
  \end{align*}
  where the primal variables are $\bm h \geq 0$, and the dual
  variables are $\bm w, \bm u \geq 0$; the reader may check that the
  two programs above form indeed a primal/dual pair.
  $\text{Log-L-Bound}_K$ is by definition the optimal value of the
  program above.  We prove that $\text{Log-U-Bound}_K$ is the value of
  the dual.  First, observe that the
  $\Sigma$-inequality~\eqref{eq:ii:lp:revisited} is equivalent to
  $(\bm w^T \cdot \bm A - \bm c^T)\cdot h \geq 0$.  We claim that this
  inequality holds $\forall \bm h \in K$ iff there exists $\bm u$
  s.t. $(\bm w,\bm u)$ is a feasible solution to the dual.  For that
  consider the following primal/dual programs with variables
  $\bm h\geq 0$ and $\bm u\geq 0$ respectively:
  \begin{align*}
    &
      \begin{array}{ll|ll}
        \multicolumn{2}{l|}{\mbox{Minimize } (\bm w^T \cdot\bm A - \bm c^T) \cdot \bm h} & \multicolumn{2}{|l}{\mbox{Maximize }  0}\\
        \mbox{where} & \bm M  \cdot \bm h \geq 0 &  \mbox{where} & \bm u^T \cdot \bm M\leq\bm w^T \bm A - \bm c^T 
      \end{array}
  \end{align*}
  The primal (left) has optimal value 0 iff the inequality
  $(\bm w^T \cdot A - \bm c^T) \cdot \bm h\geq 0$ holds forall
  $\bm h \in K$; otherwise its optimal is $-\infty$.  The dual (right)
  has optimal value 0 iff there exists a feasible solution $\bm u$;
  otherwise its optimal is $-\infty$.  Strong duality proves our
  claim.
\end{proof}

\begin{proof}
  We prove Theorem~\ref{th:u:eq:l:app} (1).  We will assume that all
  log-statistics are $b_\tau > 0$, and defer to the full paper the
  treatment of zero values of the log-statistics.

  We will use the following definition from~\cite[Example
  5.12]{boyd_vandenberghe_2004}:
  
  \begin{defn} \label{def:primal:dual:cone:program} Let $K$ be a
    proper cone (meaning: closed, convex, with a non-empty interior,
    and pointed i.e. $\bm x, - \bm x \in K$ implies $\bm x =0$).  A
    primal/dual cone program in standard form\footnote{We changed to
      the original formulation~\cite{boyd_vandenberghe_2004} by
      replacing $\bm c$ with $-\bm c$, replacing $\bm y$ with
      $-\bm y$.} is the following:
    \begin{align*}
      &
        \begin{array}{ll|ll}
          Primal && Dual& \\ \hline
          \mbox{Maximize} & \bm c^T \cdot \bm x & \mbox{Minimize} & \bm y^T\cdot b\\
          \mbox{where} & \bm A \cdot \bm x = \bm b &  \mbox{where} & (\bm y^T\cdot \bm A - \bm c^T)^T \in K^*\\
                 & \bm x \in K & & 
        \end{array}
    \end{align*}
  \end{defn}

  Denote by $P^*, D^*$ the optimal value of the primal and dual
  respectively.  {\em Weak duality} states that $P^* \leq D^*$, and is
  easy to prove.  When {\em Slater's condition} holds, which says that
  there exists $\bm x$ in the interior of $K$ such that
  $\bm A \bm x = \bm b$, then {\em strong duality} holds too:
  $P^* = D^*$.

  $\text{Log-L-Bound}_K(\Sigma,\bm b)$ and
  $\text{Log-U-Bound}_K(\Sigma,\bm b)$ can be expressed as a cone
  program, by letting $\bm A$ and $\bm c$ be the matrix and vector
  defined in the proof of Theorem~\ref{th:u:eq:l:app} (2) (thus,
  $\bm A \cdot \bm h = (h(\sigma))_{\sigma \in \Sigma}$ and
  $c_{\bm X} = 1$, $c_{\bm U}=0$ for $\bm U \neq \bm X$):
  \begin{align}
    & \begin{array}{ll|ll}
        \multicolumn{2}{l|}{\text{Log-L-Bound}_K}& \multicolumn{2}{|l}{\text{Log-U-Bound}_K} \\ \hline
        \multicolumn{2}{l|}{\mbox{Maximize } \bm c^T \cdot h} & \multicolumn{2}{|l}{\mbox{Minimize }  \bm w^T\cdot\bm b}\\
        \mbox{where} & \bm A \cdot \bm h + \bm \beta = \bm b &  \mbox{where} & (\bm w^T\cdot \bm A - \bm c^T)^T \in K^*\\
                                                 & (\bm h,\bm \beta) \in K\times \R_+^s & & \bm w\geq 0 
      \end{array} \label{eq:primal:dual:cone:h}
  \end{align}
  Here $\bm \beta$ are slack variables that convert an inequality
  $h(\sigma) \leq b_{\sigma}$ into an equality
  $h(\sigma)+ \beta_{\sigma} = b_{\sigma}$.  We leave it to the reader
  to check that these two programs are indeed primal/dual as in
  Def.~\ref{def:primal:dual:cone:program}.

  Every set $K$ s.t. $N_n \subseteq K \subseteq \Gamma_n$ has an empty
  interior, because $h(\emptyset)=$.  To circumvent that, we remove
  the $\emptyset$-dimension. Define
  $L_0 = 2^{\bm X}- \set{\emptyset}$, define the cone:
  \begin{align}
    K_0 \defeq & \Pi_{L_0}(K \cap F) \label{eq:the:cone:k}
  \end{align}
  Thus, $K_0$ removes the $\emptyset$-dimension.  The cone program
  above can be rewritten with minor changes to remove any reference to
  the dimension $\emptyset$.

  Thus, we represent both bounds, $\text{Log-L-Bound}_{K_0}$ and\linebreak
  $\text{Log-U-Bound}_{K_0}$, as the solutions to the primal/dual cone
  program~\eqref{eq:primal:dual:cone:h} over the cone $K_0$.

  It remains to check Slater's condition.  Consider the $2^n-1$ step
  functions $h^{\bm V}$ in Eq.~\eqref{eq:step:function}.  Once can
  check that they are linearly independent vectors in $\Rp^{2^n-1}$.
  Choose $2^n-1$ numbers $\varepsilon_{\bm V}>0$ small enough, such
  that, defining
  $\bm h \defeq \sum_{\bm V} \varepsilon_{\bm V} \bm h^{\bm V}$, it
  holds that $\bm h(\tau) < b_\tau$ for all $\tau \in \Sigma$: this is
  possible by choosing the coefficients $\varepsilon_{\bm V}$ small
  enough, since $b_\tau>0$.  Since $\bm h \in N_n \subseteq K$, it
  represents a feasible solution of the cone program, and it is in the
  interior of $K$, because changing the coefficients
  $\varepsilon_{\bm V}$ independently, by a small amount, continues to
  keep $\bm h$ in $N_n$ and, thus, in $K$.
\end{proof}

\subsection{Tightntess of the bounds}

\label{app:tight:not:tight}

Fix a set of log-statistics $(\Sigma, \bm b)$.  If $k\in \N$, then we
call a {\em $k$-amplification} the set of log-statistics
$(\Sigma, k\bm b)$.  Notice that these correspond to the set of
statistics $(\Sigma, \bm B^k)$.  We prove:

\begin{thm} \label{th:ae:tight} (1) For any query $Q$ and any set of
  log-statistics $(\Sigma, \bm b)$, if $Q$ guards $\Sigma$ then the
  following holds:
  \begin{align}
    \sup_k \frac{\sup_{\bm D: \bm D \models \bm B^k} \log|Q(\bm  D)|}{\text{Log-L-Bound}_{\bar \Gamma_n^*}(\Sigma,k\bm b)} = 1 \label{eq:asymptotic:tight}
  \end{align}

  (2) There exists an $\alpha$-acyclic query $Q$ with 4 variables and
  set of log-statistics $(\Sigma, \bm b)$ such that:

  \begin{align}
   \forall k\geq 1:\ \ \frac{\sup_{\bm D: \bm D \models \bm B^k} \log|Q(\bm  D)|}{\text{Log-L-Bound}_{\Gamma_n}(\Sigma,k\bm b)} \leq \frac{35}{36} \label{eq:asymptotic:polymatroid:not-tight}
  \end{align}

\end{thm}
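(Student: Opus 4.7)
My plan has two parts corresponding to the two claims in the theorem.

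For part (1), asymptotic tightness of the almost-entropic bound, the upper-bound direction (ratio $\leq 1$) is essentially free: by Theorem~\ref{th:main:bound} applied to $K = \bar\Gamma_n^*$, every database $\bm D \models \bm B^k$ satisfies $\log |Q(\bm D)| \leq \text{Log-U-Bound}_{\bar\Gamma_n^*}(\Sigma, k\bm b)$, and by Theorem~\ref{th:u:eq:l} this equals $\text{Log-L-Bound}_{\bar\Gamma_n^*}(\Sigma, k\bm b)$; by linearity of the LP in $\bm b$, amplification just scales the bound by $k$. For the reverse direction, the plan is to invoke the asymptotic equipartition property (AEP). Pick $\bm h^{*} \in \bar\Gamma_n^*$ achieving the sup, approximate it by entropic vectors $\bm h^{(m)}$ realized by probability distributions $P^{(m)}$ over $\bm X$-tuples, and form the $k$-fold product $(P^{(m)})^{\otimes k}$. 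Take its jointly typical set as a relation $T_k$ whose attributes are $\bm X$ (with enlarged domain of $k$-tuples). Standard AEP gives $|T_k| \approx 2^{k h^{(m)}(\bm X)}$, and conditional AEP gives that for most $\bm u \in \Pi_{\bm U}(T_k)$ the fiber $\{\bm v : (\bm u, \bm v) \in \Pi_{\bm U\bm V}(T_k)\}$ has size $\approx 2^{k h^{(m)}(\bm V|\bm U)}$. Therefore $\lp{\degree_{T_k}(\bm V|\bm U)}_p^p \approx 2^{k h^{(m)}(\bm U)} \cdot 2^{p k h^{(m)}(\bm V|\bm U)}$, matching the $(\Sigma, k\bm b)$ constraint up to a $2^{o(k)}$ factor when $\bm h^{(m)}$ is close to $\bm h^*$. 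Setting $R_j^D \defeq \Pi_{\bm Y_j}(T_k)$ gives a database that satisfies $\bm B^k$ (after an $o(k)$-slack absorption and possibly discarding a subpolynomial number of atypical tuples) with $|Q(\bm D)| \geq |T_k|$, and letting $m, k \to \infty$ drives the ratio to $1$.

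For part (2), the failure of polymatroid tightness, the plan is to exploit the strict containment $\bar\Gamma_n^* \subsetneq \Gamma_n$ for $n \geq 4$, which is witnessed by non-Shannon information inequalities such as the Zhang--Yeung inequality. I would choose an $\alpha$-acyclic query on $4$ variables that contains at least one relation of arity $\geq 3$, so that $\Sigma$ is allowed to include a non-simple conditional $(\bm V|\bm U)$ with $|\bm U| \geq 2$ (otherwise Theorem~\ref{th:simple} collapses $\Gamma_n$ and $\bar\Gamma_n^*$). The statistics $(\Sigma, \bm b)$ are then tuned so that the polymatroid LP maximizing $h(\bm X)$ under those constraints is attained at a polymatroid that violates Zhang--Yeung, while every almost-entropic vector satisfying the constraints has strictly smaller $h(\bm X)$. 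Combining with part (1) applied at scale $k$, we get $\sup_{\bm D \models \bm B^k} \log|Q(\bm D)| = (1 - o(1)) \cdot k \cdot \text{Log-L-Bound}_{\bar\Gamma_n^*}(\Sigma, \bm b)$, which is a constant factor smaller than $k \cdot \text{Log-L-Bound}_{\Gamma_n}(\Sigma, \bm b)$. Choosing the query and constraints so that this constant factor is exactly $35/36$ requires solving both LPs (the polymatroid one and the almost-entropic one, the latter via a separating non-Shannon inequality) and calibrating $\bm b$ so the ratio is pinned down.

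The main obstacle lies in two subtle points. In part (1), the delicate step is ensuring that the $\ell_{p_i}$-norm of the degree sequence, and not merely the cardinalities of projections, concentrates around the target value; atypical $\bm u$'s with abnormally large fibers could in principle inflate the $\ell_p$-norm. Handling this requires a simultaneous conditional typical-set argument across all $\sigma_i \in \Sigma$, together with a truncation step that removes tuples whose conditional fiber deviates from $2^{k h^{(m)}(\bm V_i|\bm U_i)}$ by more than a subexponential factor; the truncation must be shown to cost at most a $2^{o(k)}$ factor in $|T_k|$ so that it is absorbed into the ratio. In part (2), the challenge is identifying a specific $4$-variable acyclic query and a specific constraint profile whose polymatroid LP pins onto a Zhang--Yeung-violating polymatroid with exactly the $35/36$ gap to the almost-entropic optimum; this is a concrete numerical search guided by the extreme rays of $\Gamma_4 \setminus \bar\Gamma_4^*$.
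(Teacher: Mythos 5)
Your part (1) reaches the right destination by a genuinely different route. The paper does not use typical sets at all: after the same $\varepsilon$-approximation of the optimizer $\bm h^*\in\bar\Gamma_n^*$ by some $\bm h\in\Gamma_n^*$ and an amplification step (passing to $(k-1)\bm h=k(1-\delta)\bm h$ to create slack in the constraints), it invokes Chan and Yeung's group-characterization theorem to replace this vector by one realized by a relation $R=\setof{(aG_1,\ldots,aG_n)}{a\in G}$ built from cosets of subgroups. The payoff is that such a relation is \emph{totally uniform}: every $\bm U$-fiber has exactly the same size, so $h(\bm U)+p\,h(\bm V|\bm U)=\log\lp{\degree_R(\bm V|\bm U)}_p^p$ holds as an exact identity and the concentration issue you flag never arises. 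Your AEP route must confront that issue head-on; it is fixable with \emph{strong} (letter-)typicality, which gives a uniform bound $2^{k(h(\bm V|\bm U)+\delta)}$ on every conditional fiber and hence $\lp{\degree_{T_k}(\bm V|\bm U)}_p^p\leq 2^{k(h(\bm U)+p\,h(\bm V|\bm U))+(p+1)k\delta}$, after which the $2^{O(k\delta)}$ slack still has to be absorbed by pre-shrinking $\bm h$, exactly as in the paper's $(1-\delta)$ step. So your plan is workable but trades the paper's exactness for a delicate double limit; weak typicality alone would not suffice, and your ``discard a subpolynomial number of atypical tuples'' phrasing understates what must be controlled (the deviations are $2^{\Theta(k\delta)}$, tamed only by sending $\delta\to 0$ after $k\to\infty$).

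For part (2) you have correctly identified the paper's strategy: a $4$-variable $\alpha$-acyclic query with one high-arity atom carrying non-simple conditionals, statistics read off a Zhang--Yeung-violating polymatroid, and a rearrangement of the Zhang--Yeung inequality into a $\Sigma$-inequality certifying that every database satisfying the scaled statistics has $\log|Q(\bm D)|\leq\frac{35}{9}k$, while the polymatroid itself witnesses $\text{Log-L-Bound}_{\Gamma_n}(\Sigma,k\bm b)\geq 4k$, giving the ratio $\frac{35/9}{4}=\frac{35}{36}$. But the entire content of part (2) is the explicit construction, which you defer to ``a concrete numerical search.'' The paper exhibits it: $Q(A,B,X,Y)=R_1(A,B,X,Y)\wedge R_2(B,X)\wedge R_3(B,Y)\wedge R_4(X,Y)\wedge R_5(A,Y)\wedge R_6(A,X)$, eleven statistics with $b_1=4/5$, $b_2=b_3=b_{10}=2$, $b_4=b_5=b_{11}=3$, $b_6=\cdots=b_9=5/3$, all evaluated on a specific polymatroid with $h(ABXY)=4$, together with the verification that the Zhang--Yeung rearrangement assigns these $b_i$'s total weight $35/9$ against coefficient $9$ on $h(ABXY)$. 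Note also that you do not need part (1) here: the upper bound on $\sup_{\bm D}\log|Q(\bm D)|$ follows directly from Theorem~\ref{th:main:bound} applied to the valid entropic Zhang--Yeung $\Sigma$-inequality, so routing it through asymptotic tightness is an unnecessary detour. Until the query, the constraint values, and both optima are pinned down, part (2) remains unproven as stated.
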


The proof of the theorem makes essential use of the definition of the
lower bound~\eqref{eq:l:bound}: this is one reason why we introduced
it.  Item (1) states that the almost-entropic bound is tight in an
asymptotic sense.  The proof is based on Chan and Yeung's
Group-Characterization theorem~\cite{DBLP:journals/tit/ChanY02}, uses
similar ideas to those
in~\cite{DBLP:conf/icdt/GogaczT17,DBLP:conf/pods/Khamis0S17}. Item (2)
in Theorem~\ref{th:ae:tight} states that the polymatroid bound is not
tight in general: there exists a concrete query and concrete
statistics where the polymatroid bound is strictly larger than the
query cardinality.  Notice that
inequality~\eqref{eq:asymptotic:polymatroid:not-tight} is stated in
terms of logarithms: the gap between the actual query output size and
the upper polymatroid bound is an exponent, not a constant factor.

The fraction in~\eqref{eq:asymptotic:tight} is $\leq 1$, because
of weak duality~\eqref{eq:l:bound:leq:u:bound} and Theorem~\ref{th:main:bound}.  We
prove that it is $\geq 1$, following ideas from~\cite{DBLP:conf/icdt/GogaczT17,DBLP:conf/pods/Khamis0S17}

Define
$L \defeq \text{Log-L-Bound}_{\bar \Gamma_n^*}(\Sigma,k\bm b)$, and
let $\bm h^* \in \bar \Gamma_n^*$ be an optimal solution to the bound
above, i.e.  $\bm h^*\models (\Sigma, \bm b)$ and $h^*(\bm X) = L$.
We will assume that $L<\infty$, and in that case such a solution
exists, because $\bm h$ can be restricted to the compact set
$\bar \Gamma_n^* \cap \setof{\bm h}{\lp{\bm h}_\infty \leq L}$; when
$L=\infty$ then one can choose $\bm h^*$ s.t. $h^*(\bm X)$ arbitrarily
large and make minor adjustments to the argument in the rest of the
proof, which we omit.

Since $\bar \Gamma_n^*$ is the topological closure of $\Gamma_n^*$,
for all $\varepsilon > 0$ there exists $\bm h \in \Gamma_n^*$
s.t.$\bm h$ and $\bm h^*$ are $\varepsilon$-close, more precisely
there exists $\bm h \in \Gamma_n^*$ such that
$h(\bm X) \geq (1-\varepsilon) h^*(\bm X)$ and
$h(\tau) \leq (1+\varepsilon) h^*(\tau)$ for all $\tau \in \Sigma$.
Notice that $\bm h$ may slightly violate the constraints $\bm b$, more
precisely, the following hold:
\begin{align*}
  \bm h \models & (\Sigma, (1+\varepsilon)\bm b)& h(\bm X)\geq  &(1-\varepsilon)L
\end{align*}

Define $\bm h' \defeq (k-1)\bm h$, where $k\in \N$ is a large number
to be defined shortly.  We notice that $\bm h'$ is still an entropic
vector, because $\Gamma_n^*$ is closed under addition.  (It is not
closed under multiplication with non-integer constants.)  Write
$\bm h' = k (1-\frac{1}{k})\bm h = k (1-\delta)\bm h$ and observe that
$\bm h'$ is almost a $k$ amplification of $\bm h$.  We choose $k$ such
that $\varepsilon \leq \delta \leq 2\varepsilon$, and observe that,
for $\tau \in \Sigma$,
$\bm h'(\tau) = k (1-\delta)\bm h(\tau) \leq
k(1-\varepsilon)h(\tau)\leq k(1-\varepsilon^2)b_\tau$.  It follows
that the following hold:
\begin{align*}
  \bm h' \models & (\Sigma, (1-\varepsilon^2)k\bm b), & h'(\bm X) \geq & (1-2\varepsilon)^2kL
\end{align*}

At this point we would like to convert the probability space
associated to the entropic vector $\bm h'$ into a database.  However,
we cannot simply take its support and view it as a database, because
the probability distribution is non-uniform, hence $\log Q(\bm D)$
will not be equal to $h'(\bm X)$.  Instead, we use an elegant result
by Chan and Yeung~\cite{DBLP:journals/tit/ChanY02}.  (The same
argument was used in prior
work~\cite{DBLP:conf/icdt/GogaczT17,DBLP:conf/pods/Khamis0S17}, hence
only sketch the main idea here.)

Given a finite group $G$ and a subgroup $G_1 \subseteq G$, a {\em left
  coset} is a set of the form $a G_1$, for some $a \in G$.  By
Lagrange's theorem, the set of left cosets, denoted $G/G_1$, forms a
partition of $G$, and $|G/G_1| = |G|/|G_1|$.  Fix $n$ subgroups
$G_1, \ldots, G_n$, and consider the relational instance:
\begin{align}
R = & \setof{(aG_1,\ldots, aG_n)}{a \in G} \label{eq:group:realization}
\end{align}
whose set of attributes we identify, as usual, with
$\bm X=\set{X_1, \ldots, X_n}$.  Notice that
$|R| = |G|/|\bigcap_{i=1,n}G_i|$.  The entropic vector $\bm h$
associated to the relation $R$ is called a {\em group realizable
  entropic vector}, and the set of group realizable entropic vectors
is denoted by $\Upsilon_n \subseteq \Gamma_n^*$.  One can check that,
for any subset of variables $\bm U \subseteq \bm X$,
$h(\bm U) =\log |G|/|\bigcap_{X_i \in \bm U} G_i|$.  The following was
proven in~\cite{DBLP:journals/tit/ChanY02}:

\begin{thm} \label{th:chan:groups}
  For any $\bm h \in \Gamma_n^*$ there exists a sequence
  $\bm h^{(r)} \in \Upsilon_n$, such that
  $\lim_{r \rightarrow \infty} \frac{1}{r} \bm h^{(r)} = \bm h$.
\end{thm}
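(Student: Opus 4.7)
The plan is to realize $\bm h$ asymptotically via the method of types, by turning a quasi-uniform approximation to the $N$-fold product distribution into an explicit group construction. Since $\bm h \in \Gamma_n^*$, there is a joint distribution $P$ on $\Omega := \prod_i \mathcal{X}_i$ with $h(\bm U) = H(X_{\bm U})$ for every $\bm U \subseteq [n]$. By continuity of entropy, I first assume $P$ is rational, say $P(\omega) = m_\omega/M$ with $\sum_\omega m_\omega = M$, and handle an arbitrary $P$ at the end by approximation plus a diagonal argument.

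For each $N$ divisible by $M$, let $T_N \subseteq \Omega^N$ be the exact $P$-type class and, for each $\bm U$, let $T_N^{\bm U} := \pi_{\bm U}(T_N)$ be the induced $P_{\bm U}$-type class. Standard Stirling estimates give
\begin{align*}
\log |T_N| = N H(\bm X) + O(\log N), \qquad \log |T_N^{\bm U}| = N H(X_{\bm U}) + O(\log N),
\end{align*}
and, crucially, every fiber $\pi_{\bm U}^{-1}(\bar{\bm x}_{\bm U})$ has the same cardinality $|T_N|/|T_N^{\bm U}|$, because within a joint type class the number of completions of a fixed $\bm U$-projection of type $P_{\bm U}$ is determined by $P$ alone.

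I then set $G := \mathrm{Sym}(T_N)$ and, for each $i \in [n]$, $G_i := \{\sigma \in G : \pi_i \circ \sigma = \pi_i\}$. The intersection decomposes as $G_{\bm U} = \prod_{\bar{\bm x}_{\bm U} \in T_N^{\bm U}} \mathrm{Sym}(\pi_{\bm U}^{-1}(\bar{\bm x}_{\bm U}))$, so $|G|/|G_{\bm U}|$ is a multinomial coefficient with equal parts. A direct Stirling expansion collapses the leading terms and yields
\begin{align*}
\bm h^{(N)}(\bm U) \;:=\; \log\frac{|G|}{|G_{\bm U}|} \;=\; |T_N| \cdot \log |T_N^{\bm U}| \;+\; \mathrm{error} \;=\; N \cdot |T_N| \cdot H(X_{\bm U}) \;+\; o(N \cdot |T_N|).
\end{align*}
Setting $r_N := N \cdot |T_N|$ gives $\bm h^{(N)}/r_N \to \bm h$ as $N \to \infty$, and reindexing by $r_N$ produces the desired $\bm h^{(r)} \in \Upsilon_n$ with $\bm h^{(r)}/r \to \bm h$.

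The hard part will be controlling the Stirling error uniformly in $\bm U$ so it is dominated by the leading term $r_N H(X_{\bm U})$. In the generic case where $H(X_{\bm U}) > 0$ and $H(\bm X \setminus \bm U \mid X_{\bm U}) > 0$, the error is exponentially smaller than the main term by direct estimation of $|T_N^{\bm U}| \log(|T_N|/|T_N^{\bm U}|)$. The two degenerate cases are handled by inspection: if $|T_N^{\bm U}| = 1$ then $G_{\bm U} = G$ and $\bm h^{(N)}(\bm U) = 0 = r_N H(X_{\bm U})$; and if the fibers have size $1$ then $G_{\bm U}$ is trivial and $\bm h^{(N)}(\bm U) = \log |T_N|! = r_N H(\bm X)(1+o(1))$, which coincides with $r_N H(X_{\bm U})$ since $H(\bm X) = H(X_{\bm U})$ in that case. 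Finally, passing from rational to arbitrary $P \in \Gamma_n^*$ via continuity of entropy completes the argument.
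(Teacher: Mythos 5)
The paper does not actually prove this statement; it is imported verbatim from Chan and Yeung~\cite{DBLP:journals/tit/ChanY02}, so there is no in-paper proof to compare against. Your argument is a correct, self-contained reconstruction of essentially the original Chan--Yeung proof: reduce to rational distributions, realize the entropies via stabilizer subgroups of a symmetric group, and control the Stirling error. The key computations check out: the fibers of $\pi_{\bm U}$ restricted to the joint type class are indeed equinumerous (the number of completions of a fixed $\bm U$-projection of type $P_{\bm U}$ to joint type $P$ is a product of multinomial coefficients depending only on $P$), the intersection $G_{\bm U}=\prod_{\bar{\bm x}_{\bm U}}\mathrm{Sym}\bigl(\pi_{\bm U}^{-1}(\bar{\bm x}_{\bm U})\bigr)$ is right, and the multinomial Stirling error $O\bigl(|T_N^{\bm U}|\log|T_N|\bigr)$ is $o(N|T_N|)$ exactly when $H(\bm X\setminus\bm U\mid X_{\bm U})>0$, with the two degenerate cases ($|T_N^{\bm U}|=1$ and singleton fibers) correctly disposed of by hand. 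One remark: the detour through the type class $T_N\subseteq\Omega^N$ is heavier than needed. The classical construction takes $G=\mathrm{Sym}([M])$ acting on $M$ points partitioned into blocks of sizes $MP(\omega)$, with $G_i$ the subgroup preserving each block of the coordinate-$i$ partition; then $\log\bigl(|G|/|G_{\bm U}|\bigr)=MH(X_{\bm U})+O(\log M)$ directly from one multinomial coefficient, the error is uniformly $O(\log M)$ with no case analysis, and the normalizer is simply $M$. Your version buys nothing extra but remains valid; the only loose end is the reindexing from the subsequence $r_N=N|T_N|$ to all integers $r$, which is the standard (and intended) reading of the limit statement.
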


We complete the proof by approximating $\bm h'$ by some group
realizable entropic vector $\frac{1}{r}\bm h^{(r)}$.  Since $\bm h'$
satisfies all $k$-amplified constraints $(\Sigma, \bm b)$ with some
slack, we can choose $r$ large enough to ensure that
$\frac{1}{r}\bm h^{(r)}$ will still satisfy the $k$-amplified
constraints $(\Sigma,\bm b)$, and, similarly, that
$\frac{1}{r}h^{(r)}(\bm X) \geq (1-\varepsilon)h'(\bm X)$.  Thus, we
have:
\begin{align*}
  \bm h^{(r)} \models & (\Sigma, kr\bm b), & h^{(r)}(\bm X) \geq & (1-2\varepsilon)^3krL
\end{align*}
In other words, $\bm h^{(r)}$ satisfies the $kr$-amplified statistics,
and $h^{(r)}$ is arbitrarily close to $krL$.  Consider now the
relation~\eqref{eq:group:realization} that defines $\bm h^{(r)}$.  $R$
is totally uniform, in the sense that for any set of variables
$\bm U, \bm V$ and any two values $\bm u, \bm u'$ of $\bm U$,
$\degree_R(\bm V|\bm U=\bm u)=\degree_R(\bm V|\bm U=\bm u')$.  This implies:
\begin{align*}
  h(\bm U) = & \log|\Pi_{\bm U}(R)|, \\
  h(\bm V|\bm U) = &\log \lp{\degree_R(\bm V|\bm U)}_\infty,\\
  h(\bm U\bm V) + & (p-1) h(\bm V|\bm U) = \log\lp{\degree_R(\bm V|\bm U)}_p^p
\end{align*}
Therefore, we define the database $\bm D = (R_1^D, \ldots, R_m^D)$ by
setting $R_j^D \defeq \Pi_{\bm Y_j}(R)$.  We observe that
$\bm D \models \bm B^{kr}$ and $Q(\bm D) = R$, which implies
$\log |Q(\bm D)| = h^{(r)}(\bm X) \geq (1-2\varepsilon)^3krL$.  In
other words, $\log |Q(\bm D)|$ is arbitrarily close to the
$kr$-amplification of the bound $L$.  Since $\varepsilon>0$ was
arbitrary, it follows that the fraction in~\eqref{eq:asymptotic:tight}
is $\geq 1$, as required.

Next, we prove item (2) of Theorem~\ref{th:ae:tight}.

We first describe a non-Shannon inequality, which we later use to
derive a query and statistics for which the polymatroid bound is not
tight.

\begin{prop}\label{prop:nonShannon-lp}
    The following is a non-Shannon inequality:
\begin{align}
9h(ABXY) \leq \ & [h(ABXY) + 4h(B|AXY)] + [h(ABXY) + h(A|BXY)] + \nonumber\\ 
 & [h(ABXY) + h(XY|AB)] + h(BX) + h(BY) + \nonumber\\ 
 & \frac{1}{2}[h(XY) + 2h(Y|X) + h(XY) + 2h(X|Y)] + \nonumber\\ 
 &\frac{1}{2}[h(AY) + 2h(Y|A) + h(AY) + 2h(A|Y)] + \nonumber\\
 & [h(AX) + h(A|X)] + h(AX). \label{ineq:nonShannon-lp}
\end{align}
\end{prop}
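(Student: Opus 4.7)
The plan is to reduce the stated inequality to the celebrated Zhang--Yeung non-Shannon inequality by a purely algebraic simplification of the right-hand side. Since the lemma is explicitly flagged as non-Shannon, we expect some standard non-Shannon template to be hidden inside~\eqref{ineq:nonShannon-lp}, and Zhang--Yeung (with four variables) is the natural candidate.

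First, I would expand each bracketed group on the RHS of~\eqref{ineq:nonShannon-lp} using the chain rule $h(\bm V|\bm U) = h(\bm U\bm V) - h(\bm U)$. The three groups that begin with $h(ABXY)$ rewrite as
\begin{align*}
h(ABXY) + 4h(B|AXY) &= 5h(ABXY) - 4h(AXY),\\
h(ABXY) + h(A|BXY) &= 2h(ABXY) - h(BXY),\\
h(ABXY) + h(XY|AB) &= 2h(ABXY) - h(AB).
\end{align*}
The $\tfrac{1}{2}$-scaled $XY$-block collapses to $h(XY)+h(Y|X)+h(X|Y) = 3h(XY)-h(X)-h(Y)$, the $AY$-block analogously to $3h(AY)-h(A)-h(Y)$, and the tail $[h(AX)+h(A|X)]+h(AX)$ simplifies to $3h(AX)-h(X)$.

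Second, I would collect contributions. The coefficients of $h(ABXY)$ on the RHS sum to $5+2+2 = 9$, exactly cancelling the LHS. After cancellation, the claim reduces to
\begin{align*}
& 4h(AXY) + h(BXY) + h(AB) + h(A) + 2h(X) + 2h(Y) \\
&\qquad \leq 3h(AX) + 3h(AY) + 3h(XY) + h(BX) + h(BY).
\end{align*}
Third, I would recognize this as precisely the Zhang--Yeung inequality
\begin{align*}
2 I(C;D) \leq I(A;B) + I(A;CD) + 3 I(C;D|A) + I(C;D|B)
\end{align*}
under the renaming $C \mapsto X$, $D \mapsto Y$, after expanding each (conditional) mutual information via $I(U;V) = h(U)+h(V)-h(UV)$ and $I(U;V|W) = h(UW)+h(VW)-h(UVW)-h(W)$. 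Since Zhang--Yeung holds for all entropic vectors (but not for all polymatroids, which is exactly what makes~\eqref{ineq:nonShannon-lp} a non-Shannon inequality), the proof is complete.

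The main obstacle is not conceptual but bookkeeping: the RHS of~\eqref{ineq:nonShannon-lp} is assembled from $\ell_p$-style additive atoms of the form $\tfrac{1}{p}h(\bm U) + h(\bm V|\bm U)$ with carefully chosen coefficients, so one must reorganize the terms before the Zhang--Yeung template becomes visible. Once the reduction is carried out, no further entropy-theoretic tool is required.
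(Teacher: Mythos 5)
Your proof is correct and follows essentially the same route as the paper: both reduce~\eqref{ineq:nonShannon-lp} to the Zhang--Yeung non-Shannon inequality by expanding the conditional-entropy atoms and cancelling the $9h(ABXY)$ terms (the paper just runs the algebra in the forward direction, adding $9h(ABXY)$ to Zhang--Yeung). The variant of Zhang--Yeung you quote, $2I(X;Y)\leq I(A;B)+I(A;XY)+3I(X;Y|A)+I(X;Y|B)$, expands to exactly the same linear form in $h$ as the version $I(X;Y)\leq 2I(X;Y|A)+I(X;Y|B)+I(A;B)+I(A;Y|X)+I(A;X|Y)$ used in the paper, so the reduction goes through verbatim.
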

\begin{proof}
    We prove Inequality~\eqref{ineq:nonShannon-lp} by a series of transformations from Zhang and Yeung's non-Shannon inequality~\cite{ZhangY98}:
\begin{align}
    I(X;Y) \leq & 2I(X;Y |A) + I(X;Y|B) + I(A;B) + I(A;Y|X) + I(A;X | Y) \label{eq:Zhang-Yeung}
\end{align}

Recall that $I(X;Y | A)$ represents the mutual information of the variables $X$ and $Y$ given variable $A$ and is defined using the entropy function $h$ over these variables as follows:
\begin{align*}
    I(X;Y|A) = h(AX) + h(AY) - h(AXY) - h(A) 
\end{align*}

We expand Equation~\eqref{eq:Zhang-Yeung} to use the entropy function only:

\begin{align*}
    0 \leq \ 
    & -(h(X) + h(Y) - h(XY)) + 2 h(AX) + \\
    & 2 h(AY) - 2h(AXY) - 2h(A) + \\
    & h(BX) + h(BY) - h(BXY) - h(B) + h(A) + h(B) - h(AB) + \\
    & h(AX) + h(XY) - h(AXY) - h(X) +  \\
    & h(AY) + h(XY) -h(AXY) - h(Y) \\
    & \Leftrightarrow \\
 0 \leq \ & 3 h(XY) - 2 h(X) - 2h(Y) - 4h(AXY) - h(BXY) + \\ 
          & 3 h(AX) + 3h(AY) + h(BX) + h(BY) - h(AB) - h(A)   
\end{align*}

We add $9h(ABXY)$ to both sides of the inequality and use the following equalities:
\begin{align*}
    h(ABXY) - h(AXY) &= h(B|AXY) \\
    h(ABXY) - h(BXY) &= h(A|BXY) \\
    h(ABXY) - h(AB)  &= h(XY|AB) \\
    h(XY) - h(X)     &= h(Y|X) \\
    h(XY) - h(Y)     &= h(X|Y) \\
    h(AY) - h(A)     &= h(Y|A) \\
    h(AY) - h(Y)     &= h(A|Y) \\
    h(AX) - h(X)     &= h(A|X)
\end{align*}

The inequality becomes:

\begin{align*}
 9h(ABXY) \leq \ & [h(ABXY) + 4h(B|AXY)] + [h(ABXY) + h(A|BXY)] + \\ 
 & [h(ABXY) + h(XY|AB)] + h(BX) + h(BY) + \\
 & [h(XY) + h(Y|X) + h(X|Y)] + \\
 & [h(AY) + h(Y|A) + h(A|Y)] + \\
 & [h(AX) + h(A|X)] + h(AX) \\
 &\Leftrightarrow \\
9h(ABXY) \leq \ & [h(ABXY) + 4h(B|AXY)] + [h(ABXY) + h(A|BXY)] + \\ 
 & [h(ABXY) + h(XY|AB)] + h(BX) + h(BY) + \\ 
 & \frac{1}{2}[h(XY) + 2h(Y|X) + h(XY) + 2h(X|Y)] + \\ 
 &\frac{1}{2}[h(AY) + 2h(Y|A) + h(AY) + 2h(A|Y)] + \\
 & [h(AX) + h(A|X)] + h(AX).
\end{align*}
\end{proof}

\paragraph{Query and its upper bound derived from non-Shannon inequality.}
Consider the following join query, derived from Inequality~\eqref{ineq:nonShannon-lp}:
\begin{align*}
    Q(A,B,X,Y) = &R_1(A,B,X,Y) \wedge R_2(B,X) \wedge R_3(B,Y), \\
                 &R_4(X,Y) \wedge R_5(A,Y) \wedge R_6(A,X)
\end{align*}
The hypergraph of $Q$ has two triangles $R_2-R_3-R_4$ and $R_4-R_5-R_6$, both included in the hyperedge $R_1$. $Q$ is thus $\alpha$-acyclic.

In Inequality~\eqref{ineq:nonShannon-lp}, some terms are grouped to show their correspondence to the $\ell_p$-norms on degree vectors. 

We also use the equality $h(\bm V\bm U) + (p-1)h(\bm V|\bm U) = ph(\bm V\bm U) - (p-1)h(\bm U) = h(\bm U) + p(h(\bm V\bm U) - h(\bm U)) = h(\bm U) + ph(\bm V|\bm U)$, which gives a different reading of Inequality~\eqref{eq:main-inequality}.

Using $h(ABXY) = \log |Q|$ and Inequality~\eqref{eq:bound:lp} with $w_i=1$, we obtain the following entropic bound on the query output size:
\begin{align*}
    |Q(ABXY)|^9 \leq \ & ||\text{deg}_{R_1}(B|AXY)||_5^5 \cdot ||\text{deg}_{R_1}(A|BXY)||_2^2 \cdot \\
    & ||\text{deg}_{R_1}(XY|AB)||_2^2 \cdot |R_2| \cdot |R_3| \cdot \\
    & ||\text{deg}_{R_4}(Y|X)||_3^{3/2} \cdot ||\text{deg}_{R_4}(X|Y)||_3^{3/2} \cdot \\
    & ||\text{deg}_{R_5}(Y|A)||_3^{3/2} \cdot ||\text{deg}_{R_5}(A|Y)||_3^{3/2} \cdot \\
    & ||\text{deg}_{R_6}(A|X)||_2^2 \cdot |R_6|
\end{align*}

\paragraph{Statistics.}
Consider the following log-statistics:
\begin{align*}
    \Sigma &= \{(
    B|AXY),(A|BXY),(XY|AB),(B,X),(B,Y),(Y|X),(X|Y),\\
    &\hspace*{1.5em}(Y|A),(A|Y),(A|X),(A,X)\}\\
    \bm b &= \{b_1= 4/5, b_2 = b_3 = b_{10} = 2, b_4 = b_5 = b_{11} = 3, \\ 
    &\hspace*{1.2em} b_6 = b_7 = b_8 = b_9 = 5/3\}, \text{ where}\\
    &\log\lp{\degree_{R_1}(B|AXY)}_5 \leq b_1, \hspace*{1em}\log\lp{\degree_{R_1}(A|BXY)}_2 \leq b_2,\\
    &\log\lp{\degree_{R_1}(XY|AB)}_2 \leq b_3, \\ 
    &\log\lp{\degree_{R_2}(BX)}_1 \leq b_4, \hspace*{1em}\log\lp{\degree_{R_3}(BY)}_1  \leq b_5, \\
    &\log\lp{\degree_{R_4}(Y|X)}_3 \leq b_6, \hspace*{1em}\log\lp{\degree_{R_4}(X|Y)}_3 \leq b_7, \\ 
    &\log\lp{\degree_{R_5}(Y|A)}_3 \leq b_8, \hspace*{1em}\log\lp{\degree_{R_5}(A|Y)}_3 \leq b_9, \\
    &\log\lp{\degree_{R_6}(A|X)}_2 \leq b_{10}, \hspace*{1em}\log\lp{\degree_{R_6}(AX)}_1 \leq b_{11}. 
\end{align*}

\begin{figure}
    \centering
    \colorlet{DarkGreen}{green!60!black}
\begin{tikzpicture}[scale=0.8]
	\tikzstyle{node}=[]
	\tikzstyle{label}=[DarkGreen]
	\tikzstyle{path}=[gray]
	\node[node] at (4, 0) (XYAB){$XYAB$};    \node[label, below right=-0.3 of XYAB] {$h=4$};

	\node[node] at (0, -2) (AX){$AX$};       \node[label, below right=-0.3 of AX] {$h=3$};
	\node[node] at (2, -2) (AY){$AY$};       \node[label, below right=-0.3 of AY] {$h=3$};
	\node[node] at (4, -2) (XY){$XY$};       \node[label, below right=-0.3 of XY] {$h=3$};
	\node[node] at (6, -2) (XB){$XB$};       \node[label, below right=-0.3 of XB] {$h=3$};
	\node[node] at (8, -2) (YB){$YB$};       \node[label, below right=-0.3 of YB] {$h=3$};

	\node[node] at (.5, -4) (A){$A$};        \node[label, below right=-0.3 of A] {$h=2$};
	\node[node] at (.5+7/3, -4) (X){$X$};    \node[label, below right=-0.3 of X] {$h=2$};
	\node[node] at (.5+14/3, -4) (Y){$Y$};   \node[label, below right=-0.3 of Y] {$h=2$};
	\node[node] at (7.5, -4) (B){$B$};       \node[label, below right=-0.3 of B] {$h=2$};

	\node[node] at (4, -6) (0){$\emptyset$}; \node[label, below right=-0.3 of 0] {$h=0$};

	\path[path] (0) edge (X);
	\path[path] (0) edge (A);
	\path[path] (0) edge (B);
	\path[path] (0) edge (Y);

	\path[path] (X) edge (AX);
	\path[path] (X) edge (XB);
	\path[path] (X) edge (XY);
	\path[path] (Y) edge (YB);
	\path[path] (Y) edge (AY);
	\path[path] (Y) edge (XY);
	\path[path] (A) edge (AX);
	\path[path] (A) edge (AY);
	\path[path] (B) edge (XB);
	\path[path] (B) edge (YB);

	\path[path] (AX) edge (XYAB);
	\path[path] (XB) edge (XYAB);
	\path[path] (XY) edge (XYAB);
	\path[path] (AY) edge (XYAB);
	\path[path] (YB) edge (XYAB);
\end{tikzpicture}
    \caption{A lattice of closed sets and the polymatroid from ~\cite{ZhangY98} defined on the lattice.}
    \label{fig:yeung-zhang-polymatroid}
\end{figure}

These log-statistics $(\Sigma,\bm b)$ are valid for the polymatroid $\bm h$ in Figure~\ref{fig:yeung-zhang-polymatroid}, i.e., $\bm h\models (\Sigma,\bm b)$. They are constructed as follows:

\begin{align*}
    \Sigma &= \{(
    B|AXY),(A|BXY),(XY|AB),(B,X),(B,Y),(Y|X),(X|Y),\\
    &\hspace*{1.5em}(Y|A),(A|Y),(A|X),(A,X)\}\\
    \bm b &= \{b_1= 4/5, b_2 = b_3 = b_{10} = 2, b_4 = b_5 = b_{11} = 3, \\ 
    &\hspace*{1.2em} b_6 = b_7 = b_8 = b_9 = 5/3\}, \text{ where}\\
\end{align*}
\begin{align*}
    &\log\lp{\degree_{R_1}(B|AXY)}_5 \leq b_1, \hspace*{1em}\log\lp{\degree_{R_1}(A|BXY)}_2 \leq b_2,\\
    &\log\lp{\degree_{R_1}(XY|AB)}_2 \leq b_3, \\ 
    &\log\lp{\degree_{R_2}(BX)}_1 \leq b_4, \hspace*{1em}\log\lp{\degree_{R_3}(BY)}_1  \leq b_5, \\
    &\log\lp{\degree_{R_4}(Y|X)}_3 \leq b_6, \hspace*{1em}\log\lp{\degree_{R_4}(X|Y)}_3 \leq b_7, \\ 
    &\log\lp{\degree_{R_5}(Y|A)}_3 \leq b_8, \hspace*{1em}\log\lp{\degree_{R_5}(A|Y)}_3 \leq b_9, \\
    &\log\lp{\degree_{R_6}(A|X)}_2 \leq b_{10}, \hspace*{1em}\log\lp{\degree_{R_6}(AX)}_1 \leq b_{11}. 
\end{align*}

$\Sigma$ is constructed immediately from Inequality~\eqref{ineq:nonShannon-lp}: There is one element of $\Sigma$ for each entropic term in the information inequality. This is valid for $\bm h$, as all terms only use a subset of the variables $A,B,X,Y$.

The construction of the log-statistics $\bm b$ is slightly more involved. Figure~\ref{fig:yeung-zhang-polymatroid} depicts a lattice of closed sets. It does not show elements whose entropy is the same as for an element that includes them: For instance, $AXY$ is not shown and its entropy is the same as of the element that contains $AXY$ and is shown in the lattice, i.e., $h(AXY) = h(XYAB) = 4$. Similarly, $h(BXY) = h(AB) = h(XYAB) = 4$.

The log-statistics can then be derived as follows:

\begin{align*}
    \log\lp{\degree_{R_1}(B|AXY)}_5^5 &= h(ABXY) + 4h(B|AXY) \\ 
    &= 5h(ABXY) - 4h(AXY) = 5\cdot 4 - 4\cdot 4 = 4 \\
    &\Rightarrow \log\lp{\degree_{R_1}(B|AXY)}_5 = 4/5 = b_1\\
    \log\lp{\degree_{R_1}(A|BXY)}_2^2 &= 2\cdot h(ABXY) - h(BXY) = 4\\
    &\Rightarrow \log\lp{\degree_{R_1}(A|BXY)}_2 = 2 = b_2 \\
    \log\lp{\degree_{R_1}(XY|AB)}_2^2 &= 2h(ABXY) - h(AB) = 4\\ 
    &\Rightarrow  \log\lp{\degree_{R_1}(XY|AB)}_2 = 2 = b_3 \\
    \log\lp{\degree_{R_2}(BX)}_1 &= h(BX) = 3 = b_4 \\
    \log\lp{\degree_{R_3}(BY)}_1 &= h(BY) = 3 = b_5 \\
    \log\lp{\degree_{R_4}(Y|X)}_3^3 &= h(XY) + 2h(Y|X) \\
    &= 3h(XY) - 2h(X) = 3\cdot 3 - 2\cdot 2 = 5\\
    &\Rightarrow  \log\lp{\degree_{R_4}(Y|X)}_3 = 5/3 = b_6 \\
    \log\lp{\degree_{R_4}(X|Y)}_3^3 &= h(XY) + 2h(X|Y) \\
    &= 3h(XY) - 2h(Y) = 3\cdot 3 - 2\cdot 2 = 5\\
    &\Rightarrow  \log\lp{\degree_{R_4}(X|Y)}_3 = 5/3 = b_7 \\
    \log\lp{\degree_{R_5}(Y|A)}_3^3 &= h(AY) + 2h(Y|A) \\
    &= 3h(AY) - 2h(A) = 3\cdot 3 - 2\cdot 2 = 5\\
    &\Rightarrow  \log\lp{\degree_{R_5}(Y|A)}_3 = 5/3 = b_8 \\
\end{align*}
\begin{align*}
    \log\lp{\degree_{R_5}(A|Y)}_3^3 &= h(AY) + 2h(A|Y) \\
    &= 3h(AY) - 2h(Y) = 3\cdot 3 - 2\cdot 2 = 5\\
    &\Rightarrow  \log\lp{\degree_{R_5}(A|Y)}_3 = 5/3 = b_9 \\
    \log\lp{\degree_{R_6}(A|X)}_2^2 &= h(AX) + h(A|X) \\
    &= 2h(AX) - h(X) = 2\cdot 3 - 2 = 4 \\
    &\Rightarrow  \log\lp{\degree_{R_6}(A|X)}_2 = 2 = b_{10} \\
    \log\lp{\degree_{R_6}(AX)}_1 &= h(AX) = 3 = b_{11}.
\end{align*}

\paragraph{Concluding the argument that the bound is not tight for the polymatroid.}

If we consider the scaled log-statistics $k\bm b$ for any scale factor $k>0$, Definition~\ref{eq:u:bound} yields:
\begin{align*}
    &\text{Log-U-Bound}_{\Gamma_n^*}(\Sigma,k\bm b) \leq k\cdot \\
    &\frac{5b_1+2(b_2+b_3+b_{10}) + b_4 + b_5 + b_{11} + \frac{3}{2}(b_6+b_7+b_8+b_9)}{9} 
     = \frac{35k}{9}
\end{align*}
Therefore, for any database $\bm D$ for which $\bm D\models (\Sigma,\bm B^k)$ it holds that $\log |Q(\bm D)| \leq \frac{35k}{9}$. 

On the other hand, consider a polymatroid $k\bm h$, where $\bm h$ is the polymatroid in Figure~\ref{fig:yeung-zhang-polymatroid}. Since $h(ABXY)=4$, it follows that $k\cdot h(ABXY)=4k$. Since $\bm h\models (\Sigma,\bm b)$, it follows that $k\bm h\models (\Sigma, k\bm b)$. Therefore, 
  \begin{align*}  
    \frac{\sup_{\bm D: \bm D \models \bm B^k} \log|Q(\bm  D)|}{\text{Log-L-Bound}_{\Gamma_n}(\Sigma,k\bm b)} \leq \frac{35k}{9}\cdot\frac{1}{4k} = \frac{35}{36}.
  \end{align*}

This proves Inequality~\eqref{eq:asymptotic:polymatroid:not-tight} in Item (2) of Theorem~\ref{th:ae:tight}.



\section{Proof from Sec.~\ref{sec:simple:inequalities}}

\label{app:proof:th:simple}

We start by proving Theorem~\ref{th:simple}.  The proof follows
immediately from a result in~\cite{DBLP:journals/tods/KhamisKNS21}:

\begin{thm} \label{th:simple:inequalities} Let $\Sigma$ be a simple
  set of LP-statistics.  Consider the
  $\Sigma$-inequality~\eqref{eq:ii:lp:revisited}.  Then the following
  are equivalent:
  \begin{itemize}
  \item Eq.~\eqref{eq:ii:lp:revisited} is valid for all $\bm h \in \Gamma_n$.
  \item Eq.~\eqref{eq:ii:lp:revisited} is valid for all $\bm h \in \bar \Gamma_n^*$.
  \item Eq.~\eqref{eq:ii:lp:revisited} is valid for all $\bm h \in N_n$.
  \end{itemize}
\end{thm}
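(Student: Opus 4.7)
The three conditions order validity over progressively smaller cones, since $N_n \subseteq \bar\Gamma_n^* \subseteq \Gamma_n$. The implications ``valid on $\Gamma_n$'' $\Rightarrow$ ``valid on $\bar\Gamma_n^*$'' $\Rightarrow$ ``valid on $N_n$'' are therefore immediate: a linear inequality valid on a set is valid on every subset. All the content of the theorem lies in closing the loop, by showing that validity on $N_n$ already implies validity on $\Gamma_n$.

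For this nontrivial direction, the plan is to invoke a normalization result from~\cite{DBLP:journals/tods/KhamisKNS21}: for any polymatroid $\bm h \in \Gamma_n$ one can construct a normal polymatroid $\bm h' \in N_n$ with $h'(\bm X) \geq h(\bm X)$ and $h'(\sigma) \leq h(\sigma)$ for every simple conditional $\sigma = (\bm V|\bm U)$ (i.e., every $\sigma$ with $|\bm U|\leq 1$). Since each $\tau_i = (\sigma_i,p_i) \in \Sigma$ is simple and its value
\[
h(\tau_i) = \frac{1}{p_i} h(\bm U_i) + h(\bm V_i|\bm U_i)
\]
is a nonnegative combination of two simple conditional terms (the ``marginal'' term $h(\bm U_i) = h(\bm U_i|\emptyset)$ is automatically simple because $|\bm U_i|\leq 1$), the domination on simple conditionals propagates to $h'(\tau_i) \leq h(\tau_i)$ for every $i$. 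If the $\Sigma$-inequality were to fail at some $\bm h \in \Gamma_n$, then the associated $\bm h' \in N_n$ would satisfy
\[
\sum_{i\in[s]} w_i h'(\tau_i) \;\leq\; \sum_{i\in[s]} w_i h(\tau_i) \;<\; h(\bm X) \;\leq\; h'(\bm X),
\]
contradicting validity on $N_n$. This closes the loop and yields all three equivalences.

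The main obstacle is the normalization lemma itself, which is the technical core of~\cite{DBLP:journals/tods/KhamisKNS21} and which I would cite rather than reprove. Morally, its argument decomposes an arbitrary polymatroid over a lattice of closed sets and shows that, when $|\bm U|\leq 1$, the conditional value $h(\bm V|\bm U) = h(\bm U\bm V) - h(\bm U)$ can be uniformly dominated by the corresponding value of a normal surrogate built from step functions over this lattice. The simplicity hypothesis is essential: with larger conditioning sets, the submodular cross-interactions between the conditioned variables block any uniform domination, and the equivalence between $N_n$-validity and $\Gamma_n$-validity can genuinely fail, as witnessed by the non-Shannon inequality used in Appendix~\ref{app:tight:not:tight}.
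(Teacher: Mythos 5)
Your proposal is correct and matches the paper's treatment: the paper proves this statement simply by citing it as a result of~\cite{DBLP:journals/tods/KhamisKNS21}, and your sketch (trivial implications from $N_n \subseteq \bar\Gamma_n^* \subseteq \Gamma_n$, plus the normalization lemma producing a normal polymatroid dominating $h(\bm X)$ while being dominated on all simple conditionals) is exactly how that cited result is established and applied. The only cosmetic nit is that the marginal term $h(\bm U_i)=h(\bm U_i|\emptyset)$ is simple because its conditioning set is empty, not because $|\bm U_i|\leq 1$, but this does not affect the argument.
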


In other words, if the inequality~\eqref{eq:ii:lp:revisited} is
simple, then it is valid for all polymatroids iff it is valid for all
(almost-) entropic vectors, iff it is valid for all normal
polymatroids.  This immediately implies~\ref{th:simple}.

Finally, we prove that inequality~\eqref{eq:ex:normal:database} is a
Shannon inequality.  This follows by observing that it is the sum of
the following  basic Shannon inequalities:

\begin{align*}
  2h(X) + 2h(Y|X) + 2h(Z|Y) \geq & 2h(XYZ) \\
  2h(Y) + 2h(Z|Y) + 2h(X|Z) \geq & 2h(XYZ) \\
  2h(Z) + 2h(X|Z) + 2h(Y|X) \geq & 2h(XYZ)
\end{align*}

\end{document}